\newcounter{hypo}
\newtheorem{theorem}{Theorem}[section]
\newtheorem{lemma}[theorem]{Lemma}
\newtheorem{proposition}[theorem]{Proposition}
\newtheorem{definition}[theorem]{Definition}
\theoremstyle{definition}
\numberwithin{equation}{section}
\title[Trace formula and Spectral shift  Function]
{ S\MakeLowercase{emiclassical} T\MakeLowercase{race} F\MakeLowercase{ormula} 
  \MakeLowercase{and} S\MakeLowercase{pectral} S\MakeLowercase{hift} F\MakeLowercase{unction}\\
\MakeLowercase{for} S\MakeLowercase{ystems}  \MakeLowercase{via a }S\MakeLowercase{tationary}  A\MakeLowercase{pproach}}
 \author[M. Assal, \; M. Dimassi \;and \; S. Fujii\'e]{Marouane Assal, Mouez Dimassi and Setsuro Fujii\'e}
\address{Marouane Assal, Mouez Dimassi, IMB (UMR-CNRS 5251), UNIVERSIT\'E DE BORDEAUX, 351 COURS DE LA LIB\'ERATION,
33405 TALENCE CEDEX, FRANCE}
\email{marouane.assal@math.u-bordeaux.fr}
\email{mdimassi@u-bordeaux.fr}
\address{Setsuro Fujii\'e, RITSUMEIKAN UNIVERSITY, 1-1-1 NOJI-HIGASHI, 525-8577 KUSATSU, JAPAN}
\email{fujiie@fc.ritsumei.ac.jp}
\subjclass[2010]{81Q10 (47A55, 81Q20, 47N50)}
\keywords{Spectral shift function, matrix Schr\"odinger operators, asymptotic expansions}
\begin{document}

\maketitle
 


\begin{abstract}
We establish a semiclassical trace formula in a general framework of microhyperbolic hermitian systems of $h$-pseudodifferential operators, and  apply  it to the study of the spectral shift function associated to a pair of selfadjoint Schr\"odinger operators with matrix-valued potentials.  We give  Weyl type semiclassical asymptotics with sharp remainder estimate for the spectral shift function, and, under the existence of a scalar escape function, a full asymptotic expansion in the strong sense for its derivative.
A time-independent approach enables us to treat certain potentials with energy-level crossings. 
\end{abstract}

\section{Introduction}

In this paper, we study the spectral shift function (SSF for short) for Schr\"odinger operators with matrix-valued potentials. Such operators appear in molecular physics in the Born-Oppenheimer approximation. The justification of this approximation and a classification of matrix Schr\"odinger operators can be found in \cite{Com, Hag, Jecko, Klein}.

More precisely, we are concerned with the SSF for the pair of operators $(P_1, P_0)$ with 
\begin{equation}\label{Studied operators}
P_0:= -h^2 \Delta \otimes I_N + V_{\infty}, \quad P_1:= -h^2 \Delta \otimes I_N + V(x),
\end{equation}
where $h \in (0,1]$ is a small positive parameter, $I_N$ is the identity $N\times N$ matrix, $V_{\infty}$ is an $N\times N$ constant hermitian matrix and $V(x)$ is a smooth  hermitian matrix-valued potential which tends rapidly enough to $V_{\infty}$ at infinity. 
The SSF {associated to $(P_1,P_0)$} denoted $s_h$ is defined as distribution (modulo a constant) by the Lifshits-Krein formula
\begin{equation}\label{LK}
\langle s_{h}',f\rangle = - {\rm tr}\big( f(P_1)-f(P_0) \big), \quad \forall f\in C_0^{\infty}(\mathbb R;\mathbb R).
\end{equation} 
The SSF is related with the eigenvalue counting function below the level $\inf \sigma (P_0)$ and with the scattering determinant above this level (Birman-Krein formula, {see \cite{Yafaev}}). {Here $\sigma(P_0)$ stands for the spectrum of $P_0$.}

The concept of the SSF was introduced in the middle of the previous century by I. M. Lifshits in his investigations in the solid state theory (see \cite{Lif, Lif1}) and then developed by M. Krein (see \cite{Krein3, Krein1, Krein2}) into a mathematical theory.
The work of Krein on the SSF has been described in details in the survey \cite{Birman2}. One can also find detailed account concerning mathematical and historical aspects of the SSF in \cite{Birman1}.

In the scalar case $N=1$, a lot of works have been devoted to the study of the  SSF
{in different asymptotic regimes} (see \cite{Robert1} and the references therein).
In particular, a Weyl-type asymptotics of the SSF with a sharp remainder estimate and a complete asymptotic expansion of the derivative of the SSF 
were studied in high energy regime (\cite{Robert2}) and in the semiclassical regime (\cite{Rob}, \cite{Robert}).

The proofs of these works reduce to the study of 
\begin{equation}
\label{tr}
{\rm tr}\left (f(P_1){\mathcal F}_h^{-1}\theta (\tau-P_1)-f(P_0){\mathcal F}_h^{-1}\theta (\tau-P_0)\right ),
\end{equation}
where $\theta$ is a smooth function of the time $t$  with compact support and ${\mathcal F}_h^{-1}$ is the  semiclassical Fourier inverse transform
defined by \eqref{semifourier}. The method in \cite{Rob} consists in writing \eqref{tr} as the semiclassical Fourier inverse transform of
$\theta(t)\,{\rm tr}\left ( f(P_1)e^{-itP_1/h}-f(P_0)e^{-itP_0/h}\right )$, and constructing (modulo ${\mathcal O}(h^\infty)$)  the Schwartz' kernel   of the evolution  operator 
$f(P_1)e^{-itP_1/h}$. 
This construction by means of Fourier integral  operators  is now standard and well known  for scalar-valued  operators $P_1$ (see \cite{Hormander2, iv} for problems concerning  the  asymptotic distribution of eigenvalues,  and \cite{Robert1, Robert2, Rob} for the SSF). For  matrix-valued  operators this explicit  construction  is  very complicated (or  impossible).  To avoid this problem, and to study the counting function of eigenvalues of $P_1$, 
 V. Ivrii \cite{iv}
observed that a rough construction by  using  the successive approximation method of $f(P_1)e^{-itP_1/h}$ for  $\vert t\vert <h^{1-\delta}$ (with  $0<\delta\leq 1$)  suffices 
to get a full asymptotic expansion in powers of $h$ of  ${\rm tr}\,(f(P_1){\mathcal F}_h^{-1}\theta (\tau-P_1))$. This beautiful observation is used by    the second author and J. Sj\"ostrand  \cite{dimassi}   to develop  a time-independent approach  to get asymptotics of ${\rm tr}\,(f(P_1){\mathcal F}_h^{-1}\theta (\tau-P_1))$ for matrix-valued operator $P_1$. The novelty in this approach consists in expressing \eqref{tr} in terms
of the resolvent instead of evolution operator, and studying the (almost) analyticity of its trace near the real axis. This method is used in \cite{Dimassi2} to study the SSF for  scalar non semi-bounded operators such as Stark Hamiltonian. The aim of this paper is to develop and apply  this stationary approach to the study of the SSF for  matrix-valued  operators.




In the first part of this work, we consider a general system of $h$-pseudodifferential operator $H^w=H^w(x,hD_x)$.
For a fixed energy $\tau_0$  such that 
$\tau_0-H(x,\xi)$ is uniformly microhyperbolic in some direction $T$ (see Definition \ref{de1}),
we show that the trace of the operator $\chi^wf(H^w){\mathcal F}_h^{-1}\theta (\tau-H^w)$ is negligible 
($={\mathcal O}(h^\infty)$)
provided that $\theta$ is supported in $h^{1-\delta}\le \vert t\vert\leq \kappa$ (for arbitrary positive $h$-independent  $\kappa$), see  Theorem \ref{th1}. Here  $\chi\in C^\infty_0(\mathbb R^{2n};\mathbb R)$  and $f$ is supported in a small neighborhood of $\tau_0$. Moreover,  under the existence of an escape function associate to $H(x,\xi)$ at $\tau_0$ (see \eqref{deff}),  we can take $\kappa=h^{-\nu}$ for arbitrary  $ \nu>0$ (see  Remark 3.1 and section 4.4). On the other hand, we give a complete 
asymptotic expansion in  powers of $h$  of ${\rm tr}(\chi^wf(H^w){\mathcal F}_h^{-1}\theta (\tau-H^w))$ provided  that $\theta$ is supported in a
small $h$-independent neighborhood of $0$  and $\tau_0-H(x,\xi)$ is microhyperbolic at every point $(x,\xi)\in {\rm supp} \chi$, see Theorem \ref{WASG}.
This is a consequence from the fact that the above trace depends, modulo ${\mathcal O}(h^\infty)$, 
only on the symbol $\tau_0-H(x,\xi)$ on the support of $\chi$ as long as the support of $\theta$ is small enough near 0
 (Theorem \ref{th2}), 
and the fact that a symbol $\tau_0-H$ microhyperbolic near a point can be extended
to a uniformly microhyperbolic symbol in the whole phase space (Theorem \ref{main result appendix}). 

To our best knowledge, there are only few works treating the semiclassical asymptotics  of the  SSF  for matrix valued  operators (see \cite{BR, Khoc} and the references  therein).  The  asymptotics of the SSF for the semi-classical  Dirac  operator has been  studied in \cite{BR}. In this case, the classical corresponding Hamiltonian 
 has uniformly distinct eigenvalues,  and then  the study of the SSF can be reduced to the scalar case by diagonalization.  The relation between the spectral shift function  and the resonances for  Dirac operator with analytic potential has been examined in \cite{Khoc}.
In the second part of this paper, we consider the SSF {associated to the pair of Schr\"odinger operators with matrix-valued potentials defined in \eqref{Studied operators}},  without any condition on the multiplicities of its eigenvalues. First, using Theorem \ref{WASG}, we show that \eqref{tr} has a full asymptotic expansion in $h$
when the support of $\theta$ is close enough to the origin (Theorem \ref{ASW2}). This result with a Tauberian argument give
the Weyl-type asymptotic formula for the SSF with a sharp remainder estimate (Theorem \ref{WRA}).
Finally we give a pointwise full asymptotic expansion of the derivative of the SSF {near} energies $\tau$ where there exists 
a scalar escape function {associated to the classical Hamiltonian $\xi^2 I_N + V(x)$} (Theorem \ref{main semiclassical}). This last theorem is a generalization to the matrix case
of the result of \cite{Rob} at non-trapping energies. 

The paper is organised as follows. In section \ref{Statement of the results}, we state our main results and we give an outline of the proofs. The proofs of these results will be given in Sections \ref{Proofs of the results STF} and \ref{Proofs of the results SSF} respectively. Finally, the appendix \ref{fonctions microhyperboliques} contains some technical lemmas related to the notion of microhyperbolicity used in our proofs.

\textbf{Notations :} For $\xi = (\xi_1,...,\xi_n)\in \mathbb R^n$, we use the usual notation $\langle \xi \rangle := (1+\xi^2)^{1/2}$, where $\xi^2:= \xi_1^2+...+\xi_n^2=\vert \xi\vert^2$. For $z\in \mathbb C$, we recall that $\bar{\partial}_z := \frac{1}{2}(\partial_{\Re z} + i \partial_{\Im z})$. The bracket $[a_j]_0^1$ stands for the difference $a_1-a_0$. The scalar products in $\mathbb R^n$ and $\mathbb C^N$ will be denoted $\langle \; , \; \rangle$ and $(\;,\;)$ respectively. We introduce the following standard asymptotic notations that we shall use through the paper. Given a function $f_{h}$ depending on a small parameter $h\in(0,1]$, the relation $f_{h}=\mathcal{O}(h^{\infty})$  (or $f_h\equiv 0$) means that $f_{h}=\mathcal{O}(h^k)$, for all $k\in \mathbb N$ and $h$ small enough. We write $f_{h}\sim \sum_{j\geq 0}\gamma_jh^j$ provided that for each $k\in \mathbb N$, $f_{h}-\sum_{j=0}^k \gamma_j h^j=\mathcal{O}(h^{k+1})$.

\section{Statement of the results}\label{Statement of the results}

 Let $\mathcal{H}_N$ be the space of hermitian $N\times N$ matrices endowed with the norm $\Vert \cdot \Vert_{N\times N}$, where for $A\in \mathcal{H}_N$, $\Vert A \Vert_{N\times N}:= \sup_{\{w\in \mathbb R^N;\; |w|<1 \}} | A w |$. 

Throughout this work we will use the notations of \cite{dimassi} for symbols and $h$-pseudodifferential
operators (see also \cite{iv}). In particular, $S^0(\mathbb R^{2n};\mathcal{H}_N)$ is the class of symbols 
$$
S^0(\mathbb R^{2n};\mathcal{H}_N):=\{H\in C^\infty({\mathbb R}^{2n}; \mathcal{H}_N); \, 
\Vert \partial_x^{\alpha} \partial_{\xi}^{\beta} H(x,\xi)\Vert_{N\times N} ={\mathcal O}_{\alpha,\beta} (1),\,\, \forall \alpha,\beta\}.
$$

We use the standard Weyl quantization of symbols. More precisely, if
$H \in S^0(\mathbb R^{2n};\mathcal{H}_N)$ then $H^w(x,hD_x)$ is the operator
 defined by  
$$
H^w(x,h D_x) u (x) = \frac{1}{(2\pi h)^n} \int\int_{\mathbb R^{2n}} e^{i(x-y)\cdot\xi/h} H\left( \frac{x+y}{2},\xi \right) u(y) dyd\xi, \quad u\in C^\infty_0(\mathbb R^n;\mathbb C^N).
$$
We will occasionally use the shorthand notations ${\rm Op}^w_h(H)=H^w=H^w(x,hD_x)$ when there is no ambiguity.

We recall the following notion of microhyperbolicity which will play an important role in this paper. 

\begin{definition}[Microhyperbolicity] \label{de1} \normalfont Let $H\in C^{\infty}(\mathbb R^{2n};\mathcal{H}_N)$. 
 We say that $H(x,\xi)$ is micro-hyperbolic at $(x_0,\xi_0)$  in the
direction $T\in {\mathbb R}^{2n}$, if there are constants $C_0,C_1,C_2>0$  such that
\begin{equation}\label{MHC}
\big(\langle T,\nabla_{x,\xi} H(x,\xi) \rangle w , w \big) \geq C_0  |w|^2 - C_1 |H(x,\xi)w|^2,
\end{equation} 
for all $(x,\xi)\in {\mathbb R}^{2n}$ with $\vert(x,\xi)-(x_0,\xi_0)\vert\leq\frac{1}{C_2}$ and all $w \in \mathbb C^N$. {Here $\nabla_{x,\xi}H(x,\xi)=\left(\partial_xH(x,\xi),\partial_{\xi}H(x,\xi)\right)$}. If for some constants $C_0,C_1>0$  the above
estimate holds for all $(x,\xi)\in {\mathbb R}^{2n}$,  we say that  $H(x,\xi)$ is uniformly
microhyperbolic on ${\mathbb R}^{2n}$  in the direction $T$.
In the case where $H(x,\xi)$ depends also on an additional  parameter,  we say that $H$ is uniformly microhyperbolic   in the direction $T$ if \eqref{MHC} is satisfied with $C_0,C_1>0$ independent of  this parameter.
\end{definition}

\subsection{{Trace formula for systems of $h$-pseudodifferential operators}} 

Let 
$$
\theta\in C_0^{\infty}(]-1,1[;\mathbb R),\quad \theta_{\varepsilon}(t):= \theta( {t}/{\varepsilon} ),
$$
where $\varepsilon>0$ is a positive constant possibly depending on $h$ and
\begin{equation}
\label{semifourier}
\mathcal{F}_{h}^{-1}\theta_{\varepsilon}(\tau)= \frac{1}{2\pi h}\int_{\mathbb R} e^{it\tau/h}\theta_{\varepsilon}(t)dt,
\end{equation}
the semiclassical Fourier inverse operator.


Let $A, H\in S^0(\mathbb R^{2n};\mathcal{H}_N),$  and $\chi\in C_0^{\infty}(\mathbb R^{2n};\mathbb R)$.   We assume that
$A^w(i+H^w)^{-k}$ is of trace class for some $k\in {\mathbb N}$. Writing $A^w f(H^w) = A^w(i+H^w)^{-k} (i+ H^w)^k f(H^w)$ and using the fact that $(i+ H^w)^k f(H^w)$ is bounded by the spectral theorem we deduce that $A^w f(H^w)$ is of trace class for all $f\in C_0^{\infty}(\mathbb R;\mathbb R)$. We recall that $\chi^w$ is of trace class (with norm trace $\mathcal{O}(h^{-n})$, see \cite[Theorem 9.4]{dimassi}).

Fix $\tau_0\in \mathbb R$. We denote by
${O}_{\tau_0}$ the set of open intervals centered at $\tau_0$, i.e.,
$$O_{\tau_0}=\{]\tau_0-\eta,\tau_0+\eta[;\;\; \eta>0\}.$$

\begin{theorem}\label{th1} 
Suppose that there exists $T\in \mathbb R^{2n}$ such that $\tau_0-H(x,\xi)$
 is uniformly   
micro-hyperbolic with respect to $(x,\xi)\in {\mathbb R}^{2n}$ in the direction $T$. 
If $0\notin {\rm supp}\,\theta$, then there exists $I\in O_{\tau_0}$ such that for all 
$f\in C^\infty_0(I;\mathbb R)$ and $\varepsilon\in [h^{1-\delta},\kappa[$ with  $\kappa>0$, $0<\delta\le 1$ independent of $h$,  we have, uniformly for $\tau\in \mathbb R$,
\begin{equation}\label{main estimate 1}
{\rm tr}\; \left(  A^wf\big(H^w) \mathcal{F}_{h}^{-1}\theta_{\varepsilon}\big(\tau-H^w) \right)= \mathcal{O}(h^{\infty}).
\end{equation}
\end{theorem}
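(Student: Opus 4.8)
The plan is to reduce \eqref{main estimate 1} to a stationary phase / non-stationary phase estimate on the resolvent of $H^w$, following the time-independent philosophy of \cite{dimassi}. First I would use an almost analytic extension: write $f(H^w)$ and, more importantly, $\mathcal{F}_h^{-1}\theta_\varepsilon(\tau - H^w)$ via the Helffer--Sjöstrand formula. Concretely, since $\theta_\varepsilon$ is compactly supported, $\mathcal{F}_h^{-1}\theta_\varepsilon(\tau) = \frac{1}{2\pi h}\int e^{it\tau/h}\theta_\varepsilon(t)\,dt$ defines a function of $\tau$ whose almost analytic extension $\Phi(z)$ satisfies $\bar\partial_z \Phi(z) = \mathcal{O}(|\Im z|^\infty)$ away from the region where $e^{it\Im z/h}$ blows up; the key point is that on the support of $\theta_\varepsilon$ one has $|t|\geq c\varepsilon \geq c h^{1-\delta}$, so $|\bar\partial_z\Phi(z)|$ is controlled by $e^{-|t||\Im z|/h}$ integrated against $\theta_\varepsilon$, which we can make $\mathcal{O}(h^\infty)$ once $|\Im z|$ is bounded below by a fixed (small, $h$-independent) constant. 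Then
\[
\mathcal{F}_h^{-1}\theta_\varepsilon(\tau - H^w) = -\frac{1}{\pi}\int_{\mathbb C}\bar\partial_z\Phi(z)\,(z - H^w)^{-1}\,L(dz),
\]
and the trace in \eqref{main estimate 1} becomes an integral over $z$ of $\mathrm{tr}\big(A^w f(H^w)(z-H^w)^{-1}\big)\bar\partial_z\Phi(z)$.

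Next I would localize in phase space. Since $f$ is supported near $\tau_0$, functional calculus (Beals-type characterization, as in \cite{dimassi}) shows $A^w f(H^w)$ is, modulo $\mathcal{O}(h^\infty)$ in trace norm, a pseudodifferential operator with symbol supported in $\{|H(x,\xi) - \tau_0| \leq \text{small}\}$. On that region uniform microhyperbolicity of $\tau_0 - H$ in direction $T$ is the crucial input: it gives a lower bound that lets us construct a good parametrix for $(z - H^w)^{-1}$ for $\Im z$ of one fixed sign, uniformly down to $|\Im z| \sim h^{1-\delta'}$, or, via the microhyperbolic estimate, even control the resolvent with a gain in $h$ along the relevant contour. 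The mechanism is standard: microhyperbolicity in direction $T$ means that conjugating by a weight $e^{\lambda \langle T,(x,\xi)\rangle^w/h}$ — or more precisely using the positivity of $\langle T,\nabla H\rangle$ modulo $|Hw|^2$ — produces an effective gain, so that $(z-H^w)^{-1}$ composed with the cutoff is $\mathcal{O}(h^{-M})$ for $|\Im z|\geq h^{1-\delta}$ times a harmless power, while $\bar\partial_z\Phi$ is already $\mathcal{O}(h^\infty)$ there. Combining the $\mathcal{O}(h^\infty)$ smallness of $\bar\partial_z\Phi(z)$ on $|\Im z|\geq$ const with the polynomial-in-$h^{-1}$ bound on the localized resolvent, and handling the region $|\Im z|$ small but the symbol away from $\{H=\Re z\}$ by ellipticity (non-characteristic, so the resolvent is $\mathcal{O}(1)$ there), yields the $\mathcal{O}(h^\infty)$ bound.

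The two regions to be careful about are: (a) $|\Im z|$ bounded below — there $\bar\partial_z\Phi = \mathcal{O}(h^\infty)$ because $0\notin\mathrm{supp}\,\theta$ forces $|t|\gtrsim\varepsilon\gtrsim h^{1-\delta}$ in the oscillatory integral, so the naive resolvent bound $\|(z-H^w)^{-1}\|\leq |\Im z|^{-1}$ already suffices; (b) $|\Im z|\to 0$ with $\Re z$ near $\tau_0$ — this is where microhyperbolicity does the work, and it is the main obstacle: one must show that the cut-off resolvent $\chi^w\,\text{(localization)}\,(z-H^w)^{-1}$ remains polynomially bounded in $h^{-1}$ as $\Im z\downarrow 0$ on the relevant side, which requires the microhyperbolic a priori estimate of Lemma/Definition \ref{de1} promoted to an operator estimate (a Gårding-type argument for the operator $\langle T,\nabla_{x,\xi}H\rangle^w$ together with the ellipticity of $H^w - \Re z$ off its characteristic set). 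I expect the bulk of the work, and the only genuinely delicate point, to be this uniform resolvent estimate; the passage from it to \eqref{main estimate 1} via the contour integral and trace-norm estimates on $\chi^w$ (trace norm $\mathcal{O}(h^{-n})$) is then routine bookkeeping. Uniformity in $\tau\in\mathbb R$ comes for free since for $\tau$ outside a fixed neighborhood of $\tau_0$ the operator $f(H^w)\mathcal{F}_h^{-1}\theta_\varepsilon(\tau - H^w)$ is $\mathcal{O}(h^\infty)$ by disjoint supports in the spectral variable.
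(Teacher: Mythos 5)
Your starting Helffer--Sj\"ostrand step is applied to the wrong function, and this propagates into a real gap. The function $\mathcal{F}_h^{-1}\theta_\varepsilon(\tau-\cdot)$ is already \emph{entire} (Paley--Wiener, since $\theta_\varepsilon$ has compact support), so it does not need an almost analytic extension; but it grows exponentially in $|\Im z|$, so the formula
\[
\mathcal{F}_h^{-1}\theta_\varepsilon(\tau-H^w)=-\frac{1}{\pi}\int_{\mathbb C}\bar\partial_z\Phi(z)\,(z-H^w)^{-1}\,L(dz)
\]
does not converge (both $\Phi$ and its putative $\bar\partial$-defect are unbounded in $\Re z$ and grow in $\Im z$). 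The paper instead applies Helffer--Sj\"ostrand to the compactly supported $f$ and keeps $\mathcal{F}_h^{-1}\theta_\varepsilon(\tau-z)$ as the holomorphic weight it already is; $\bar\partial\tilde f$ supplies the compact support in $\Re z$, and a cutoff $\psi_L$ with $L=M\zeta(h)/\varepsilon$ confines the $\Im z$-integral to $|\Im z|\le 2L$, where the Paley--Wiener factor is only $\mathcal{O}(\varepsilon h^{-2M-1})$.

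The more serious gap is in how you handle the region $\Im z\to 0$. You propose to prove a polynomial-in-$h^{-1}$ bound on the (localized) resolvent as $\Im z\downarrow 0$ and combine it with smallness elsewhere. This cannot close the argument, for two reasons. First, since $\mathcal{F}_h^{-1}\theta_\varepsilon(\tau-z)$ is entire, there is no $\bar\partial$-smallness to exploit; the estimate \eqref{-11} shows that in the half-plane where $\Im z$ has the sign ``seen'' by ${\rm supp}\,\theta$ the Paley--Wiener factor grows like $e^{\varepsilon|\Im z|/h}$, and a merely polynomial resolvent bound near the real axis is not enough to beat that growth (nor does the argument explain why the integral over the $|\Im z|\ge L$ region is $\mathcal{O}(h^\infty)$ on the bad side). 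Second, microhyperbolicity alone is not used in the paper to prove a limiting absorption principle of the type you suggest; what it delivers, via the sharp G\aa rding inequality, is that after conjugating $H^w$ by the unitary $U_t=e^{it(T_2\cdot x-T_1\cdot hD_x)/h}$ and analytically continuing in $t$ to $t_0=iL/C_0$, the trace $\tilde K_{t_0}(z;h)$ extends holomorphically to $\Im z>-|t_0|/2$. The decisive step is then a Cauchy contour deformation: on the bad side one replaces $K(z;h)$ by $\tilde K_{t_0}(z;h)$ (which agrees with it modulo $\mathcal{O}(h^\infty)$ for $\Im z\ge L$), pushes the contour into $\Im z<0$, and there the Paley--Wiener estimate \eqref{PW} gives exponential decay. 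Your plan misses this deformation entirely and therefore cannot handle the half-plane where the Fourier factor blows up. The easy half-plane (the one where $\mathcal{F}_h^{-1}\theta_\varepsilon$ decays) you do get right in spirit, but that is the part the paper labels $\mathcal{I}_-$ and dispatches in one line with \eqref{PW}; the entire content of the theorem is $\mathcal{I}_+$, and for that you need the analytic deformation, not a resolvent estimate near the real axis.
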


\begin{theorem}\label{th2}
Let $H_0,H_1\in S^0(\mathbb R^{2n},\mathcal{H}_N)$ be such that $H_0= H_1$ {in a neighborhood of} ${\rm supp}\,\chi$. 
Then there exists $\varepsilon>0$ small and independent of $h$ such that we have, uniformly for $\tau \in \mathbb R$,
\begin{equation}\label{man2}
{\rm tr}\; \left(  \chi^w \left[  f (H_j^w) \mathcal{F}_{h}^{-1}\theta_\varepsilon (\tau-H_j^w) \right]_0^1\right) = \mathcal{O}(h^{\infty}).
\end{equation}
 \end{theorem}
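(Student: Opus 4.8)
The plan is to reduce the difference of traces in \eqref{man2} to a form where the localizer $\chi^w$ confines everything to a region on which $H_0^w$ and $H_1^w$ have the same symbol, and then to exploit the pseudolocality of $h$-pseudodifferential calculus together with the standard representation of $f(H_j^w)$ via an almost-analytic extension (Helffer--Sj\"ostrand formula) to see that the two contributions agree modulo $\mathcal{O}(h^\infty)$. Concretely, for $j=0,1$ write
\begin{equation*}
\chi^w f(H_j^w) \mathcal{F}_h^{-1}\theta_\varepsilon(\tau - H_j^w)
= \frac{1}{2\pi h}\int_{\mathbb R} e^{it\tau/h}\theta_\varepsilon(t)\, \chi^w f(H_j^w) e^{-itH_j^w/h}\, dt,
\end{equation*}
so that \eqref{man2} becomes the semiclassical Fourier transform in $\tau$ of $\theta_\varepsilon(t)$ times the function
$t\mapsto {\rm tr}\big( \chi^w [ f(H_j^w) e^{-itH_j^w/h} ]_0^1 \big)$. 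Since $\theta_\varepsilon$ is supported in $|t|\le\varepsilon$, it suffices to prove that this trace is $\mathcal{O}(h^\infty)$ uniformly for $|t|\le\varepsilon$, with $\varepsilon>0$ a small $h$-independent constant to be chosen.

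First I would cut off in energy: choose $\tilde f\in C_0^\infty(\mathbb R;\mathbb R)$ with $\tilde f f = f$, so that $f(H_j^w) = \tilde f(H_j^w) f(H_j^w)$, and choose $\tilde\chi\in C_0^\infty(\mathbb R^{2n};\mathbb R)$ equal to $1$ near ${\rm supp}\,\chi$ and supported in the neighborhood of ${\rm supp}\,\chi$ where $H_0 = H_1$. The key calculus fact is that $\chi^w g(H_j^w)$ and $\chi^w \tilde\chi^w g(H_j^w)$ differ by $\mathcal{O}(h^\infty)$ in trace norm for $g=\tilde f$ or $g$ a suitable resolvent power: indeed, using the Helffer--Sj\"ostrand representation $g(H_j^w) = \frac{1}{\pi}\int \bar\partial_z \tilde g(z)\, (z - H_j^w)^{-1}\, L(dz)$ together with the fact that $(z-H_j^w)^{-1}$ is pseudolocal (the commutator of $\chi^w$ through it gains powers of $h$ away from overlapping supports), one gets $\chi^w g(H_j^w) = \chi^w\tilde\chi^w g(H_j^w) + \mathcal{O}(h^\infty)$. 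Since $\tilde\chi^w H_0^w \tilde\chi^w = \tilde\chi^w H_1^w \tilde\chi^w + \mathcal{O}(h^\infty)$ (their symbols coincide on ${\rm supp}\,\tilde\chi$, so the symbol of the difference vanishes there and the remaining terms in the symbol product are supported where $\tilde\chi$ or its derivatives live, outside which everything is controlled), one can propagate this through the functional calculus and the (short-time) evolution.

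The cleanest way to organize the short-time evolution estimate is Duhamel together with finite propagation speed / non-stationary phase: write
\begin{equation*}
e^{-itH_1^w/h} - e^{-itH_0^w/h}
= \frac{i}{h}\int_0^t e^{-i(t-s)H_1^w/h}\,(H_1^w - H_0^w)\, e^{-isH_0^w/h}\, ds,
\end{equation*}
insert the energy cutoffs $\tilde f(H_j^w)$ on both ends, and observe that the "bad" factor $H_1^w - H_0^w$ is, modulo $\mathcal{O}(h^\infty)$, of the form $B^w$ with $B$ supported away from ${\rm supp}\,\tilde\chi$; sandwiching it between $\tilde\chi^w f(H_j^w) e^{\mp i s H_j^w/h}$ on one side, one uses that for $|s|\le\varepsilon$ small the evolution $e^{-isH_j^w/h}$ moves the microsupport of $\tilde\chi^w f(H_j^w)$ by $\mathcal{O}(\varepsilon)$, hence remains disjoint from ${\rm supp}\,B$, yielding $\mathcal{O}(h^\infty)$ in trace norm after using the trace-norm bound $\mathcal{O}(h^{-n})$ for $\chi^w$-type operators. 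Iterating the Duhamel expansion (or equivalently using the successive approximation construction of the evolution, as in Ivrii's observation recalled in the introduction) then gives the full $\mathcal{O}(h^\infty)$.

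The main obstacle is the propagation/microlocalization step for matrix-valued symbols: one cannot diagonalize $H_j$, so the usual argument that ${\rm supp}\,\tilde\chi^w f(H_j^w)$ stays microlocalized near ${\rm supp}\,\tilde\chi$ under the short-time flow must be carried out at the level of the operator $H_j^w$ directly, controlling commutators $[\chi^w, e^{-itH_j^w/h}]$ via Egorov-type estimates that hold uniformly in $N$. This is exactly where the smallness of $\varepsilon$ (independent of $h$) is used: for $|t|\le\varepsilon$ the displacement of supports is $\mathcal{O}(\varepsilon)$, which can be made smaller than the distance between ${\rm supp}\,\chi$ and the complement of $\{H_0=H_1\}$, so that no geometry of the (possibly crossing) eigenvalues of $H_j$ enters. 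I would expect to invoke the abstract short-time propagation estimate in essentially the form already used for Theorem \ref{th1}, restricted now to small times, so the technical heavy lifting overlaps with machinery the paper has already set up.
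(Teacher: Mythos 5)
Your proposal takes a genuinely different route from the paper's, and it has a gap at the place where all the work is concentrated. The paper never writes down $e^{-itH_j^w/h}$: it represents the left side of \eqref{man2} via Helffer--Sj\"ostrand as an integral of $K(z;h)={\rm tr}\big(\chi^w[(z-H_j^w)^{-1}]_0^1\big)$, restricts (using \eqref{til}, \eqref{-11}) to the annulus $L<|\Im z|<2L$ with $L=M\zeta(h)/\varepsilon$, and then proves a weighted resolvent estimate by conjugating with $e^{B\log(1/h)}$, where $B=\alpha B_0$, $B_0\equiv1$ near ${\rm supp}\,\chi$ and $B_0\equiv0$ near ${\rm supp}(H_1-H_0)$. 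The spatial separation of supports plus $h$-pseudodifferential calculus turn this into the bound $K(z;h)=\mathcal{O}(h^{\alpha-n}/|\Im z|^2)$ uniformly on $|\Im z|\ge C\alpha\zeta(h)$; choosing $\alpha=M/(C\varepsilon)$ and then $\varepsilon$ small enough yields an arbitrary power of $h$, with $C$ depending only on $\|H_1\|_{C^k}$, $\|B_0\|_{C^k}$. Everything is stationary and needs no microhyperbolicity.

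The gap in your version is the step where you invoke ``Egorov-type estimates that hold uniformly in $N$'' and, in particular, your closing assertion that this is ``essentially the form already used for Theorem~\ref{th1}, restricted now to small times.'' That cross-reference is incorrect: the proof of Theorem~\ref{th1} depends on the \emph{uniform microhyperbolicity} hypothesis (absent in Theorem~\ref{th2}), which it uses to shift the spectral parameter via the conjugation $U_t$ and an almost-analytic continuation of $\tilde K_t$; it neither produces nor uses a finite-propagation-speed statement for the matrix-valued propagator $e^{-isH^w/h}$. So the microlocalization lemma your Duhamel argument hinges on — that $\tilde\chi^w f(H_j^w)e^{\mp isH_j^w/h}$ stays $\mathcal{O}(h^\infty)$-disjoint from ${\rm supp}(H_1-H_0)$ for $|s|\le\varepsilon$ — is not available from the paper's established machinery and would have to be proved from scratch. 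It \emph{is} provable (by exactly the kind of exponential-weight conjugation the paper applies to the resolvent, now applied to the propagator, using only the boundedness of $\nabla H$ and nothing about the eigenvalue structure), so your route is not doomed; but as written it defers the entire content of the theorem to an unproved lemma and mis-attributes its availability. If you instead state and prove the propagator-weight lemma explicitly, or move back to the resolvent and run the paper's weight argument, your energy-cutoff plus Duhamel skeleton would go through. One smaller point: a single Duhamel iteration should already suffice once the microsupport disjointness is in hand — the factor $1/h$ from Duhamel is absorbed by $\mathcal{O}(h^\infty)$ — so the ``iterating the Duhamel expansion'' clause is unnecessary.
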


The following result is a simple consequence of the above theorems.
\begin{theorem}\label{WASG}
Suppose that  $\tau_0-H(x,\xi)$ is microhyperbolic at every point  $(x,\xi)$ in ${\rm supp}\,\chi$. 
If $\theta$ equals 1 near $t=0$, then there exist $I\in O_{\tau_0}$ and  $\varepsilon>0$ small and independent of $h$
such that for $f\in C^\infty_0(I;\mathbb R)$, the following full asymptotic expansion in powers of $h$ holds
uniformly for $\tau\in \mathbb R$: 
\begin{equation}\label{man3}
{\rm tr}\; \left(  \chi^w  f (H^w) \mathcal{F}_{h}^{-1}\theta_\varepsilon (\tau-H^w)  \right) \sim (2\pi h)^{-n} f(\tau) \sum_{j\geq 0}  \gamma_{j}(\tau) h^{j} \quad \text{as}\;\;h \searrow 0.
\end{equation} 
\end{theorem}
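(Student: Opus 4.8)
The plan is to derive Theorem \ref{WASG} from the two preceding theorems together with the extension result, Theorem \ref{main result appendix}, which allows us to upgrade pointwise microhyperbolicity near $\mathrm{supp}\,\chi$ to uniform microhyperbolicity on all of $\mathbb R^{2n}$. First I would fix a slightly larger cutoff $\tilde\chi\in C_0^\infty(\mathbb R^{2n};\mathbb R)$ equal to $1$ on a neighborhood of $\mathrm{supp}\,\chi$, so that $\mathrm{supp}\,\chi$ is contained in the interior of $\{\tilde\chi=1\}$, and apply Theorem \ref{main result appendix} on that neighborhood: since $\tau_0-H$ is microhyperbolic at each point of the compact set $\mathrm{supp}\,\tilde\chi$, by a standard compactness/partition-of-unity argument one gets a single direction only locally, but the appendix result produces a global symbol $\tilde H\in S^0(\mathbb R^{2n};\mathcal{H}_N)$ with $\tilde H=H$ near $\mathrm{supp}\,\chi$ and $\tau_0-\tilde H$ uniformly microhyperbolic in a fixed direction $T$ on $\mathbb R^{2n}$.

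Next I would split $\theta=\theta_1+\theta_2$ (after rescaling to $\theta_\varepsilon$) where $\theta_1$ is supported in a small $h$-independent neighborhood of $0$ with $\theta_1\equiv 1$ near $0$, and $\theta_2$ is supported away from $0$, say in $h^{1-\delta}\le |t|\le \kappa$ for a suitable fixed $\kappa$; concretely, if $\theta$ itself is already compactly supported in $]-1,1[$ and equals $1$ near $0$, then choosing $\varepsilon$ small makes $\theta_\varepsilon$ supported in a small neighborhood of $0$, and we only need to separate the piece near $0$ from a tail. For the near-$0$ piece, I would first replace $H$ by $\tilde H$ inside the trace: by Theorem \ref{th2} applied with $H_0=H$, $H_1=\tilde H$ (which agree in a neighborhood of $\mathrm{supp}\,\chi$), the difference
\begin{equation}
{\rm tr}\left(\chi^w\left[f(H_j^w)\mathcal F_h^{-1}(\theta_1)_\varepsilon(\tau-H_j^w)\right]_0^1\right)=\mathcal O(h^\infty)
\end{equation}
provided $\varepsilon$ is small and $h$-independent. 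So modulo $\mathcal O(h^\infty)$ the trace with $H$ equals the trace with $\tilde H$. For the tail piece $\theta_2$, since $\tau_0-\tilde H$ is uniformly microhyperbolic, Theorem \ref{th1} (with $A^w=\chi^w$, which is trace class) gives that ${\rm tr}(\chi^w f(\tilde H^w)\mathcal F_h^{-1}(\theta_2)_\varepsilon(\tau-\tilde H^w))=\mathcal O(h^\infty)$ uniformly in $\tau$, after shrinking $I\in O_{\tau_0}$ around $\tau_0$; and by the same Theorem \ref{th2} argument the corresponding trace with $H$ differs by $\mathcal O(h^\infty)$. Hence ${\rm tr}(\chi^w f(H^w)\mathcal F_h^{-1}\theta_\varepsilon(\tau-H^w))$ equals, modulo $\mathcal O(h^\infty)$, the analogous trace with $\tilde H$ and with $\theta$ replaced by a function that is $\equiv 1$ in a fixed neighborhood of $0$.

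At this point the problem is reduced to a \emph{uniformly} microhyperbolic symbol $\tilde H$ with $\theta\equiv 1$ near $0$, and the full asymptotic expansion \eqref{man3} is exactly the content of the main technical trace computation of Section \ref{Proofs of the results STF} — i.e. one invokes the stationary/resolvent-based construction (expressing the trace via $\bar\partial$-techniques applied to an almost analytic extension of the resolvent near the real axis, as outlined in the introduction) to get
\begin{equation}
{\rm tr}\left(\chi^w f(\tilde H^w)\mathcal F_h^{-1}\theta_\varepsilon(\tau-\tilde H^w)\right)\sim(2\pi h)^{-n}f(\tau)\sum_{j\ge 0}\tilde\gamma_j(\tau)h^j,
\end{equation}
with coefficients $\tilde\gamma_j$ smooth in $\tau$; since near $\mathrm{supp}\,\chi$ we have $\tilde H=H$, the coefficients $\tilde\gamma_j$ depend only on the symbol of $H$ on $\mathrm{supp}\,\chi$ (again by Theorem \ref{th2}), so we may write them as $\gamma_j(\tau)$, with $\gamma_0(\tau)=\int_{\mathbb R^{2n}}\chi(x,\xi)\,\mathrm{tr}_{\mathbb C^N}\mathbf 1_{(-\infty,0)}\big(H(x,\xi)-\tau\big)\,(\text{suitable density})\,dx\,d\xi$-type leading term coming from the principal symbol. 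I would also record that the expansion is uniform in $\tau$ because $f$ has compact support in $I$ and all constructions are uniform on $I$.

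The main obstacle is the last step: obtaining the actual asymptotic expansion \eqref{man3} for a uniformly microhyperbolic system, which requires the full parametrix/resolvent machinery for matrix-valued $h$-pseudodifferential operators and careful control of the trace of the (almost analytic) resolvent near the real axis — this is where microhyperbolicity is genuinely used to bound resolvent norms like $\|(\tilde H^w-z)^{-1}\|=\mathcal O(1/|\Im z|)$ and to justify the stationary phase reduction. By contrast, the reductions via Theorems \ref{th1}, \ref{th2} and \ref{main result appendix} are essentially bookkeeping: the only subtlety there is making sure the various smallness parameters ($\varepsilon$, the radius $\eta$ of $I$, the tail threshold $\kappa$, the size of the neighborhood of $\mathrm{supp}\,\chi$) can be chosen consistently and $h$-independently, which they can since each theorem allows shrinking the interval and the support of $\theta$.
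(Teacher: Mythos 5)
Your reduction steps (extend $H$ to a globally microhyperbolic $\tilde H$ via Theorem~\ref{main result appendix}, control the replacement $H\leftrightarrow\tilde H$ with Theorem~\ref{th2}, discard the part of $\theta_\varepsilon$ supported away from the origin with Theorem~\ref{th1}) are exactly the paper's preliminary reductions. But there is a genuine gap at the end: after those reductions you are left with a trace involving $\theta_\varepsilon$ where $\theta\equiv1$ near $0$ and $\varepsilon>0$ is small but still \emph{$h$-independent}, and you assert that the expansion for this case is ``exactly the content of the main technical trace computation of Section~3.'' No such separate computation exists; obtaining this expansion \emph{is} the proof of Theorem~\ref{WASG}, and its substance is precisely what you have deferred.

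The missing idea is a further, $h$-dependent shrinking of the time scale. One writes $\tilde\varepsilon=2^{-D}\varepsilon$ with $2^{-D}\sim h^{1-\delta_0}$ ($\delta_0\in(0,1/2)$), decomposes $\theta_\varepsilon-\theta_{\tilde\varepsilon}$ as a finite dyadic sum of rescaled bump functions vanishing near $0$, and applies Theorem~\ref{th1} to each dyadic piece (the number of pieces is $D=\mathcal O(\log\frac1h)$, but this is harmless). This gives ${\rm tr}\bigl(\chi^wf(H^w)\mathcal F_h^{-1}\theta_\varepsilon(\tau-H^w)\bigr)\equiv{\rm tr}\bigl(\chi^wf(H^w)\mathcal F_h^{-1}\theta_{\tilde\varepsilon}(\tau-H^w)\bigr)$. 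The point of passing to $\tilde\varepsilon\sim h^{1-\delta_0}$ is that $\mathcal F_h^{-1}\theta_{\tilde\varepsilon}(\tau-z)=\mathcal O(h^{-\delta_0}e^{|\Im z|/h^{\delta_0}})$ stays subexponential on the strip $|\Im z|\le 2h^\delta$ (for $\delta_0<\delta<1/2$), so the almost-analytic decay $\bar\partial\tilde f=\mathcal O(|\Im z|^\infty)$ kills the contribution from $|\Im z|<h^\delta$. With an $h$-independent $\varepsilon$ this fails: $e^{\varepsilon|\Im z|/h}$ blows up exponentially as $|\Im z|\to h^\delta\gg h$, and the error from the strip near the real axis cannot be closed. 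Once the integral is restricted to $|\Im z|\ge h^\delta$ one invokes the symbolic expansion of $(z-H^w)^{-1}$ in $S^\delta_\delta$ (Proposition~8.6 of \cite{dimassi}), computes the trace term by term, and uses microhyperbolicity (via Proposition~\ref{Hyperb}) to take the boundary values $\widehat e_j(\tau\pm i0)$; a Taylor expansion and the fact that $\theta\equiv1$ near $0$ (hence $\theta^{(k)}(0)=0$ for $k\ge1$) then produce the factor $f(\tau)\gamma_j(\tau)$. None of this appears in your proposal, and without the scale reduction $\varepsilon\rightsquigarrow h^{1-\delta_0}$ the chain of estimates does not close.
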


{\bf Remark 2.2.} The coefficients $\tau \mapsto \gamma_j(\tau)$ are smooth, {independent of $f$ and $\theta$} and can be computed explicitly (see  formula \eqref{D7}). 
\subsection{Application to Schr\"odinger operators with matrix-valued potentials}\label{SSF}

In this section we apply the above trace formula  to study the spectral properties of multi-channel semiclassical  Schr\"odinger operators of the form 
\begin{equation}\label{Operators}
P_1(h) :=-h^2\Delta \otimes I_N  + V(x),\,\,\,\, P_0(h) :=-h^2\Delta \otimes I_N +V_\infty,\,\,\,\text {in } 
L^2(\mathbb R^n;\mathbb C^N),
\end{equation}
where $I_N$ is the $N\times N$ identity matrix and $V(x)$ is a smooth hermitian matrix-valued potential, i.e.,
$$V(x)=\big (V_{ij}(x)\big)_{1\leq i,j\leq N},\,\, V_{ij}(x)=\overline{V_{ji}(x)}.$$
We assume that the matrix $V$ has a limit $V_\infty$ at infinity and 
\begin{equation}\label{DA1}
\exists \mu >n\,\,\,\, {\rm s.t.}\,\,\,\, \Vert \partial_x^{\alpha} (V(x)-V_\infty)\Vert_{N\times N} ={\mathcal O}_\alpha (\langle x \rangle^{-\mu-|\alpha|}),\,\,\,\,\,\, \forall \alpha\in \mathbb N^n,\, \forall x\in {\mathbb R}^n.
\end{equation}
After a linear transformation, we may assume that
$$V_\infty=\begin{pmatrix}
e_{1,\infty} & 0&\cdots&0 \\
0& e_{2,\infty}&\cdots&0\\
\vdots&0& \ddots&\vdots\\
0&\cdots&0&e_{N,\infty}
\end{pmatrix},\,\,\,\text {with } e_{1,\infty}\leq e_{2,\infty}\leq \cdots \leq e_{N,\infty}.$$

The operator $P_0(h)$ with domain $H^2({\mathbb R}^n;{\mathbb C}^N)$ is self-adjoint. Its spectrum is $[e_{1,\infty},+\infty[$.
Since $V-V_\infty$ is $\Delta$-compact, the operator $P_1(h)$ admits a unique self-adjoint
 {realization} in $L^2({\mathbb R}^n;{\mathbb C}^N)$ with domain $H^2({\mathbb R}^n;{\mathbb C}^N)$. Moreover the essential  spectra of $P_1(h)$ and $P_0(h)$ are the same. The operator $P_1(h)$ may have discrete eigenvalues in $(-\infty,e_{1,\infty})$ and embedded ones in the interval $[e_{1,\infty},e_{N,\infty}]$ contained in the continuous spectrum.


The spectral shift function $s_h(\tau)$ {associated to $(P_1(h),P_0(h))$} is defined as a real-valued
function  on ${\mathbb R}$ satisfying the Lifshits-Krein formula 
\begin{equation}\label{LK}
\langle s_{h}'(\cdot),f(\cdot) \rangle = - {\rm tr}\big( f(P_1(h))-f(P_0(h)) \big), \quad \forall f\in C_0^{\infty}(\mathbb R;\mathbb R).
\end{equation} 
The function $s_h(\tau)$ is fixed up to an additive constant by the formula \eqref{LK}, and we normalize {it} so that $s_h(\tau)=0$ for $\tau <{\rm inf} (\sigma(P_1(h))$.

{We denote by $p_1(x,\xi) :=\xi^2 I_N +V(x)$ and $p_0(x,\xi):= \xi^2 I_N + V_{\infty}$, $(x,\xi)\in \mathbb
R^{2n}$, the classical Hamiltonians associated with the operators $P_1(h)$ and $P_0(h)$, respectively.} Let $e_1(x)\leq e_2(x)\leq ... \leq e_N(x)$  be the eigenvalues of $V(x)$  arranged in increasing order.

\begin{theorem}\label{WA1}
Assume \eqref{DA1} and let $f\in C_0^{\infty}(\mathbb R;\mathbb R)$. There exists a sequence of real numbers $(c_{2j}(f))_{j\in \mathbb N}$  such that 
\begin{equation}\label{Oubl}
\langle s_{h}'(\cdot),f(\cdot) \rangle  \sim (2\pi h)^{-n} \sum_{j\geq 0} c_{2j}(f) h^{2j} \quad \text{as}\;\;h \searrow 0,
\end{equation}
with
\begin{equation}\label{Oubl2}
c_0(f) = \frac{\omega_n}{2} \sum_{k=1}^N   \int_{\mathbb R^n}  \int_0^{+\infty}\big[ f(e_{k,\infty}+ \tau)-f(e_k(x)+\tau)\big] \tau^{\frac{n-2}{2}}d\tau dx,
\end{equation}
{where} $\omega_n$ is the volume of the unit sphere $\mathbb{S}^{n-1}$.
\end{theorem}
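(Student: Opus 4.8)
The plan is to reduce the statement to the trace formula of Theorem \ref{WASG} (or rather its Schrödinger-operator consequence, which the excerpt calls Theorem \ref{ASW2}) via a standard Tauberian/almost-analytic argument, together with an explicit identification of the leading coefficient. First I would localize in energy: since $f$ has compact support, I write $f=\sum f_\ell$ with each $f_\ell$ supported in a small interval $I_\ell$, so it suffices to treat $f$ supported near a fixed energy $\tau_0$. The key point to check is that for the Schrödinger pair $(P_1,P_0)$, at \emph{every} energy $\tau_0$ the symbol $\tau_0 - p_1(x,\xi) = \tau_0 - \xi^2 I_N - V(x)$ is microhyperbolic at every point of any compact set: indeed one can take the direction $T = (0,-\xi)$ (the $\partial_\xi$ direction), for which $\langle T,\nabla_{x,\xi}(\tau_0-p_1)\rangle = 2\xi^2 I_N \geq 0$, and combine this with the radial direction near $\xi=0$; on $\{\xi=0\}$ one has $\tau_0 - p_1 = (\tau_0 I_N - V(x))$, and the microhyperbolicity inequality \eqref{MHC} with the $C_1|H w|^2$ slack term is satisfied there too. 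This is where $V_\infty$ having possibly degenerate eigenvalues causes no trouble, since microhyperbolicity tolerates crossings. With a cutoff $\chi\in C_0^\infty(\mathbb R^{2n})$ equal to $1$ on a large ball, Theorem \ref{WASG} gives a full expansion of ${\rm tr}(\chi^w f(P_1){\mathcal F}_h^{-1}\theta_\varepsilon(\tau-P_1))$, and similarly for $P_0$; the contribution away from ${\rm supp}\,\chi$ is handled by the decay \eqref{DA1} plus a resolvent/commutator estimate to show it is $\mathcal O(h^\infty)$, uniformly in $\tau$.

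Next, I would pass from the smoothed trace to $\langle s_h',f\rangle$. The difference ${\rm tr}(f(P_1)-f(P_0))$ equals, up to $\mathcal O(h^\infty)$, the value at any fixed $\tau$ in a suitable range of $\int$-type expressions obtained by integrating \eqref{man3} for the pair; more precisely one writes $f(P_j) = f(P_j)\ast({\mathcal F}_h^{-1}\theta_\varepsilon)$-type identity in energy and uses that, since $f$ is supported in a small window and $\theta\equiv 1$ near $0$, the smoothing error is negligible. The coefficients $\gamma_j(\tau)$ of Theorem \ref{WASG} are computed from the symbol alone (Remark 2.2, formula \eqref{D7}), so the difference of the two expansions is an integral over $\tau$ of $f(\tau)$ times the difference of the symbolic densities for $p_1$ and $p_0$. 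The vanishing of all odd-order coefficients $c_{2j+1}(f)$ is the usual parity phenomenon for Schrödinger operators: the Weyl symbol expansion of $f(-h^2\Delta\otimes I_N + W)$ has only even powers of $h$ because the subprincipal symbol vanishes and each successive term picks up two derivatives/two powers of $h$; this can be read off directly from \eqref{D7} applied to $H = \xi^2 I_N + W(x)$.

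Finally, for the leading term $c_0(f)$ I would compute $\gamma_0(\tau)$ explicitly. At order $h^0$, formula \eqref{D7} should give $\gamma_0(\tau) = \int_{\mathbb R^{2n}} {\rm tr}_{\mathbb C^N}\,\mathbbm 1_{(-\infty,\tau]}\big(\xi^2 I_N + V(x)\big)\,dxd\xi$ for the counting density, hence the density for $s_h'$ at leading order is the $\tau$-derivative of $\int_{\mathbb R^{2n}} {\rm tr}_{\mathbb C^N}\big[\mathbbm 1_{(-\infty,\tau]}(\xi^2 I_N + V_\infty) - \mathbbm 1_{(-\infty,\tau]}(\xi^2 I_N + V(x))\big]dxd\xi$. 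Diagonalizing $V(x)$ pointwise with eigenvalues $e_k(x)$, the $\xi$-integral becomes, for each $k$, $\int_{\mathbb R^n}\mathbbm 1[\xi^2 \leq \tau - e_k(x)]\,d\xi = \omega_n\int_0^{+\infty}\mathbbm 1[s \leq \tau - e_k(x)]\,\tfrac12 s^{\frac{n-2}{2}}\,ds$ after the substitution $s=\xi^2$; pairing against $f$ and integrating by parts in $\tau$ then yields exactly \eqref{Oubl2}. The main obstacle is the bookkeeping in the first two steps — verifying uniform microhyperbolicity at $\xi=0$ with the correct choice of direction, and carefully controlling the phase-space tail where $\chi\neq 1$ uniformly in $\tau\in\mathbb R$ (including above the thresholds $e_{k,\infty}$, where $P_1$ has continuous spectrum) — rather than the final explicit integral, which is routine once $\gamma_0$ is in hand.
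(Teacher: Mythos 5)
Your proposal is built around reducing Theorem~\ref{WA1} to the microhyperbolic trace formula Theorem~\ref{WASG} (or its Schr\"odinger version Theorem~\ref{ASW2}), and this is the wrong route: those theorems \emph{assume} that $\tau_0-p_1(x,\xi)$ is microhyperbolic at every point of $\Sigma_{\tau_0}$, whereas Theorem~\ref{WA1} carries no such hypothesis and must hold at \emph{all} energies, including critical ones. Your claim that ``at every energy $\tau_0$ the symbol $\tau_0 - p_1(x,\xi)$ is microhyperbolic at every point of any compact set,'' argued via $T=(0,-\xi)$ plus ``the radial direction near $\xi=0$,'' is false. Take any $x_0$ that is a critical point of an eigenvalue $e_j$ of $V$ (i.e.\ $\nabla e_j(x_0)=0$), and set $\tau_0=e_j(x_0)$: then at $(x_0,0)\in\Sigma_{\tau_0}$ one has $\langle T,\nabla_{x,\xi}(\tau_0-p_1)\rangle|_{\ker(\tau_0 I_N-V(x_0))}=0$ for \emph{every} direction $T$, so \eqref{MHC} cannot hold (the slack term $C_1\lvert Hw\rvert^2$ vanishes on that kernel, too). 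This is exactly what Remark~2.3 in the paper is warning about: microhyperbolicity on $\Sigma_{\tau_0}$ is a genuine non-degeneracy condition on $\nabla V$ restricted to $\ker(V(x_0)-\tau_0)$, not something automatic. For such $\tau_0$ the coefficients $\gamma_j(\tau)$ in \eqref{D7} need not exist (Proposition~\ref{Hyperb} requires microhyperbolicity), so the expansion in Theorem~\ref{WASG} is simply unavailable, and no localization in $\tau$ or integration against $f$ can salvage it.

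The point you are missing is that $\langle s_h',f\rangle=-\mathrm{tr}(f(P_1)-f(P_0))$ is already an energy-averaged quantity, so no trace formula with singular energy localization is needed at all. The paper's actual proof is the classical functional-calculus argument: start from the Helffer--Sj\"ostrand representation \eqref{W1}, discard the region $\lvert\Im z\rvert\le h^\delta$ (where $\bar\partial\tilde f(z)=\mathcal O(\lvert\Im z\rvert^\infty)$ beats the $\mathcal O(h^{-n}\lvert\Im z\rvert^{-2})$ trace bound), and in $\lvert\Im z\rvert\ge h^\delta$ use the resolvent symbol expansion \eqref{Reso}--\eqref{D4} (from \cite[Prop.~8.6]{dimassi}). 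Integrating the symbol expansion in $(x,\xi,z)$ gives \eqref{Oubl} directly, with $c_0(f)=\iint\widehat{\mathrm{tr}}\bigl(f(p_0)-f(p_1)\bigr)\,dxd\xi$, from which \eqref{Oubl2} follows by diagonalizing $V(x)$ and doing the radial $\xi$-integral exactly as in the last paragraph of your proposal; the vanishing of odd coefficients follows from the parity of $h\mapsto(2\pi h)^n\,\mathrm{tr}(f(P_1)-f(P_0))$. Microhyperbolicity, Theorem~\ref{WASG}, the partition of unity in $\tau$, and the phase-space cutoff $\chi$ --- all the machinery your proposal builds --- are unnecessary here and, at critical energies, actually unavailable.
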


For $\tau_0\in \mathbb R$, set
$$\Sigma_{\tau_0}:=\bigcup_{k=1}^N \{(x,\xi)\in {\mathbb R}^{2n}; \xi^2+e_k(x)=\tau_0\}.$$
 The following theorem  is a  consequence of Theorem  \ref{WASG}.

\begin{theorem}[Weak asymptotics]\label{ASW2}
Let $\tau_0\not\in\{e_{1,\infty},e_{2,\infty},\cdots,e_{N,\infty}\}$. Assume \eqref{DA1} and  $\tau_0- {p_1}(x,\xi)$ is microhyperbolic at every point  $(x,\xi)\in \Sigma_{\tau_0}$.
Then, if $\theta$ is equal to 1 near the origin, there exist $I\in O_{\tau_0}$  and $\varepsilon$ small enough and independent of $h$   such that for   $f\in C_0^{\infty}(I;\mathbb R)$,
 the following asymptotic formula holds uniformly {for} $\tau\in\mathbb R$:
\begin{equation} \label{weak dim1}
\langle s_{h}'(\cdot), \mathcal{F}_{h}^{-1}\theta_\varepsilon(\tau- \cdot)f(\cdot)\rangle \sim (2\pi h)^{-n} f(\tau) \sum_{j\geq 0} \gamma_{2j}(\tau)h^{2j} \quad \text{as}\;\;h \searrow 0.
\end{equation}
 The coefficients $\gamma_{2j}(\tau)$ are smooth functions of $\tau$, independent of $f$ and $\theta$. In particular, 
\begin{equation}\label{LTA}
\gamma_0(\tau)= \frac{\omega_n}{2} \sum_{k=1}^N \int_{\mathbb R^n} \big( (\tau-e_k(x))_+^{\frac{n-2}{2}}-(\tau-e_{k,\infty})_+^{\frac{n-2}{2}} \big) dx,
\end{equation}
where  $\tau_+ :=\max\;(\tau,0)$.
\end{theorem}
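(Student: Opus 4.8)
The plan is to reduce the statement to a direct application of Theorem \ref{WASG}. First I would rewrite the left-hand side using the Lifshits--Krein formula \eqref{LK} applied to the test function $g(\cdot):= \mathcal{F}_{h}^{-1}\theta_\varepsilon(\tau-\cdot)f(\cdot)$: since $f$ is compactly supported and $\mathcal{F}_{h}^{-1}\theta_\varepsilon$ is entire of Paley--Wiener type (with a parameter $\tau$), the function $g$ is in $C_0^\infty(\mathbb R;\mathbb C)$ with support in $I$, so
\[
\langle s_{h}', \mathcal{F}_{h}^{-1}\theta_\varepsilon(\tau-\cdot)f(\cdot)\rangle = -{\rm tr}\Big(g(P_1(h))-g(P_0(h))\Big) = -{\rm tr}\Big(\big[f(P_j(h))\mathcal{F}_h^{-1}\theta_\varepsilon(\tau-P_j(h))\big]_0^1\Big),
\]
using the functional calculus identity $g(P)=f(P)\,\mathcal{F}_h^{-1}\theta_\varepsilon(\tau-P)$. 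Thus the object is exactly of the form in \eqref{tr}, and it remains to localize it in phase space and recognize it as an instance of \eqref{man3}.

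Next I would introduce a phase-space cutoff. The operators $P_0(h),P_1(h)$ are Schrödinger operators, not in $S^0$, so before invoking Theorem \ref{WASG} one must replace $f(P_j(h))$ by $\chi^w f(P_j(h))$ plus a negligible term. The point is that for $\tau$ ranging over a neighborhood of $\tau_0$ and $f$ supported in a small enough $I\ni\tau_0$, the trace ${\rm tr}(\big[(1-\chi)^w f(P_j(h))\mathcal{F}_h^{-1}\theta_\varepsilon(\tau-P_j(h))\big]_0^1)=\mathcal{O}(h^\infty)$ for a suitable $\chi\in C_0^\infty(\mathbb R^{2n};\mathbb R)$ equal to $1$ on a neighborhood of $\Sigma_{\tau_0}$; this is a standard microlocalization of the spectrum away from the characteristic set, using \eqref{DA1} to control the behaviour at infinity (the symbols of $P_0,P_1$ agree there and are elliptic off $\Sigma_{\tau_0}$ at energies in $I$). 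Also, again by \eqref{DA1}, the potentials $V$ and $V_\infty$ differ by an $\mathcal{O}(h^\infty)$-negligible contribution outside $\{|x|\le R\}$, so only a compact region matters; after cutting off we are genuinely working with symbols in $S^0$. On ${\rm supp}\,\chi$ (chosen in a small neighborhood of $\Sigma_{\tau_0}$) the hypothesis that $\tau_0-p_1$ is microhyperbolic at every point of $\Sigma_{\tau_0}$ extends, by continuity of the microhyperbolicity condition \eqref{MHC} (Definition \ref{de1} is an open condition), to microhyperbolicity of $\tau_0-p_1(x,\xi)$ at every point of ${\rm supp}\,\chi$, shrinking $\chi$ if necessary.

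With these reductions, Theorem \ref{WASG} applies to $H^w=P_1(h)$ (after the compactly-supported-symbol replacement) and, trivially, to $H^w=P_0(h)$ (whose symbol $\xi^2 I_N+V_\infty$ has no characteristic points in $I$ for $\tau_0\notin\{e_{k,\infty}\}$, so the $P_0$ contribution is in fact $\mathcal{O}(h^\infty)$ after the cutoff near $\Sigma_{\tau_0}$, while the uncut part reconstructs the $P_0$-Weyl term separately). Summing, one gets the asymptotic expansion $(2\pi h)^{-n} f(\tau)\sum_{j\ge 0}\gamma_j(\tau)h^j$; the coefficients are smooth and $f,\theta$-independent by Remark 2.2. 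That the odd coefficients vanish — so the expansion is in powers of $h^2$ — follows from the explicit formula \eqref{D7} for $\gamma_j$: for second-order differential operators (no first-order part), the symbol is even in $\xi$, and the standard parity argument in the stationary-phase/resolvent computation kills the odd terms; the leading term \eqref{LTA} is then computed by the classical Weyl integral, $\gamma_0(\tau)=\sum_k \int (\tau-e_k(x))_+^{(n-2)/2}\,dx\cdot \omega_n/2$ for $P_1$ minus the analogous expression for $P_0$, after performing the $\xi$-integration $\int_{\xi^2<\tau-e_k(x)}d\xi=\omega_n(\tau-e_k(x))_+^{n/2}/n$ and differentiating the resolvent-trace coefficient appropriately.

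The main obstacle I anticipate is not any single estimate but the bookkeeping of the phase-space cutoff: one must check that the cutoff errors are genuinely $\mathcal{O}(h^\infty)$ \emph{uniformly in $\tau\in\mathbb R$}, which requires that $f(P_j(h))\mathcal{F}_h^{-1}\theta_\varepsilon(\tau-P_j(h))$ concentrates spectrally near $\tau_0$ (because $\mathcal{F}_h^{-1}\theta_\varepsilon$ is concentrated near $0$ only after pairing with $f$ supported near $\tau_0$) and microlocally near $\Sigma_{\tau_0}$, and that the constants in all $\mathcal{O}(h^\infty)$ bounds do not degenerate as $\tau\to\pm\infty$. This uniformity is handled exactly as in the scalar theory (polynomial bounds on $\mathcal{F}_h^{-1}\theta_\varepsilon$ combined with the spectral localization of $f(P_j)$), and the matrix structure does not interfere with it since everything has already been reduced, via Theorem \ref{WASG}, to symbol calculus on $\mathcal{H}_N$-valued symbols.
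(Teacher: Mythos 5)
There is a genuine gap: your proposal does not address the main technical obstacle of this theorem, which is the \emph{non-compactness} of $\Sigma_{\tau_0}$ (and of the characteristic set of $p_0$). You want to take $\chi\in C_0^\infty(\mathbb R^{2n};\mathbb R)$ equal to $1$ on a neighborhood of $\Sigma_{\tau_0}$ and then appeal to Theorem \ref{WASG}; but $\Sigma_{\tau_0}$ contains the cylinders $\{\xi^2 + e_k(x)=\tau_0\}$ which reach out to $|x|=\infty$, so no compactly supported $\chi$ can cover it, and the reduction to Theorem \ref{WASG} (which requires $\chi\in C_0^\infty$) simply does not go through as stated. Two of your supporting claims are moreover incorrect and are precisely what would be needed to dodge the issue. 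First, assumption \eqref{DA1} gives $\|V-V_\infty\|=\mathcal{O}(\langle x\rangle^{-\mu})$, which is polynomial decay in $x$, uniform in $h$; it is not an $\mathcal{O}(h^\infty)$ contribution, so you cannot throw away the region $\{|x|>R\}$ on those grounds. Second, the symbol $p_0=\xi^2 I_N+V_\infty$ \emph{does} have characteristic points in any interval $I\ni\tau_0$ whenever some $e_{k,\infty}<\tau_0$ (namely on $\{\xi^2=\tau-e_{k,\infty}\}$, a non-compact set); the hypothesis $\tau_0\notin\{e_{k,\infty}\}$ does not kill these points, it only ensures that $\xi\neq0$ there, i.e.\ that $\tau_0-p_0$ is \emph{microhyperbolic} at those points in a direction of the form $T=(0,\tilde T)$, not elliptic.

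The paper's proof works from the representation \eqref{weak dim} and follows the scheme of Theorem \ref{WASG}, but carefully treats the non-compactness. It first observes that $\Sigma_{\tau_0}\subset\{|\xi|\le R_0\}$ for $R_0$ large, and fixes $R_1$ so that $\inf_{|x|>R_1}|\det(V(x)-\tau_0)|>0$ (possible because $V\to V_\infty$ and $\tau_0\notin\sigma(V_\infty)$). The compact piece $\Sigma_{\tau_0}\cap\{|x|\le R_1\}$ is handled by Theorem \ref{WASG} as you describe. On $\Sigma_{\tau_0}\cap\{|x|>R_1\}$ the microhyperbolicity comes for free from $\nabla_\xi(|\xi|^2)=2\xi\neq 0$, and one covers $\{|\xi|\le R_0\}$ by finitely many open sets $o_j$ with associated directions $T_j=(0,\tilde T_j)$; on each one extends $p_1$ to a symbol $\tilde p_{1,j}\in S^0$ that is \emph{uniformly} microhyperbolic in the direction $T_j$ by Theorem \ref{main result appendix}, and then runs the Theorem \ref{WASG} argument, using Theorem \ref{th2} to control the modification outside the relevant region. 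Without this decomposition, a covering argument, and the extension theorem, your reduction to Theorem \ref{WASG} is not justified.
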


{\bf Remark 2.3.} 
According to Definition \ref{de1},  the assumption that  $\tau_0- p_1(x,\xi)$ is microhyperbolic at every point  $(x,\xi)\in \Sigma_{\tau_0}$ is equivalent to the following condition: For $x_0$
 with $ e_j(x_0) = \tau_0$, $j=1,...,N$, there exists $T_1\in \mathbb R^n$  and $C>0$ such that
$$\Big(\langle T_1,\nabla_x V(x_0)\rangle \omega, \omega\Big)\geq \frac{1}{C} \vert \omega\vert^2,\,\,\forall \omega \in {\rm ker}(V(x_0)-\tau_0I_N).$$
In particular, if $e_j(x_0)$ is a simple eigenvalue of $V(x_0)$, this is equivalent to $\nabla e_j(x_0)\not =0$.

As a consequence of Theorem \ref{ASW2}, we get a sharp remainder estimate for the  spectral shift  function corresponding to the pair $(P_1(h),P_0(h))$. 
\begin{theorem}[Weyl-type asymptotics]\label{WRA} {Assume that \eqref{DA1} holds with $V_{\infty}=0$.} Let $\tau_0\not=0$ such that
$\tau_0-p_1(x,\xi)$ is microhyperbolic at every point  $(x,\xi)\in \Sigma_{\tau_0}$.
There exists $I\in O_{\tau_0}$ such that 
\begin{equation}
s_{h}(\tau) = (2\pi h)^{-n} a_0(\tau) + \mathcal{O}(h^{-n+1}) \quad \text{as}\;\;h \searrow 0,
\end{equation} 
uniformly for $\tau\in  I$,
with
\begin{equation}\label{premier terme du developpement}
a_0(\tau)= \frac{\omega_n}{n} \sum_{k=1}^N \int_{\mathbb R^n} \big( (\tau-e_k(x))_+^{\frac{n}{2}}- \tau_+^{\frac{n}{2}} \big) dx.
\end{equation}

\end{theorem}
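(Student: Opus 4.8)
The plan is to derive the Weyl-type asymptotics for $s_h(\tau)$ from the weak asymptotics of Theorem \ref{ASW2} by a Tauberian argument. The starting point is the observation that the leading coefficient $\gamma_0(\tau)$ in \eqref{LTA} is (up to a positive constant) the derivative in $\tau$ of $a_0(\tau)$ in \eqref{premier terme du developpement}; indeed, differentiating $(\tau-e_k(x))_+^{n/2}$ gives $\frac{n}{2}(\tau-e_k(x))_+^{(n-2)/2}$, and $\frac{\omega_n}{n}\cdot\frac{n}{2} = \frac{\omega_n}{2}$, so $a_0'(\tau) = \gamma_0(\tau)$ (with $V_\infty = 0$, hence $e_{k,\infty}=0$ and $\tau_+^{(n-2)/2}$ matching the second term). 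So what must be shown is that the weak asymptotic expansion for $\langle s_h', \mathcal{F}_h^{-1}\theta_\varepsilon(\tau-\cdot)f(\cdot)\rangle$ can be upgraded to a pointwise estimate on $s_h(\tau)$ itself with remainder $\mathcal{O}(h^{-n+1})$.

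First I would record the consequences of Theorem \ref{ASW2}: fixing $f\in C_0^\infty(I;\mathbb R)$ equal to $1$ on a smaller neighborhood $I'$ of $\tau_0$, and $\theta$ equal to $1$ near $0$, the quantity $\sigma_h(\tau) := \langle s_h', \mathcal{F}_h^{-1}\theta_\varepsilon(\tau-\cdot)f(\cdot)\rangle$ — which is essentially $(s_h' * \mathcal{F}_h^{-1}\theta_\varepsilon)(\tau)$ localized by $f$ — satisfies $\sigma_h(\tau) = (2\pi h)^{-n}\big(\gamma_0(\tau) + \mathcal{O}(h^2)\big)$ uniformly on $\tau\in I'$, after choosing $\varepsilon$ small and fixed. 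Since $\mathcal{F}_h^{-1}\theta_\varepsilon$ has integral $\varepsilon^{-1}\widehat\theta(0)$-type normalization — more precisely $\int \mathcal{F}_h^{-1}\theta_\varepsilon(\tau)d\tau = \theta(0) = 1$ because $\theta_\varepsilon(0)=\theta(0)=1$ — this is a regularization of $s_h'$ at scale $h/\varepsilon \sim h$. The standard Tauberian theorem (as in Robert \cite{Robert1}, or Dimassi–Sj\"ostrand) then requires two ingredients: (i) the regularized derivative $\sigma_h(\tau)$ has the stated size $\mathcal{O}(h^{-n})$ with the leading term; (ii) an a priori one-sided bound showing $s_h'$ (or the counting-type function it integrates to) is, locally near $\tau_0$, bounded by $C h^{-n}$ in a suitable averaged sense, i.e. a polynomial bound on the number of eigenvalues / on $s_h$ in intervals of length $O(1)$.

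Accordingly the steps are: (1) State the Tauberian lemma in the form adapted here — if $g_h$ is a function of locally bounded variation with $g_h(\tau) = 0$ for $\tau$ below the spectrum, $g_h' \geq -C h^{-n}$ (or an appropriate monotonicity/sign control coming from the Birman–Krein relation and the fact that, below and away from thresholds, $s_h$ essentially counts eigenvalues), and $(g_h' * \psi_{h})(\tau) = (2\pi h)^{-n}(\gamma_0(\tau) + \mathcal{O}(h))$ for a suitable mollifier $\psi_h$ at scale $h$, then $g_h(\tau) = (2\pi h)^{-n} a_0(\tau) + \mathcal{O}(h^{-n+1})$. (2) Verify hypothesis (i) from Theorem \ref{ASW2} together with the localization argument: away from the finite threshold set $\{e_{k,\infty}\}=\{0\}$ and for $\tau$ near $\tau_0\neq 0$, the cutoff $f$ can be dropped modulo $\mathcal{O}(h^\infty)$ because $f(P_1)-f(P_0)$ tested against $\mathcal{F}_h^{-1}\theta_\varepsilon(\tau-\cdot)$ is, by Theorem \ref{th1} / microhyperbolicity, insensitive to the part of the spectrum away from $\tau$; this replaces $\sigma_h$ by the genuine regularized $s_h'$. (3) Verify hypothesis (ii): the needed a priori upper bound on $s_h$ in a neighborhood of $\tau_0$ follows from a trace-class estimate $\|f(P_1)-f(P_0)\|_{\mathrm{tr}} = \mathcal{O}(h^{-n})$ — itself a consequence of the symbolic calculus and the decay \eqref{DA1} with $\mu > n$ — combined with the normalization $s_h = 0$ below the spectrum, via the standard argument bounding variation of $s_h$ over $I$ by the trace norm. (4) Assemble: feeding (i)–(iii) into the Tauberian lemma yields the claimed formula, and the identification of the constant $a_0(\tau)$ is the elementary integration of $\gamma_0$ computed above, with the two normalizations matching at the lower endpoint (both vanish as $\tau\to -\infty$ on $I$, or at least agree up to the additive constant fixed by $s_h(\tau)=0$ below $\sigma(P_1(h))$).

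I expect the main obstacle to be ingredient (ii) — establishing the a priori polynomial bound on $s_h$ near $\tau_0$ with the right power $h^{-n}$, uniformly, since $\tau_0$ may lie in the continuous spectrum where $s_h$ is not simply a counting function and one cannot directly invoke eigenvalue monotonicity. The resolution is to go through the Birman–Krein / trace-norm route: $|s_h(\tau_2) - s_h(\tau_1)| \leq \frac{1}{\pi}\int\|\Im\, (P_1 - z)^{-1} - (P_0-z)^{-1}\|_{\mathrm{tr}}\,\dots$ type bounds, or more simply the well-known estimate $\big|\langle s_h', f\rangle\big| \le \|f(P_1)-f(P_0)\|_{\mathrm{tr}} \le C h^{-n}\|f\|_{\text{(some norm)}}$ applied to a family of test functions approximating indicators of subintervals of $I$; controlling the dependence of the constant on the width requires care but is routine given \eqref{DA1}. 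A secondary technical point is making the passage from the $f$-localized quantity in Theorem \ref{ASW2} to the true $s_h'$ rigorous uniformly in $\tau\in I$, which uses that $\tau_0$ is not a threshold so one genuinely has microhyperbolicity of $\tau - p_1$ on $\Sigma_\tau$ for all $\tau$ in a neighborhood, keeping the hypotheses of Theorems \ref{th1} and \ref{ASW2} in force throughout $I$.
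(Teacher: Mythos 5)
You correctly set up the Tauberian framework, and your computation $a_0'(\tau)=\gamma_0(\tau)$ linking \eqref{premier terme du developpement} with \eqref{LTA} is right. You also correctly flag the central obstacle: $s_h$ need not be monotone near $\tau_0$ when $\tau_0$ lies in the continuous spectrum, and the Tauberian sandwich argument of your steps (1)--(2) needs a \emph{one-sided} bound on the derivative, of the type $s_h'\ge -Ch^{-n}$.

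The gap is that your proposed verification of this ingredient in step (3) --- trace-norm estimates such as $|\langle s_h',f\rangle|\le \|f(P_1)-f(P_0)\|_{\mathrm{tr}}=\mathcal O(h^{-n})$, or the Birman--Krein resolvent-difference bound --- only controls the \emph{two-sided} variation of $s_h$, which is strictly weaker. A function with total variation $\mathcal O(h^{-n})$ whose $h$-scale regularization agrees with $(2\pi h)^{-n}\gamma_0$ can still oscillate at scale $h$ with amplitude comparable to $h^{-n}$, which destroys the claimed pointwise $\mathcal O(h^{-n+1})$ remainder. Taking an arbitrary Jordan decomposition of $s_h$ into monotone pieces does not help either, because those pieces inherit no weak asymptotics of the form \eqref{weak dim1}, so you have nothing to feed the Tauberian lemma for each part.

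What the paper actually does is invoke Robert's trick from \cite{Robert0}: one writes $s_h=s_1-s_2$, where each $s_i$ is \emph{itself} the spectral shift function of a pair of operators differing by a sign-definite perturbation (roughly, split $V$ into its positive and negative parts and insert an intermediate comparison operator). Each $s_i$ is then genuinely monotone in $\tau$, and --- the key point --- each $s_i$ satisfies the weak asymptotics of Theorem \ref{ASW2} in its own right, so the scalar Tauberian theorem applies to each term separately and the results subtract. This construction is what requires $V_\infty$ to be a scalar matrix (here $V_\infty=0$), which is precisely the hypothesis in the statement of Theorem \ref{WRA}. This decomposition, rather than a trace-norm bound, is the ingredient your proposal is missing.
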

As indicated in the introduction,  in the scalar case a complete asymptotic expansion in powers of $h$  of the derivative of the SSF  has been
obtained under a non-trapping condition on the classical trajectories corresponding to the energy surface $\Sigma_{\tau_0}$ ({see \cite{Rob}}). In the {present matrix-valued case}, the treatment is much more
complicated. In fact,  since the eigenvalues are not enough regular, the usual definition of the Hamilton flow  for a matrix-valued Hamiltonian function  does not make sense (see \cite{Kato}). For this reason,   we use here the notion of escape function. 

More precisely, we suppose that there exists a scalar {escape} function $G\in C^{\infty}(\mathbb R^{2n};\mathbb R)$ associated to $p_1$ at $\tau_0$, i.e.,
\begin{equation}\label{deff} 
\exists \: C>0, \text { s.t. } \{p_1,G\}(x,\xi) :=  \frac{\partial G}{\partial x} \cdot\frac{\partial p_1}{\partial \xi}  - \frac{\partial G}{\partial \xi} \cdot \frac{\partial p_1}{\partial x}  \geq C  , \quad \forall (x,\xi)\in \Sigma_{\tau_0},
\end{equation}
in the sense of hermitian matrices. 

In the scalar case $N=1$, it is well known that the above assumption is equivalent to the non-trapping condition on the energy $\tau_0$. In fact, if $\tau_0$ is non-trapping for the classical Hamiltonian $p_1$, one can construct an escape function $G\in C^{\infty}(\mathbb R^{2n};\mathbb R)$ satisfying \eqref{deff} (see for instance \cite{GM}, \cite{Wa1}, \cite{Wa2}, \cite{Wa3}). Conversely, if \eqref{deff} holds then one easily sees that $G$ is strictly increasing along the Hamiltonian flows associated to $p_1$ in $\Sigma_{\tau_0}$ which prevents the existence of trapped trajectories at $\tau_0$. We also point out that \eqref{deff} implies that $\tau_0-p_1(x,\xi)$ is microhyperbolic at every point $(x,\xi)\in \Sigma_{\tau_0}$ in the direction of the Hamiltonian vector field $(\partial_{\xi}G(x,\xi),-\partial_{x} G(x,\xi))$.

Now we can formulate the main result of this paper.

\begin{theorem}[Strong asymptotics]\label{main semiclassical}
Fix an energy $\tau_0>e_{N,\infty}$. Assume that \eqref{DA1} and \eqref{deff} are satisfied. Then, there exists $I\in O_{\tau_0}$  such that $s_{h}'(\cdot)$ has a complete asymptotic expansion of the form
\begin{equation}\label{main semiclassical asymptotics}
s_{h}'(\tau) \sim {(2\pi h)^{-n}} \sum_{j\geq 0} \gamma_{2j}(\tau)h^{2j} \quad \text{as}\;\;h \searrow 0,
\end{equation}
uniformly for $\tau\in I$, where the coefficients $\gamma_{2j}(\tau)$ are given in Theorem \ref{ASW2}.
\end{theorem}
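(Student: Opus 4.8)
The plan is to combine the weak asymptotics of Theorem \ref{ASW2} with a Tauberian argument in the spirit of \cite{Rob}, using the escape function \eqref{deff} to propagate the trace estimate of Theorem \ref{th1} to a long (polynomially growing) time interval. Recall that \eqref{deff} implies $\tau_0-p_1$ is microhyperbolic at every point of $\Sigma_{\tau_0}$ in the direction of the Hamilton field of $G$, so Theorem \ref{ASW2} applies and gives, for $\theta$ equal to $1$ near $0$ with small support, $\langle s_h',\mathcal{F}_h^{-1}\theta_\varepsilon(\tau-\cdot)f(\cdot)\rangle \sim (2\pi h)^{-n}f(\tau)\sum_j\gamma_{2j}(\tau)h^{2j}$ uniformly in $\tau$. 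The point of the strong asymptotics is to replace the smoothed quantity $\mathcal{F}_h^{-1}\theta_\varepsilon \ast s_h'$ by $s_h'(\tau)$ itself, which requires an a priori one-sided control on $s_h'$ (or on $s_h$) of the form $|s_h(\tau+\rho)-s_h(\tau)|\le C h^{-n}|\rho|$ for $\rho$ in an $h$-independent neighbourhood, i.e. a local polynomial bound on the counting-type quantity defining the SSF.

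First I would establish this a priori bound. Writing $s_h' = -\mathrm{tr}\,[f(P_1)-f(P_0)]$-type expressions localized in energy, one reduces (using \eqref{DA1}, the ellipticity of $P_j$ at infinity, and the fact that $P_j$ differ by a trace-class-after-resolvent perturbation) to a bound on $\mathrm{tr}\,(\psi^w f(P_1))$ for $\psi$ a phase-space cutoff near $\Sigma_{\tau_0}$, which is $\mathcal{O}(h^{-n})$ by the standard trace estimate $\|\chi^w\|_{\mathrm{tr}}=\mathcal{O}(h^{-n})$ already quoted from \cite[Theorem 9.4]{dimassi}. Equivalently one can invoke that Theorem \ref{WRA} (or its proof via Theorem \ref{ASW2} and a Tauberian theorem) already yields $s_h(\tau)=(2\pi h)^{-n}a_0(\tau)+\mathcal{O}(h^{-n+1})$ locally uniformly, hence the needed Lipschitz-type estimate with constant $\mathcal{O}(h^{-n})$ on the smooth principal part plus a controlled error; the monotonicity-type structure of the SSF (it is the difference of two nondecreasing spectral functions) then furnishes the one-sided bound needed for the Tauberian step.

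Next comes the core Tauberian argument. With $g_h(\tau):=(2\pi h)^{-n}\sum_{j=0}^k\gamma_{2j}(\tau)h^{2j}$ the putative expansion, set $r_h:=s_h'-g_h$. Theorem \ref{ASW2} says $\mathcal{F}_h^{-1}\theta_\varepsilon(\tau-\cdot)\ast r_h = \mathcal{O}(h^{k+1-n})$ uniformly (after absorbing the smooth $g_h$ into the weak asymptotics, which is legitimate since the $\gamma_{2j}$ are smooth and $\mathcal{F}_h^{-1}\theta_\varepsilon$ is an approximate identity at scale $h/\varepsilon$). A now-classical Tauberian lemma (see \cite{Rob}, \cite{Robert1}) converts a bound on the $\mathcal{F}_h^{-1}\theta_\varepsilon$-mollification of a function, together with an a priori $\mathcal{O}(h^{-n})$ derivative bound on its antiderivative, into a pointwise bound $r_h(\tau)=\mathcal{O}(h^{k+1-n}\cdot\varepsilon^{-1})+\mathcal{O}(h^{-n}\cdot(h/\varepsilon))$; choosing $\varepsilon$ a fixed small constant (which is allowed here precisely because the escape function lets us take the support of $\theta$ fixed while keeping the $\mathcal{O}(h^\infty)$ control for $h^{1-\delta}\le|t|\le\kappa$, cf. Remark 3.1 and the remark after Theorem \ref{th1}) gives $r_h(\tau)=\mathcal{O}(h^{k+1-n})$ uniformly for $\tau\in I$. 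Since $k$ is arbitrary and the odd coefficients vanish (by the parity in $h$ already recorded in Theorems \ref{WA1} and \ref{ASW2}, coming from the Weyl calculus and the $\xi\to-\xi$ symmetry of $p_1$), this is exactly \eqref{main semiclassical asymptotics}.

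The main obstacle I anticipate is the a priori bound ensuring the Tauberian inversion is lossless at the right order: one needs the mollified-to-pointwise passage to cost at most a bounded factor, which in turn forces the remainder in Theorem \ref{ASW2} to be uniform down to $\varepsilon$ fixed, and the antiderivative $s_h$ to be genuinely $\mathcal{O}(h^{-n})$-Lipschitz near $\tau_0$ with no hidden $h^{-n-1}$ loss from embedded spectrum of $P_1$ in $[e_{1,\infty},e_{N,\infty}]$ — this is where the hypothesis $\tau_0>e_{N,\infty}$ enters, placing $\tau_0$ above the thresholds so that the Birman–Krein side (scattering determinant) is smooth and the local spectral density is genuinely of size $h^{-n}$. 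Handling the cutoff away from $\Sigma_{\tau_0}$ (where microhyperbolicity may fail but the symbol $\tau_0-p_1$ is elliptic) and gluing it to the microhyperbolic region via Theorems \ref{th1}–\ref{th2} is routine but must be done carefully so that no region contributes worse than $\mathcal{O}(h^\infty)$ to the long-time part of the trace.
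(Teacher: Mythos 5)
Your plan is structurally different from the paper's and has a gap at precisely the place you flag as the main obstacle. The paper does \emph{not} run a standard Tauberian argument with a fixed mollification scale $\varepsilon$. It instead uses the escape function to prove, via analytic distortion (a resonance-theory argument, Lemma~\ref{ext an} and Lemma~\ref{4.3}), that $K(z;h)$ extends holomorphically from $\{\Im z>0\}$ into $\Gamma_M=\{\Re z\in I,\ \Im z>-M\zeta(h)\}$ for every $M$, with $K^{(k)}(z;h)=\mathcal O(h^{-n}\zeta(h)^{-k-1})$ there, hence
\begin{equation*}
s_h^{(k+1)}(\tau)=\mathcal O\bigl(h^{-n}\zeta(h)^{-k-1}\bigr)\quad\text{uniformly for }\tau\in I,\ \forall k.
\end{equation*}
This analytic extension serves two purposes simultaneously: it supplies the a priori pointwise bounds on \emph{all} derivatives of $s_h$ (not merely a Lipschitz bound on $s_h$), and, via Remark~3.1, it upgrades Theorem~\ref{th1} to $\varepsilon\in[\kappa,h^{-\nu}[$ for any $\nu$. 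One then replaces $\theta_\varepsilon$ ($\varepsilon$ small, fixed) by $\theta_{\tilde\varepsilon}$ with $\tilde\varepsilon=h^{-\nu}$ at a cost $\mathcal O(h^\infty)$, and a first-order Taylor expansion together with the bound $s_h''=\mathcal O(h^{-n}\zeta(h)^{-2})$ gives $\langle s_h',\mathcal F_h^{-1}\theta_{\tilde\varepsilon}(\tau-\cdot)f\rangle = s_h'(\tau)f(\tau)+\mathcal O(h^{\nu+1-n}\zeta(h)^{-2})$. Letting $\nu$ be arbitrary closes the argument.

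Your version keeps $\varepsilon$ a fixed small constant, so the mollifier $\mathcal F_h^{-1}\theta_\varepsilon$ has width $\sim h$. With only a Lipschitz-type bound $|s_h(\tau+\rho)-s_h(\tau)|\lesssim h^{-n}|\rho|$, the de-mollification step costs $\mathcal O(h^{-n}\cdot h/\varepsilon)=\mathcal O(h^{1-n})$, which is exactly the Weyl remainder of Theorem~\ref{WRA} and cannot produce a full asymptotic expansion of $s_h'$. Your displayed remainder $\mathcal O(h^{k+1-n}\varepsilon^{-1})+\mathcal O(h^{-n}\cdot h/\varepsilon)$ is still dominated by the second term once $k\ge 1$. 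To beat this, one needs either $\varepsilon$ growing polynomially in $1/h$ (shrinking the mollification scale), or higher-derivative control on $s_h$ to Taylor-expand to higher order; the paper obtains \emph{both} from the single analytic-extension lemma. In addition, your justification of the a priori bound ($\|\chi^w\|_{\mathrm{tr}}=\mathcal O(h^{-n})$, or the monotone decomposition of $s_h$) gives control only on integrated/averaged quantities, not a pointwise bound on $s_h'(\tau)$; the latter is not elementary and is precisely what Lemma~\ref{ext an} establishes. So the two ideas you need to add are: (i) prove the analytic continuation of $K$ across the real axis to depth $-M\zeta(h)$ using the escape function and the distortion $U_\omega$, deducing bounds on $K^{(k)}$ and hence on $s_h^{(k+1)}$; and (ii) exploit Remark~3.1 to let the time-cutoff support reach $|t|\le h^{-\nu}$, i.e.\ take $\tilde\varepsilon=h^{-\nu}$, rather than fixing $\varepsilon$.

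Minor remarks: the odd-coefficient vanishing you invoke is indeed recorded in Theorems~\ref{WA1} and \ref{ASW2} and carries over; the hypothesis $\tau_0>e_{N,\infty}$ enters in the paper not through smoothness of the scattering determinant but because it makes $x\cdot\xi$ an escape function at infinity (so $\tilde G=G-F\cdot\xi$ is compactly supported) and guarantees \eqref{deff} for $p_0$ as well.
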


\subsection{Examples and further generalizations} 

First  observe that, for $G(x,\xi)=x\cdot \xi$,  \eqref{deff} is equivalent to
\begin{equation}\label{generateur des dilatations}
2(\tau_0 -e_k(x))-x\cdot \nabla V(x) \geq C,\,\,\,\,\, \forall x\in \{x\in \mathbb R^n;\: \tau_0 - e_k(x) \geq 0 \}, k=1,\cdots N.
\end{equation}
Thus, under the assumption  \eqref{DA1}, the asymptotics  \eqref{main semiclassical asymptotics}  holds near any  large $\tau_0$ with
 $$\tau_0>{\rm sup}_{{x\in \mathbb R}^n} \Vert \frac{x\cdot \nabla V(x)}{2}\Vert_{N\times N}+{\rm sup}_{x\in {\mathbb R}^n}\Vert V(x)\Vert_{N\times N}.$$

 Notice that our results extend to the case of potentials depending on $h$, i.e. $V(x;h) = V_0(x) + h V_1(x;h)$. In such a case, we assume \eqref{DA1} uniformly with respect to $h$.
In particular, as a simple example, consider the case where $V_0(x)$ is  a diagonal matrix ${\rm diag}\, (e_1(x),\ldots,e_N(x))$.
If each $e_j(x)$ satisfies
$$
2 (\tau_0 -e_j(x)) - x \cdot\nabla_x e_j(x) \geq c_j> 0 , \quad  \forall x\in \{x\in \mathbb R^n;\: \tau_0 - e_j(x) \geq 0 \},
$$
then \eqref{generateur des dilatations} is satisfied for $h$ small enough and  \eqref{main semiclassical asymptotics}   holds.

More generally, we can treat the spectral shift function associated to a pair of self-adjoint $h$-pseudodifferential operators $(P_1(h),P_0(h))$ provided that the SSF is well defined and the existence of a scalar escape function holds. 
\subsection{Outline of the proofs}

The purpose of this subsection is to provide an outline of the proofs. 

As indicated in the introduction, our method is time-independent. The starting point is the functional calculus of $h$-pseudodifferential operators based on the Helffer-Sj\"ostrand formula (see \cite[Ch. 8]{dimassi}). By this formula, the main object to study  will be the integral of the form
\begin{equation}\label{basic expression}
{\mathcal I}(\tau,\varepsilon;h) =  - \frac{1}{\pi} \int_{\mathbb C} \bar{\partial} \tilde{f}(z) \mathcal{F}_{h}^{-1}\theta_{\varepsilon} (\tau-z) K(z;h) L(dz), \quad \tau \in \mathbb R,
\end{equation}
where $L(dz)= dxdy$ is the Lebesgue measure on $\mathbb C \sim \mathbb R^2$. 
Here, $\tilde{f} \in C_0^{\infty}(\mathbb C)$ denotes an almost analytic extension of $f\in C_0^{\infty}(\mathbb R; \mathbb R)$ (see \cite[Ch. 8]{dimassi} and also \cite{Hormander}), i.e., 
\begin{equation}\label{prop1}
\tilde{f}_{| \mathbb R} = f,
\end{equation}
\begin{equation}\label{prop2}
\bar{\partial} \tilde{f}(z) = \mathcal{O}(|\Im z|^{\infty}),
\end{equation}
and $K$, which in fact is the trace of an operator depending on the resolvent,  is a complex-valued analytic function defined  in a neighborhood of ${\rm supp}\: \tilde{f}$ except on the real axis,
 with an estimate
\begin{equation}\label{estimate G}
K(z;h) = \mathcal{O}\left( h^{-n} |\Im z|^{-2} \right).
\end{equation}
The right hand side of \eqref{basic expression} is independent of the particular choice of the almost analytic extension $\tilde{f}$. In particular,
let $\psi_L(z)$ be a function on $\mathbb C$ defined by
\begin{equation}
\label{L}
\psi_L (z) = \psi (\frac{\Im z}{L}),\quad L>0,\quad C_0^{\infty}(\mathbb R; \mathbb R)\ni \psi(t)=
\left\{
\begin{array}{l}
1\quad (|t|\leq 1) \\[8pt]
 0\quad (|t|\geq 2).
\end{array}
\right.
\end{equation}
Then $\tilde{f}\psi_L$ is also an almost analytic extension of $f$, and we have
\begin{equation}\label{basic expression1}
{\mathcal I}(\tau,\varepsilon;h) =  - \frac{1}{\pi} \int_{\mathbb C} \bar{\partial}\left(\tilde{f}\psi_L\right)(z) \mathcal{F}_{h}^{-1}\theta_{\varepsilon} (\tau-z) K(z;h) L(dz).
\end{equation}
From now on, $M>0$ is a constant independent of $h$ and we put
\begin{equation}\label{zeta}
\zeta(h) := h \log(\frac{1}{h}), \quad L := \frac{M \zeta(h)}{\varepsilon}.
\end{equation}

We begin with a general remark  on the integral given by the right hand side of  \eqref{basic expression1}.
From  \eqref{prop2} and the definition of $\psi_L$, we deduce 
\begin{equation}\label{til}
\bar{\partial}\big( \tilde{f} \psi_L\big)(z) = 
\mathcal{O}(h^{\infty}) \psi_L(z) + \mathcal{O}(\frac{1}{L} ) \tilde{f}(z)1_{[1,2]\cup [-2,-1]}\big(  \frac{ \Im z}{L} \big),
\end{equation}
 which together with \eqref{-11} yields ${\mathcal I}(\tau,\varepsilon;h)\equiv {\mathcal  I}_+(\tau,\varepsilon;h) +{\mathcal  I}_-(\tau,\varepsilon;h)$, uniformly for $0<\varepsilon \leq c h^{-\nu}$ (where $\nu$ is a fixed constant). Here
\begin{equation}\label{D1111}
{\mathcal I}_\pm(\tau,\varepsilon;h):=-\frac{1}{\pi}\int_{ \{\pm\Im z>L\}} \bar{\partial}\left( \tilde{f} \psi_L\right)(z) \mathcal{F}_{h}^{-1}\theta_{\varepsilon}(\tau-z) \, K(z;h)L(dz) .
\end{equation}
We recall that the notation $A\equiv B$ means $A-B={\mathcal O}(h^\infty)$.
The behavior of the function $\mathcal{F}_{h}^{-1}\theta_{\varepsilon}(\tau-z)$ depends on the support of $\theta$. For general $\theta$ with support in $]-1,1[$, we have
\begin{equation}
\label{-11}
\mathcal{F}_{h}^{-1}\theta_{\varepsilon}(\tau-z)={\mathcal O}(\frac \varepsilon h e^{\frac{\varepsilon |\Im z|}h}). 
\end{equation}
In particular, in the support of $\psi_L$, we have
\begin{equation}\label{nouvelle formule}
\mathcal{F}_{h}^{-1}\theta_{\varepsilon}(\tau-z)={\mathcal O}(\varepsilon h^{-2M-1}). 
\end{equation}
For $ \theta$ with support only in ${\mathbb R}_+$, say  in $]\frac{1}{2},1[$, we have  
\begin{equation}\label{PW}
\mathcal{F}_{h}^{-1}\theta_{\varepsilon}(\tau-z) = \left\{ \begin{array}{rcl}
 \mathcal{O}\big(\frac{\varepsilon}{h} e^{\frac{\varepsilon \Im z}h} \big), \quad \Im z >0, \\[8pt]
 \mathcal{O}\big(\frac{\varepsilon}{h} e^{\frac{\varepsilon \Im z}{2h}} \big), \quad \Im z < 0.
\end{array}\right.
\end{equation}
This latter estimate implies in particular that
\begin{equation}\label{D1113}
{\mathcal I}_-(\tau,\varepsilon;h)={\mathcal O}(\varepsilon^2 h^{\frac{M}{2}-n-2}),
\end{equation}
which means ${\mathcal I}_-={\mathcal O}(h^\infty)$ if $\varepsilon$ is at most of polynomial order in $h$ and $M$ is arbitrary.

Let $\theta,\varepsilon$ be as in Theorem \ref{th1} and assume that  $\tau_0 - H(x,\xi)$ is uniformly microhyperbolic on ${\mathbb R}^{2n}$ in  the direction $T$. Without any loss of generality we may assume that $\theta\in C_0^{\infty}(]\frac{1}{2},1[;\mathbb R)$. According to  the  Helffer-Sj\"ostrand formula  (see  \eqref{formulaM}, \eqref{HS}), the left hand side of \eqref{main estimate 1} can be written as \eqref{basic expression}  with 
$$K(z;h):= (\lambda_0-z)^{k-1} {\rm tr}\big( A^w (z-H^w)^{-1}(\lambda_0-H^w)^{-(k-1)} \big),$$ 
where $\lambda_0 < \inf(\sigma(H^w))$ and $k \in \mathbb N$ is large enough so that $A^w (z-H^w)^{-1}(\lambda_0-H^w)^{-(k-1)}$ is of trace class. 

As explained above, we have ${\mathcal I}_-={\mathcal O}(h^\infty)$. To deal with ${\mathcal I}_+$, we conjugate the operator $A^w (z-H^w)^{-1}(\lambda_0-H^w)^{-(k-1)}$ with the unitary operator $U_t := e^{\frac{it}{h}(T_2\cdot x - T_1\cdot h D_x)}$, $t\in \mathbb R$.
Here $T=(T_1,T_2)$ is the direction of the uniform microhyperbolicity of $\tau_0-H(x,\xi)$. Then the function
$$
K_{t}(z;h) := 
(\lambda_0-z)^{k-1}  {\rm tr}\big( A_{t}^w (z- H_{t}^w)^{-1} (\lambda_0-H_{t}^w)^{-(k-1)} \big) , 
$$
with 
$H_t^w := U_t H^w(x, hD_x) U_{t}^{-1}=H^w(x+tT_1, hD_x+tT_2)$ etc.,
is invariant with respect to the change of  real $t$ and coincides with  $K(z;h)$ thanks to the cyclicity of the trace.

Now, replacing $H,A$ by their almost analytic extensions ($\tilde{H}$ and $\tilde{A}$), we extend this function to complex $t$. The extended function $\tilde K_t(z;h)$ is defined in $\{z\in \mathbb C; \, \Im z>C_0\Im t\}$ for some positive constant $C_0$  independent of $M$,$\varepsilon$ and $h$.
We fix $t_0=iL/C_0$. Then we see that $\tilde K_{t_0}(z;h)$ is equal to $K(z;h)$ modulo ${\mathcal O}\,(h^\infty)$ in the domain $\{ \Im z>L\}$.

The uniform microhyperbolic condition enables us to continue $\tilde K_{t_0}(z;h)$ analytically to the lower half plane with $\Im z>-cL$ for a positive contant $c$.
In fact, the imaginary part of the Weyl symbol of $z-\tilde H_{t_0}$  stays positive definite in such a region, and the sharp G\aa{}rding inequality
guarantees the invertibility of the operator.

In the integral expression \eqref{D1111} of ${\mathcal I}_+$, we can replace, modulo ${\mathcal O}\,(h^\infty)$, $K(z;h)$ by $\tilde K_{t_0}(z;h)$ and then the integral domain
$\Im z>L$ by $\Im z <-cL$ by the Cauchy theorem. Thus the estimate ${\mathcal I}_+={\mathcal O}(h^\infty)$  is reduced to the same argument as for ${\mathcal I}_-$, and 
we conclude ${\mathcal I}={\mathcal O}(h^\infty)$. This gives Theorem \ref{th1}.

Let us now outline the proof of Theorem \ref{th2}. By  Helffer-Sj\"ostrand formula, the left hand side of  \eqref{man2}  can be written as  \eqref{basic expression} with $K(z;h) = {\rm tr}\,( \chi^w [  (z-H_j^w)^{-1} ]_0^1)$.  Using that
${\rm dist} ({\rm supp}\,\chi, {\rm supp}\, (H_0- H_1) )>0$,  we prove by some exponentially weighted resolvent estimates
$$
K(z;h) = \mathcal{O}\left(\frac{h^{\frac{M}{\varepsilon C}- n}}{L^2} \right),
$$
uniformly for $|\Im z|\geq L$, where $C>0$ is a constant independent of $h,M,\varepsilon$. 
Combining this with \eqref{nouvelle formule}, we get Theorem \ref{th2}
provided that $\varepsilon>0$ is small enough.

Theorem \ref{WASG} is a consequence of the two previous theorems and the symbolic calculus of $h$-pseudodifferential operators. Assuming that $\chi$ is supported in a small neighborhood of a fixed point $(x_0,\xi_0)\in \mathbb R^{2n}$ (by a partition of unity there is no loss of generality in doing so) and using the fact that changing $H$ outside the support of $\chi$ leads to an error of order $\mathcal{O}(h^{\infty})$ in the trace formula ${\rm tr} (\chi^w f(H^w) \mathcal{F}_h^{-1}\theta(\tau-H^w))$ (according to Theorem \ref{th2}), together with Theorem \ref{main result appendix}, we may assume that there exists $I\in O_{\tau_0}$ such that $\tau - H$ is uniformly microhyperbolic with respect to $(x,\xi)\in \mathbb R^{2n}$ and $\tau \in I$.

Now, fix $\tilde \varepsilon=h^{1-\delta_0}$ with $\delta_0\in ]0,1/2[$.  Applying Theorem  \ref{th1}, we obtain
\begin{equation}\label{man13}
{\rm tr}\; \left(  \chi^w  f (H^w) \mathcal{F}_{h}^{-1}\theta (\tau-H^w)  \right) \equiv {\rm tr}\; \left(  \chi^w  f (H^w) \mathcal{F}_{h}^{-1}\theta_{\tilde\varepsilon} (\tau-H^w)  \right).
\end{equation}
In fact we can represent the difference $\theta-\theta_{\tilde\varepsilon}$ as a finite sum of functions $\tilde{\theta}_\varepsilon$ appearing in Theorem \ref{th1}
(with  $\varepsilon\in [\tilde\varepsilon, \frac{1}{C}[$). 
The principal significance of  \eqref{man13} is that  it allows one to use the  standard $h$-pseudodifferential calculus  and get the asymptotic expansion in powers of $h$ given in Theorem \ref{WASG} just by symbolic calculus (see \cite[Ch. 7-8]{dimassi}) . 
To see  this, we  first recall that for $\vert \Im z\vert>h^\delta$ (with $\delta\in ]0,1/2[$) the resolvent $(z-H^w)^{-1}$ is an $h$-pseudodifferential operator and its corresponding symbol admits  an asymptotic expansion in powers of $h$ (see \eqref{Reso}). Combining this with the fact 
$${\rm tr}\; \left(  \chi^w  f (H^w) \mathcal{F}_{h}^{-1}\theta_{\tilde\varepsilon} (\tau-H^w)  \right)$$
$$\equiv - \frac{1}{\pi} \int_{\{\vert \Im z\vert>h^{\delta_0}\}}   \bar{\partial}  \left(       \tilde f    \psi(\frac{\Im z}{h^{\delta_0}}) \right)(z) \mathcal{F}_{h}^{-1}\theta_{\tilde\varepsilon}(\tau-z) {\rm tr}\left( \chi^w (z-H^w)^{-1} \right) L(dz),
$$
we see that the left hand side of \eqref{man13} has a complete asymptotic expansion in powers of $h$, which yields Theorem \ref{WASG}.

Turn now to the main ideas in the proofs of the results of subsection \ref{SSF} concerning our application to the SSF. Theorem \ref{WA1} is a simple consequence of the $h$-pseudodifferential symbolic calculus while Theorems \ref{ASW2} and \ref{WRA} are consequences of Theorem \ref{WASG} and standard Tauberian arguments combined with a trick of Robert \cite{Robert0} respectively.

Finally, we sketch the proof of our main result which is Theorem \ref{main semiclassical}. According to \eqref{LK} and the Helffer-Sj\"ostrand  formula we have
$$- \langle s_{h}'(\cdot), \mathcal{F}_{h}^{-1}\theta_{\varepsilon}(\tau- \cdot)f(\cdot)\rangle={\mathcal I}(\tau,\varepsilon;h),$$
with
$$
K(z;h)=(z-\lambda_0)^q {\rm tr}\bigg( \big[(P_{j}(h)-\lambda_0)^{-q}(P_{j}(h)-z)^{-1} \big]_0^1 \bigg).
$$
Here $\lambda_0<\inf \sigma(P_1(h))$ and $q \in \mathbb N$ is large enough, see \eqref{weak dim}. 

First, suppose that $0$ is not contained in the support of $\theta$. Then, ${\mathcal I}_-(\tau,\varepsilon;h)={\mathcal O}(h^\infty)$ uniformly for $\varepsilon\in ]h^{1-\delta}, h^{-\nu}[$ as before.

To deal with ${\mathcal I}_+(\tau,\varepsilon;h) $, we adapt an idea from the theory of resonance. More precisely, 
under the existence of an escape function near $\tau_0$ (assumption   \eqref{deff}), we prove by  the analytic distortion method that $K(z;h)$ extends analytically from the upper half plane to the lower one with $ \Im z>-M\zeta(h)$ for all $M>0$. 

From this, we deduce two important consequences. First,
\begin{equation}\label{Oubl100001}
s_{h}^{''}(\tau) = \mathcal{O}\big(h^{-n}\zeta(h)^{-2}\big).
\end{equation}
uniformly for $\tau$ near $\tau_0$.
Second, the same argument as  in the proof of  Theorem \ref{th1} leads to
$ {\mathcal I}_+(\tau,\varepsilon;h) ={\mathcal O}(h ^{\infty}),$
uniformly for $\varepsilon\in ]h^{1-\delta}, h^{-\nu}[$. Hence we obtain
\begin{equation}\label{Oubl10000}
\langle s_{h}'(\cdot), \mathcal{F}_{h}^{-1}\theta_{\varepsilon}(\tau- \cdot)f(\cdot)\rangle={\mathcal O}(h ^{\infty}).
\end{equation}

Now, we assume that $\theta$ is equal to one near zero, and let $\varepsilon$ be small and independent of $h$ and $\tilde\varepsilon=h^{-\nu}$. 
As in the proof of  \eqref{man13},  the formula \eqref{Oubl10000}  yields 
$$\langle s_{h}'(\cdot), \mathcal{F}_{h}^{-1}\theta_{\varepsilon}(\tau- \cdot)f(\cdot)\rangle \equiv \langle s_{h}'(\cdot), \mathcal{F}_{h}^{-1}\theta_{\tilde\varepsilon}(\tau- \cdot)f(\cdot)\rangle.$$
By \eqref{LK} and   \eqref{weak dim1}, the left hand side of the above equality has an asymptotic expansion in powers of $h$.  
The right hand  side is written,  by Taylor's formula  and \eqref{Oubl100001}, 
$$\langle s_{h}'(\cdot), \mathcal{F}_{h}^{-1}\theta_{\varepsilon}(\tau- \cdot)f(\cdot)\rangle= s_{h}'(\tau)f(\tau)+ \mathcal{O}(h^{\nu+1-n}\zeta(h)^{-2}).$$
Since $\nu$ is arbitrary, this ends the proof of Theorem \ref{main semiclassical} by taking $f=1$ near $\tau_0$.

\section{Proofs of the results on the semiclassical trace formula}\label{Proofs of the results STF}
In this section, we prove the results concerning the semiclassical trace formula. Throughout our proofs, when it is not precised, we let $C$ denotes a positive constant that may take different values, but is always independent of $\varepsilon, h$ and $M$.

\subsection{Proof of Theorem \ref{th1}}{
Writing $\theta = \theta_1+\theta_2$, with ${\rm supp}\: \theta_1 \subset]0,+\infty[$ and ${\rm supp}\: \theta_2 \subset ]-\infty,0[$, we may assume that ${\rm supp}\: \theta \subset ]\frac{1}{2},1[$.


  For $\tau\in \mathbb R$ and $\varepsilon>0$, we define 
\begin{equation}\label{TrFor}
{\mathcal I}(\tau,\varepsilon;h) := {\rm tr}\; \left(  A^w f\big(H^w\big) \mathcal{F}_{h}^{-1}\theta_{\varepsilon}\big(\tau-H^w\big) \right).
\end{equation}
Let $\tilde{f}$ be an almost analytic extension of $f$ satisfying \eqref{prop1} and \eqref{prop2} with ${\rm supp} \tilde{f} \subset \{z\in \mathbb C;\: |\Im z|\leq 1\}$.
If $g$ is real analytic in a neighborhood of the support of $\tilde f$, then we have, by the Helffer-Sj\"ostrand formula (see \cite[Ch. 8]{dimassi}),  
\begin{equation}\label{HeSj}
f(H^w)g(H^w)= - \frac{1}{\pi} \int_{\mathbb C} \bar{\partial} \tilde{f}(z) \, g(z) \,  (z-H^w)^{-1}L(dz). 
\end{equation}

Let $\lambda_0\in\mathbb R$ be fixed such that $\lambda_0<{\rm inf}(\sigma(H^w))$ and set, for $ \Im z\not=0$,
\begin{equation} \label{formulaM}
K(z;h) := (\lambda_0-z)^{k-1} {\rm tr}\big( A^w (z-H^w)^{-1}(\lambda_0-H^w)^{-(k-1)} \big).
\end{equation}
Then,
using \eqref{TrFor}, \eqref{HeSj}  and \eqref{formulaM} with $g(z) = (\lambda_0-z)^{k-1}\mathcal{F}_{h}^{-1}\theta_{\varepsilon}(\tau-z)$, we obtain
\begin{equation}\label{HS}
{\mathcal I}(\tau,\varepsilon;h) = - \frac{1}{\pi} \int_{\mathbb C} \bar{\partial} \tilde{f}(z) \mathcal{F}_{h}^{-1}\theta_{\varepsilon}(\tau-z)  \, K(z;h) L(dz). 
\end{equation}
Let $L$ and $\psi_L$ be defined by \eqref{zeta} and \eqref {L}. We write
\begin{equation}\label{D3}
{\mathcal I}= {\mathcal I}_+ +{\mathcal  I}_-,\quad
{\mathcal I}_{\pm}:= - \frac{1}{\pi} \int_{\{\pm \Im z>0\}} \bar{\partial}( \tilde{f}\psi_L)(z) \mathcal{F}_{h}^{-1}\theta_{\varepsilon}(\tau-z) \, K(z;h) L(dz).
\end{equation}
Since the support of $\theta$ is included in $]\frac{1}{2},1[$, it follows from \eqref{D1113} that 
\begin{equation}
\label{i-}
{\mathcal I}_-(\tau,\varepsilon;h) \equiv 0,
\end{equation}
uniformly for $\tau\in \mathbb R$ and $\varepsilon\in [h^{1-\delta},\kappa[$, for all $\kappa>0$.

Let us now turn to the study of ${\mathcal I}_{+}(\tau,\varepsilon;h)$. By assumption, there exists $T=(T_1,T_2)\in \mathbb R^{2n}$ and $I_{\tau_0}\in O_{\tau_0}$ such that $\tau - H(x,\xi)$ is uniformly microhyperbolic in the direction $T$ with respect to $(x,\xi) \in \mathbb R^{2n}$ and $\tau \in I_{\tau_0}$.

For $t\in \mathbb R$, we define  the unitary operator
$$
U_t:= e^{\frac{it}{h}(T_2\cdot x-T_1\cdot hD_x)} .
$$
Clearly, we have 
\begin{equation*}
H_t^w := U_t^{-1} H^w(x,h D_x) U_t = H^w\big((x,h D_x)+tT\big)= H^w\big(x+tT_1,h D_x+tT_2\big),
\end{equation*}
\begin{equation*}
A_t^w := U_t^{-1} A^w(x,h D_x) U_t =  A^w\big((x,h D_x)+tT \big) = A^w\big(x+tT_1,h D_x+tT_2\big).
\end{equation*}
Let $\tilde{H},\tilde{A}$ be two almost analytic extensions of $H$ and $A$, respectively, which are bounded together with all theirs derivatives. Put for complex $t$ with small imaginary part
$$
\tilde{H}_t^w:= \tilde{H}^{w}((x,h D_x)+t T) \quad \text{and} \quad \tilde{A}_t^w:= \tilde{A}^{w}((x,h D_x)+t T).
$$
By Taylor's formula with respect to $\Im t$, we have 
\begin{eqnarray}\label{Cons Taylor}
z- \tilde{H}((x,\xi)+t T)  &=& z- \tilde{H}\big((x,\xi) + \Re t T + i \Im t T \big) \nonumber\\
& = & z- H\big((x,\xi) + \Re t T\big) - i \Im t \langle \nabla_{x,\xi} H((x,\xi)+\Re t T), T \rangle + \mathcal{O}(|\Im t|^2).
\end{eqnarray}
Thus, one easily sees by using the Calder\'on-Vaillancourt theorem (see \cite[Theorem 7.11]{dimassi}) that there exists a constant $C_0>0$ (depending only on the $L^{\infty}$-norms of a finite numbers of derivatives of $H$) such that $(z-\tilde{H}_t^w)^{-1}$ exists for $|\Im z|\geq C_0 |\Im t|$. Set 
$$
 \tilde{K}_t(z;h):=(\lambda_0-z)^{k-1}  {\rm tr}\big( \tilde{A}_t^w (z- \tilde{H}_t^w)^{-1} (\lambda_0-\tilde H_t^w)^{-(k-1)} \big).
$$
Using that
$\overline{\partial_t} \tilde{A}_t$, $\overline{\partial_t} \tilde{H}_t={\mathcal O}(\vert \Im t\vert^\infty)$, we obtain, uniformly on $\{z\in \mathbb C;\; |\Im z|\geq C_0 |\Im t|\}$,
\begin{equation}\label{estima}
\bar{\partial}_t \tilde{K}_t(z;h) = \mathcal{O}\bigg( \frac{|\Im t|^{\infty}}{|\Im z|^2} \bigg).
\end{equation}

On the other hand, since $U_t$ is unitary for $t\in \mathbb R$, it follows from the cyclicity of the trace that $\tilde{K}_t$ is independent of  $\Re t$. This implies 
\begin{equation}
\bar{\partial}_t \tilde{K}_t(z;h) = \frac{i}{2} \partial_{\Im t} \tilde{K}_t(z;h), \quad {\rm and}\quad
\tilde{K}_t(z;h) = K(z;h) , \quad \forall t\in \mathbb R.
\end{equation}
We have, uniformly for $|\Im z| \geq C_0 |\Im t|$, 
$$K(z;h) - \tilde{K}_{i \Im t}(z;h) = \tilde{K}_{\Re t}(z;h) - \tilde{K}_{\Re t + i\Im t}(z;h)= -\int_0^{\Im t} \frac{d}{ds}\tilde{K}_{\Re t + i s}(z;h) ds =\mathcal{O}\bigg( \frac{|\Im t|^{\infty}}{|\Im z|^2} \bigg).
$$

Fix $t_0 = iL/C_0$. By the preceding estimate, we have, uniformly for $|\Im z| \geq  L$,
\begin{equation}\label{iuy}
K(z;h) - \tilde{K}_{t_0}(z;h) = \mathcal{O}(h^{\infty}).
\end{equation}
In the expression \eqref{D3} of $I_+$,
 one sees from (\ref{til})  and (\ref{PW})    that the restriction of the integral to the domain $0<\Im z \leq L$ is $\mathcal{O}(h^{\infty})$. Therefore, by (\ref{iuy}), we get 
\begin{eqnarray*}
{\mathcal I}_+(\tau,\varepsilon;h) &\equiv & - \frac{1}{\pi} \int_{ \{\Im z> L\}} \bar{\partial}\big( \tilde{f}\psi_L\big)(z) \mathcal{F}_{h}^{-1}\theta_{\varepsilon}(\tau-z) K(z;h) L(dz) \\
& \equiv  & - \frac{1}{\pi} \int_{ \{ \Im z>L \}} \bar{\partial}\big( \tilde{f}\psi_L\big)(z) \mathcal{F}_{h}^{-1}\theta_{\varepsilon}(\tau-z) \tilde{K}_{t_0}(z;h) L(dz).
\end{eqnarray*}
\begin{lemma}
Let $t _0= iL/C_0=\frac{iM}{C_0 \varepsilon}\zeta(h)$  be as above. The function $z\mapsto
 \tilde{K}_{t_0}(z;h) $
extends as a holomorphic function to the zone $\Im z\ge - {|t_0|\over 2}$.
\end{lemma}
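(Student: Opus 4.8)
The plan is to show that the operator $z - \tilde H^w_{t_0}$ remains invertible not only for $\Im z \geq |t_0| = C_0^{-1}L$ but all the way down to $\Im z \geq -|t_0|/2$, and that the trace defining $\tilde K_{t_0}$ is holomorphic there. The starting point is the Taylor expansion \eqref{Cons Taylor}: with $t_0 = iL/C_0$ purely imaginary, the Weyl symbol of $z - \tilde H^w_{t_0}$ is
\[
z - H(x,\xi) - i\tfrac{L}{C_0}\langle \nabla_{x,\xi} H(x,\xi), T\rangle + \mathcal{O}\!\left(\tfrac{L^2}{C_0^2}\right),
\]
and its imaginary part is $\Im z\, I_N - \tfrac{L}{C_0}\langle T, \nabla_{x,\xi} H\rangle + \mathcal{O}(L^2/C_0^2)$. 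First I would record that $\tau_0 - H$ (hence, shrinking the interval, $\tau - H$ for $\tau \in I_{\tau_0}$) is uniformly microhyperbolic in the direction $T$, so \eqref{MHC} gives $\langle T, \nabla_{x,\xi}H\rangle \geq C_0' I_N - C_1 H^2$ for suitable constants. The idea is then to pair $z - \tilde H^w_{t_0}$ against a suitable spectral decomposition of $H^w$: on the part of the spectrum of $H^w$ near $\tau$ the microhyperbolic lower bound makes $\Im(z - \tilde H^w_{t_0})$ strongly positive of size $\sim L$ once the $-C_1 H^2$ term is small, while on the part of the spectrum away from $\tau$ the real part $z - H^w$ is bounded below by a positive constant (here one uses that $f$, and hence the weight in $K$, localizes near $\tau_0$; more precisely $\lambda_0 < \inf\sigma(H^w)$ makes the factors $(\lambda_0 - \tilde H^w_{t_0})^{-(k-1)}$ harmless and one only needs to control $(z - \tilde H^w_{t_0})^{-1}$ composed against the spectral projector of $H^w$ onto a neighborhood of $\mathrm{supp}\, f$).

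Concretely, the key step is the sharp G\aa{}rding inequality (or its matrix-valued version, as in \cite[Ch.~7]{dimassi}): since the symbol of $\tfrac{1}{L}\,\Im(z - \tilde H^w_{t_0}) + \tfrac{\Im z}{L} \cdot(\text{stuff})$ plus a multiple of the symbol of $(H^w - \tau)^2$ is a positive definite hermitian symbol in $S^0$ (with a uniform lower bound independent of $h, M, \varepsilon$), we get
\[
\Im\big( (z - \tilde H^w_{t_0})\, u, u\big) \geq c\, L \|u\|^2 - C\,\|(H^w - \tau)u\|^2 + \Im z\,\|u\|^2 + \mathcal{O}(h)\|u\|^2 - \mathcal{O}(L^2)\|u\|^2,
\]
uniformly. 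Combined with the coercivity of $\Re(z - H^w)$ on the range of $1 - \mathbf{1}_{|H^w - \tau| \leq \rho}$ for a fixed small $\rho$, a standard two-region argument (split $u = u' + u''$ by that spectral projector) yields $\|(z - \tilde H^w_{t_0})u\| \geq c'\,L\,\|u\|$ for $\Im z \geq -|t_0|/2 = -L/(2C_0)$ once $h$ is small enough so that $\mathcal{O}(h)$ and $\mathcal{O}(L^2)$ terms (recall $L = M\zeta(h)/\varepsilon \to 0$) are absorbed. The same estimate applied to the adjoint gives surjectivity, hence $(z - \tilde H^w_{t_0})^{-1}$ exists and is $\mathcal{O}((L)^{-1})$ on that half-plane. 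Analyticity of $z \mapsto \tilde K_{t_0}(z;h)$ then follows because $(z - \tilde H^w_{t_0})^{-1}$ is holomorphic in $z$ wherever it exists (differentiate the resolvent identity), the trace-class factors $\tilde A^w_{t_0}(\lambda_0 - \tilde H^w_{t_0})^{-(k-1)}$ are $z$-independent and bounded in trace norm, and the polynomial prefactor $(\lambda_0 - z)^{k-1}$ is entire.

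The main obstacle I anticipate is making the two-region splitting rigorous with constants uniform in $M, \varepsilon, h$: the microhyperbolic bound only controls $\langle T, \nabla H\rangle$ modulo the term $-C_1 H^2$, so one must genuinely exploit the spectral cutoff of $H^w$ near $\tau$ and check that the crossover scale $\rho$ can be chosen fixed while the positivity gain $\sim L$ still beats the error terms $\mathcal{O}(h) + \mathcal{O}(L^2)$ — which it does precisely because $L = M\,h\log(1/h)/\varepsilon$ is smaller than any power of $h^{-1}$ only after dividing by $\varepsilon \geq h^{1-\delta}$, so one needs $\varepsilon$ not too small; this is exactly why Theorem \ref{th1} restricts to $\varepsilon \in [h^{1-\delta}, \kappa[$. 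A secondary point is justifying that the almost-analytic remainder $\mathcal{O}(|\Im t_0|^2) = \mathcal{O}(L^2)$ in \eqref{Cons Taylor}, together with $\bar\partial_t$-errors which are $\mathcal{O}(|\Im t_0|^\infty)$ by \eqref{estima}, does not destroy holomorphy in $z$; here one uses that for $t_0$ fixed (not varying) the symbol $\tilde H^w_{t_0}$ depends holomorphically on $z$ trivially, and the $\bar\partial_t$ terms are irrelevant since we do not move $t$.
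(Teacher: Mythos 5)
The paper's proof does \emph{not} pass through any spectral decomposition of $H^w$. Instead it upgrades the microhyperbolicity bound at the symbol level to the quadratic inequality
\[
\Im\bigl(z-\tilde H_{t_0}(x,\xi)\bigr)+C\,|t_0|\,\bigl(z-\tilde H_{t_0}(x,\xi)\bigr)^{*}\bigl(z-\tilde H_{t_0}(x,\xi)\bigr)\;\geq\;c\,(|t_0|+\Im z)\,I_N ,
\]
where the $-C_1|(\tau_0-H)\omega|^{2}$ error coming from \eqref{MHC} has been re-expressed as $\mathcal{O}(|t_0|)\,|(z-\tilde H_{t_0})\omega|^{2}$ and moved to the \emph{left}. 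It then applies the matrix sharp G\aa rding inequality to this composite symbol and closes the estimate with Cauchy--Schwarz and Young, exploiting that the ``bad'' square on the left carries a small coefficient $|t_0|\sim L$ while the resulting lower bound is also $\sim L$; these balance to give $\|(z-\tilde H_{t_0}^{w})u\|\gtrsim |t_0|\,\|u\|$ down to $\Im z>-|t_0|/2$. No spectral projector appears.

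Your plan replaces this absorption step by a two-region splitting $u=\Pi u+(1-\Pi)u$ with $\Pi=\mathbf 1_{|H^w-\tau|\le\rho}$. That introduces an issue you do not address: the cross terms involve the commutator $[\Pi,\tilde H^{w}_{t_0}-H^{w}]$, and since $\tilde H^{w}_{t_0}-H^{w}=\mathcal{O}(L)$ in operator norm but $\Pi$ is a sharp spectral cutoff (not an $h$-pseudodifferential operator), the best generic bound is $\|[\Pi,\tilde H^{w}_{t_0}-H^{w}]\|=\mathcal{O}(L)$ --- i.e. \emph{the same order} as the gain $\sim L\|\Pi u\|$ coming from microhyperbolicity on $\operatorname{ran}\Pi$. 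The obstacle you flag (absorbing the $\mathcal{O}(h)+\mathcal{O}(L^{2})$ G\aa rding remainders, which are genuinely $o(L)$) is not the problem; the commutator is, and your proposal's ``standard two-region argument'' silently assumes it away. The splitting \emph{can} be made to work, but only by adding a further dichotomy in the size of $\|(1-\Pi)u\|$ relative to $L\|u\|$ (if it is $\gtrsim L\|u\|$, the coercivity of $z-H^{w}$ on $\operatorname{ran}(1-\Pi)$ wins by an $L^{0}$ margin; if it is $\lesssim L\|u\|$, the commutator error $\mathcal{O}(L)\|(1-\Pi)u\|$ drops to $\mathcal{O}(L^{2}\|u\|)$ and the G\aa rding gain dominates). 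That extra case analysis is precisely what is missing from your sketch, and what the paper's reformulation avoids entirely. Two minor points: your displayed G\aa rding inequality should carry a factor $L$ in front of $\|(H^{w}-\tau)u\|^{2}$ (matching the $|t_0|$ in the symbolic estimate); and the statement you are proving is the one in the running proof of Theorem \ref{th1}, not Lemma \ref{4.3} --- your two-region-plus-partition strategy is in fact closer in spirit to the paper's proof of Lemma \ref{4.3}, where the escape-function hypothesis removes the $-C_1 H^2$ term and makes the splitting painless.
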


\begin{proof}
{As in \eqref{Cons Taylor}, Taylor's formula yields} 
$$
z-\tilde H_{t_0}(x,\xi )=z-H(x,\xi
)- t_0\langle T,\nabla_{x,\xi} H(x,\xi )\rangle +{\mathcal O}( |t_0|^2).
$$ 
Using the
global  microhyperbolicity condition,
 we obtain for 
 small $h$
\begin{equation}\label{Garding11}
\Im (z-\tilde H_{t_0} (x,\xi ))+C|t_0|
(z-\tilde H_{t_0} (x,\xi ))^*(z-\tilde H_{t_0} (x,\xi ))\geq
c(|t_0| +\Im z)I_N,
\end{equation}
uniformly on $z$ with $\Im z>0$ and $\Re z\in I$ (see \eqref{Garding1}), {where $C,c>0$ are constants independent of $h$ and $M$.}  Here,  $*$ stands for the usual complex adjoint of a matrix. 

Now we pass from the symbolic calculus  level to the  $h$-pseudodifferential calculus.
 The semiclassical version of the sharp G\aa{}rding
inequality (see \cite{dimassi} Theorem 7.12 and \cite[Ch.1]{iv}  for the matrix case)  and \eqref{Garding11} imply,
\begin{equation}\label{GI}
\Im ({\rm Op}_h^w(z-\tilde H_{t_0}
)u, u)+C|t_0|\Vert {\rm Op}_h^w(z-\tilde H_{t_0}
)u\Vert ^2 
\end{equation}
$$
 \geq c(|t_0|+\Im z)\Vert u\Vert ^2-{\mathcal
O}(h)\Vert u\Vert ^2\ge {c\over 3} (|t_0| +\Im z)
\Vert u\Vert ^2,
$$
for all $u\in L^2(\mathbb R^n; \mathbb C^N)$ and $h$ small enough. Here we used the fact that $h=o(|t_0|)$. Combining \eqref{GI}
with the inequality $ab\le {c|t_0|\over
6}a^2+{3\over 2 c|t_0| }b^2$, we obtain
$$   {c \over 3}(|t_0| +\Im z)\Vert u\Vert ^2   \leq
\Vert {\rm Op}_h^w(z- \tilde{H}_{t_0} )u\Vert \Vert
u\Vert +C|t_0|\Vert {\rm Op}_h^w(z-\tilde{H}_{t_0}
)u\Vert ^2  
$$
$$ 
 \leq{c |t_0|
\over 6}\Vert u\Vert ^2+({3\over 2 c |t_0| }+C |t_0|
)\Vert {\rm Op}_h^w(z-\tilde{H}_{t_0} )u\Vert ^2,$$
which yields
\begin{equation}
{c\over 6} (|t_0|+\Im z)\Vert u\Vert ^2\leq ({3\over 2 c |t_0| }+C |t_0|
)\
\Vert {\rm Op}_h^w(z-\tilde{H}_{t_0} )u\Vert ^2,
\end{equation}
for all
$u\in L^2({\mathbb R}^n; \mathbb C^N).$ We conclude
that $(z-\tilde{H}_{t_0} ^w)^{-1}  $ extends as a holomorphic function
of $z$ to the zone $\Im z\geq -\frac{|t_0|}{2}$.
This ends the proof of the lemma.
\end{proof}

Let $\tilde{\psi}\in C^{\infty}(\mathbb R;\mathbb R)$ be such that 
$
\tilde{\psi}(s) = \psi(s) $ for $s>0$,
$
\tilde{\psi}(s)=1$ for $ - 1/4C_0 < s < 0,
$ and
$
\tilde{\psi}(s) = 0 $ for $ s< - 1/2C_0,
$
and define $\tilde\psi_L$ as in \eqref{L}. Then
we have 
\begin{eqnarray}
{\mathcal I}_+(\tau,\varepsilon;h) & \equiv  & - \frac{1}{\pi} \int_{ \{\Im z>L\}} \bar{\partial}\big( \tilde{f}\psi_L\big)(z) \mathcal{F}_{h}^{-1}\theta_{\varepsilon}(\tau-z)  \tilde{K}_{t_0}(z;h) L(dz) \nonumber \\
& \equiv &  - \frac{1}{\pi} \int_{ \{\Im z>0\}} \bar{\partial}\big( \tilde{f} \psi_L \big)(z) \mathcal{F}_{h}^{-1}\theta_{\varepsilon}(\tau-z) \tilde{K}_{t_0}(z;h) L(dz) \nonumber  \\
&\equiv & -\frac{1}{\pi} \int_{ \{\Im z>0\} }  \bar{\partial}\big( \tilde{f}\psi_L \tilde{\psi}_L\big)(z) \mathcal{F}_{h}^{-1}\theta_{\varepsilon}(\tau-z) \tilde{K}_{t_0}(z;h) L(dz)  \nonumber \\
&\equiv & \frac{1}{\pi} \int_{ \{\Im z < 0\} }  \bar{\partial}\big( \tilde{f}\psi_L \tilde{\psi}_L\big)(z) \mathcal{F}_{h}^{-1}\theta_{\varepsilon}(\tau-z) \tilde{K}_{t_0}(z;h) L(dz), \label{last}
\end{eqnarray}
{uniformly for $\tau \in \mathbb R$}. Notice that to pass from the first equation to the second we used (\ref{til}), and  the last identity follows from the Cauchy formula for analytic functions.

Now, with the same argument as for ${\mathcal I}_-$, we see that ${\mathcal I}_+=\mathcal{O}(\varepsilon^2 h^{\frac{M}{2}-n-2})$ uniformly for $\tau\in \mathbb R$ and $\varepsilon\in [h^{1-\delta}, \kappa[$ for all $\kappa>0$, which gives the result since $M>0$ is arbitrary. This ends the proof of Theorem \ref{th1}.

\vspace*{-0.7cm}
\begin{flushright}
$\square$
\end{flushright}


{\bf Remark 3.1.} Let $O$ be an open bounded subset of $\mathbb C$ such that ${\rm supp}\tilde f\subset O$, and
assume that the function $K(z;h)$ defined by \eqref{formulaM} in the upper half plane
extends as a holomorphic function $\tilde K(z;h)$ to the zone $O_\ell(h):=O\cap\{\Im z\ge - {\ell}\zeta(h)\}$ for all $\ell\in \mathbb N$} and that the estimate
 $\tilde K(z;h) ={\mathcal O}(h^{-d(n)})$ holds uniformly for $z\in O_{{\ell}}(h)$ with $d(n)$ depending only on the dimension.
Then \eqref{main estimate 1} remains true uniformly  for $\varepsilon\in  [\kappa, h^{-\nu}[$ with fixed $\kappa>0$,  $\nu\in {\mathbb N}$.

To see this, we first see \eqref{i-}, since $\nu$ is fixed and $M$ is arbitrary.
Next, since ${\rm supp}\: \psi_L\subset\{z\in \mathbb C; \vert \Im z\vert   \leq  2 \frac{M}{\kappa} \zeta(h)   ) \} $  for all $\varepsilon \in [\kappa,h^{-\nu}[$, it follows  from the above assumption
(with $\ell > 2\frac{M}{\kappa}$) and the Cauchy formula that
\begin{eqnarray*}
{\mathcal I}_+(\tau,\varepsilon;h) &=& - \frac{1}{\pi} \int_{\{\Im z>0\}} \bar{\partial}\left( \tilde{f} \psi_L\right) (z) \mathcal{F}_{h}^{-1}\theta_{\varepsilon}(\tau-z)  K(z;h) L(dz) \\
&=& \frac{1}{\pi} \int_{\{\Im z<0\}} \bar{\partial}\left( \tilde{f} \psi_L\right) (z) \mathcal{F}_{h}^{-1}\theta_{\varepsilon}(\tau-z) \tilde K(z;h)L(dz).
\end{eqnarray*}
Then the same argument as for ${\mathcal I}_-$ shows ${\mathcal I}_+={\mathcal O}(h^\infty)$, and hence  \eqref{main estimate 1}  holds {uniformly for $\tau \in \mathbb R$ and $\varepsilon\in [\kappa, h^{-\nu}[$.}

Later, in the application to the study of the SSF, we shall show that assumption \eqref{deff} about the existence of an escape function implies that 
the function $z\mapsto K(z;h)$ (defined by \eqref{For2}) satisfies the condition assumed on $K(z;h)$ in this remark in an open complex  neighborhood  $O$ of $\tau_0$ (see Lemma \ref{ext an}). This will be crucial for the proof of the pointwise asymptotics \eqref{main semiclassical asymptotics}.

\subsection{Proof of Theorem \ref{th2}} Let $\varepsilon>0$ be a small constant  (independent of $h$) which will be fixed later. We have
$$
{\rm tr} \left( \chi^w \left[ f(H_j^w) \mathcal{F}_{h}^{-1} \theta_\varepsilon (\tau - H_j^w) \right]_0^1   \right)=I(\tau,\varepsilon;h)
$$
where ${\mathcal I}(\tau,\varepsilon;h)$ is defined by \eqref{basic expression1} with
$$
{K}(z;h)=  {\rm tr} \left( \chi^w [(z-H_j^w)^{-1}]_0^1\right ).
$$
It follows from  \eqref{til}   and   \eqref{-11} that, {uniformly for $\tau \in \mathbb R$},
\begin{equation}\label{th2.3}
{\mathcal I}(\tau,\varepsilon;h) \equiv  - \frac{1}{\pi} \int_{ \{L< |\Im z| < 2L \big\}} \bar{\partial}   \left(\tilde f\psi_L\right)(z) \mathcal{F}_{h}^{-1}\theta_{\varepsilon}(\tau-z) {\rm tr} \left( \chi^w [(z-H_j^w)^{-1}]_0^1  \right) L(dz).
\end{equation}
Let $B_0 \in C_0^\infty ({\mathbb R}^{2n})$ be a
real-valued function in the phase space  such that
$B_0=1 \hbox{ near }{\rm
supp\,}\chi$ and $B_0= 0\hbox{ near } {{\rm supp \,} }
{(H_1-H_0)}$, and
let $B=\alpha B_0$ for a constant $\alpha>0$ that we will choose later. We notice that the symbol $b=e^{B\log {1\over h}}$ is of class {$S_\delta ^l (\mathbb R^{2n})$}\footnote{Following \cite{dimassi}, $S^k_{\delta}(\mathbb R^{2n}):= \{a(\cdot;h)\in C^{\infty}(\mathbb R^{2n};\mathbb R);\: \forall \alpha\in \mathbb N^{2n}\, :\partial_{x,\xi}^{\alpha}a(x,\xi;h) = \mathcal{O}_{\alpha}(h^{-\delta|\alpha|-k})\}$, for $k\in \mathbb R$ and $\delta\in [0,1]$.} for $l= {\alpha \Vert B_0\Vert_{L^{\infty}(\mathbb R^{2n})}}$ and every $\delta >0$.  By
the same notation we also denote the
corresponding $h$-pseudodifferential operator, which is bounded, elliptic and has an inverse operator $(e^{B\log
{1\over h}})^{-1}$ with symbol in the same class. Using the $h$-pseudodifferential calculus (see \cite[Ch. 7]{dimassi}) as well as the Calder\'on-Vaillancourt theorem, it is clear that for some $k\in \mathbb N$,
\begin{equation}\label{exp}
e^{B\log {1\over h}}(z-H_1^w)(e^{B\log
{1\over h}})^{-1}=z-H_1^w+{\mathcal O}(\alpha \zeta(h))\Vert \nabla B_0\Vert _{C^k},
\end{equation}in
operator norm, for $h\le h(\alpha)$, where
$h(\alpha)>0$ is some continuous function. 

It follows that for  $\vert \Im z\vert\geq  {C\alpha }\zeta(h)$ (where $C$ depends only on $\Vert H_1\Vert_{C^k}$ and $\Vert B_0\Vert_{C^k}$), the right hand side of \eqref{exp} is invertible  and we have
\begin{equation}\label{exp2}
\Vert e^{B\log {1\over h}}(z-H_1^w)^{-1}(e^{B\log
{1\over h}})^{-1}\Vert={\mathcal O}(\vert \Im z\vert^{-1}).
\end{equation}
in operator norm.

On the other hand, since $B=\alpha$ near ${\rm
supp\,}\chi$ and that $B=0$ near
${\rm supp}(H_1-H_0)$, it follows from the $h$-pseudodifferential calculus again that we have
$$
(e^{B\log {1\over h}})^{-1}(H_1^w-H_0^w)=(H_1^w-H_0^w)+{\mathcal O}  (h^\infty),
$$
$$
\chi^w e^{B\log {1\over h}}=e^{\alpha\log {1\over h}}    \chi^w+{\mathcal O}  (h^\infty),
$$
in {operator} norm and trace norm respectively. Thus
$$
e^{\alpha\log {1\over h}}   \,{\rm tr} \left( \chi^w(z-H_1^w)^{-1}(H_1^w-H_0^w)(z-H^w_0)^{-1}\right)
$$
$$={\rm tr}\left( \chi ^we^{B\log {1\over
h}}(z-H_1^w)^{-1}(e^{B\log {1\over h}})^{-1}(H_1^w-H_0^w)(z-H_0^w)^{-1}\right) +{\mathcal O}\left({h^\infty
\over
\vert \Im z\vert^2 }\right)  $$
$$={\rm tr}\left(\chi^w(z-e^{B\log {1\over h}} H_1^w(e^{B\log {1\over
h}})^{-1})^{-1} (H_1^w-H_0^w)(z-H^w_0)^{-1} \right)+{\mathcal O}\left({h^\infty \over
\vert \Im z\vert ^2} \right). 
$$
Combining this with \eqref{exp}, we deduce that for $\vert \Im z\vert\geq  C\alpha \zeta(h)$
\begin{equation}\label{th2.31}
{\rm tr}\left( \chi^w [(z-H_j^w)^{-1}]_0^1  \right) =  {\mathcal{O}\left( \frac{h^{\alpha-n}}{|\Im z|^2} \right) }.
\end{equation}

We choose $\alpha=\frac{M}{C\varepsilon}$. It follows from  \eqref{nouvelle formule}, \eqref{th2.3} and \eqref{th2.31}

{$${\mathcal I}(\tau,\varepsilon;h)={\mathcal O}\left(\varepsilon^3 h^{M(\frac{1}{\varepsilon C}-{2})-n-{3}} \log(\frac{1}{h})^{-2} \right).$$}

Next we choose $\varepsilon$ small enough so that $\frac{1}{\varepsilon C}>2$.  This ends the proof of Theorem \ref{th2} since $M$ is arbitrary. We recall that $C$ depends only on $\Vert H_1\Vert_{C^k}$ and $\Vert B_0\Vert_{C^k}$.

\vspace*{-0.7cm}
\begin{flushright}
$\square$
\end{flushright}

\subsection{Proof of Theorem \ref{WASG}}

Without any loss of generality,  we may assume that  $\chi$ is supported in a small neighborhood
of a fixed point $(x_0,\xi_0)$.  In fact  we may replace $\chi$ by a finite sum of terms $\chi\chi_i$ with $\sum_i \chi_i=1$ near the support of $\chi$ and $\chi_i$ has its support in a small neighborhood of a fixed point $(x_i,\xi_i)\in {\rm supp}\:\chi$. 
Then, choosing the support of {$\chi$} small enough,  we may assume that $\tau-H(x,\xi)$ is uniformly microhyperbolic in a fixed direction $T$ for $(x,\xi)$ in ${\rm supp}\:\chi$ and for $\tau$ near $\tau_0$.
Moreover, by modifying $H$ outside ${\rm supp}\:\chi$ as in Theorem \ref{main result appendix}, we may assume that 
$\tau-H(x,\xi)$ is uniformly microhyperbolic  in the whole phase space $\mathbb R^{2n}$
in the direction $T$ thanks to Theorem \ref{th2}.

Let $\theta\in C_0^{\infty}(]-1;1[;\mathbb R)$ be equal to one near $0$, $\varepsilon>0$ small enough independent of $h$ and $D$ be an integer such that $2^{-D}\sim h^{1-\delta}$ with $\delta\in ]0,\frac{1}{2}[$. Put $\tilde \varepsilon=2^{-D} \varepsilon$.
We write
$$\theta_{\varepsilon}({t})-\theta_{\tilde\varepsilon}({t})=\sum_{i=1}^D \Psi(2^{i-1} {t}),$$
where $\Psi({t})=\theta_{\varepsilon}({t})-\theta_{\varepsilon}(2 {t})\in C^\infty_0(\mathbb R)$.  Clearly, $\Psi({t})=\Psi_1({t})+\Psi_2({t})$ {where $\Psi_1$ and $\Psi_2$ are equal to $0$ near zero, ${\rm supp \:} \Psi_1\subset ]0,\varepsilon[$ and ${\rm supp \:}\Psi_2\subset ]-\varepsilon,0[$}. Now applying Theorem \ref{th1} (resp. Remark  2.1) to  $\Psi_1(2^{i-1} {t})$ (resp. $\Psi_2(2^{i-1} {t})$), $i=1,\cdots D,$ {we see that there exists $I\in O_{\tau_0}$ such that for all $f\in C_0^{\infty}(I;\mathbb R)$, we have} 
\begin{equation}\label{D1}
{\rm tr}\; \left(  \chi^w  f (H^w) \mathcal{F}_{h}^{-1}\theta_{\varepsilon} (\tau-H^w)  \right) \equiv {\rm tr}\; \left(  \chi^w  f (H^w) \mathcal{F}_{h}^{-1}\theta_{\tilde\varepsilon} (\tau-H^w)  \right),
\end{equation}
{uniformly for $\tau \in \mathbb R$.}
As in \eqref{D1111}, we have 
$$
{\rm tr}\; \left(  \chi^w  f (H^w) \mathcal{F}_{h}^{-1}\theta_{\tilde\varepsilon} (\tau-H^w)  \right) \equiv
$$
\begin{equation}\label{D2}
- \frac{1}{\pi} \int_{\{\vert \Im z\vert \geq h^\delta\}}    \bar{\partial}  \left( \tilde f \psi_{h^\delta} \right)(z) \mathcal{F}_{h}^{-1}\theta_{\tilde\varepsilon}(\tau-z) {\rm tr}\left( \chi^w (z-H^w)^{-1} \right) L(dz).
\end{equation}
Now in the zone $\Omega_{\delta}:= \{z\in {\rm supp}\: \tilde f; \vert \Im z\vert\geq h^\delta\}$, with $0< \delta< \frac{1}{2}$, the resolvent $(z-H^w)^{-1}$ is an  $h$-pseudodifferential operator. More precisely,  according to Proposition  8.6 in  \cite{dimassi}, there exists a $C^\infty$ matrix-valued function
$(x,\xi)\mapsto {\mathcal G}(x,\xi,z;h)$ such that
$$\Vert \partial_{x,\xi}^{\alpha} {\mathcal G}(x,\xi,z;h)\Vert \leq C_\alpha h^{-\delta(1+\vert \alpha\vert)}, \quad \forall \alpha \in \mathbb N^{2n},$$
uniformly on $z\in  \Omega_\delta$ and
\begin{equation}\label{Reso}
(z-H^w)^{-1}={\rm Op}_h^w({\mathcal G}(x,\xi,z;h)),
\end{equation}
for all $z\in \Omega_\delta$. Moreover
\begin{equation}\label{D4}
{\mathcal G}(x,\xi,z;h)\sim {\mathcal G}_0(x,\xi,z)+h{\mathcal G}_1(x,\xi,z)+h^2{\mathcal G}_2(x,\xi,z)+\cdots \quad {\text{in} \;\; S_{\delta}^{\delta}(\mathbb R^{2n},\mathcal{H}_N)}
\end{equation}
where ${\mathcal G}_j(x,\xi,z)$ is a finite sum of terms of the form
$$(z-H(x,\xi))^{-1} B_1(x,\xi,z)(z-H(x,\xi))^{-1}B_2(x,\xi,z)(z-H(x,\xi))^{-1}\cdots    B_k(x,\xi,z)(z-H(x,\xi))^{-1} ,$$
with $k<{2j+1}$, $B_l(x,\xi,z)\in S^0(\mathbb R^{2n};\mathcal{H}_N)$ holomorphic in $z$ near ${\rm supp\:}\tilde{f}$. Now by classical results on trace class $h$-pseudodifferential operators (see Theorem II.53 and Proposition II. 56 in \cite{Rob1}), we have for all {$m\in \mathbb N$},
\begin{equation}\label{D5}
{\rm tr}\; \left(  \chi^w  f (H^w) \mathcal{F}_{h}^{-1}\theta_{\varepsilon} (\tau-H^w)  \right) =
(2\pi h)^{-n}\sum_{j=0}^{{m}} a_j(\tau;h) h^j+{\mathcal O}(h^{(m+1)(1-2\delta)-n}),
\end{equation}
where
\begin{equation}\label{D5}
a_j(\tau;h)=- \frac{1}{\pi} \int \bar{\partial}  \left(    \tilde f       \psi_{h^\delta} \right)(z) 
\mathcal{F}_{h}^{-1}\theta_{\tilde \varepsilon}(\tau-z) \, \widehat e_j(z)\,  L(dz)
\end{equation}
with
$$\widehat e_j(z) := \iint_{{\mathbb R}^{2n}} \chi(x,\xi)\, \widehat{ {\rm tr}}   \left({\mathcal G}_j(x,\xi,z)\right)  dxd\xi .$$
Here $\widehat{\rm tr}$ denotes the  trace  of  square matrices.
The microhyperbolicity assumption implies that there exists $I\in O_{\tau_0}$ such that the function
 $$I\ni \tau\mapsto \widehat e_j(\tau\pm i0):=
\lim_{s\searrow 0} \widehat e_j(\tau\pm is),$$
is $C^\infty$  (see Proposition \ref{Hyperb}). 
Set
\begin{equation}\label{D7}
\gamma_j(\tau):=-\frac{1}{2\pi i}   \Big[ \widehat{e}_j(\tau+i0)- \widehat{e}_j(\tau-i0)\Big].
\end{equation}
Now the following lemma ends the proof of Theorem \ref{WASG}.

\begin{lemma}
$$a_j(\tau;h)= f(\tau)  \gamma_j(\tau)+{\mathcal O}(h^\infty).$$
\end{lemma}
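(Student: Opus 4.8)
Proof plan for the Lemma ($a_j(\tau;h) = f(\tau)\gamma_j(\tau) + \mathcal{O}(h^\infty)$).

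The plan is to evaluate the integral \eqref{D5} by pushing the contour of integration onto the real axis and picking up a residue-type contribution across the cut of $\widehat e_j$. Recall that $\widehat e_j(z)$ is holomorphic in $\{z\in{\rm supp}\,\tilde f;\ \Im z\neq 0\}$, with $|\widehat e_j(z)|=\mathcal{O}(|\Im z|^{-(2j+1)})$ from its definition via resolvent powers, and that by the microhyperbolicity assumption (Proposition \ref{Hyperb}) the boundary values $\widehat e_j(\tau\pm i0)$ exist and are $C^\infty$ for $\tau\in I$. First I would split the region $\{|\Im z|\ge h^\delta\}$ according to the sign of $\Im z$, writing $a_j = a_j^+ + a_j^-$ with the obvious meaning, and treat each half-plane contribution via the Cauchy–Green formula applied to the holomorphic function $\widehat e_j$ there: since $\bar\partial\big(\widetilde f\psi_{h^\delta}\big)$ is the only non-holomorphic factor, Stokes' theorem converts the area integral into a contour integral over $\{\Im z = \pm h^\delta\}$ plus a piece supported where $\bar\partial\psi_{h^\delta}\neq 0$, i.e. in $h^\delta\le|\Im z|\le 2h^\delta$; the latter is $\mathcal{O}(h^\infty)$ because $\mathcal F_h^{-1}\theta_{\tilde\varepsilon}$ is $\mathcal{O}(\tilde\varepsilon h^{-1}e^{\tilde\varepsilon|\Im z|/h})$ which is subpolynomial there while $\widehat e_j=\mathcal O(h^{-\delta(2j+1)})$, and $\bar\partial\widetilde f=\mathcal O(|\Im z|^\infty)=\mathcal O(h^\infty)$ by \eqref{prop2}.

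Thus, modulo $\mathcal{O}(h^\infty)$,
\begin{equation*}
a_j(\tau;h) \equiv \frac{1}{2\pi i}\int_{\mathbb R} \Big( \widetilde f(t+ih^\delta)\,\mathcal F_h^{-1}\theta_{\tilde\varepsilon}(\tau - t - ih^\delta)\,\widehat e_j(t+ih^\delta) - (\,\cdot\,\mapsto t - ih^\delta\,) \Big)\,dt,
\end{equation*}
where the second term is the analogous integral on the line $\Im z = -h^\delta$. Now I would let $h^\delta\searrow 0$ inside each line integral: $\widetilde f(t\pm ih^\delta)\to f(t)$, $\widehat e_j(t\pm ih^\delta)\to \widehat e_j(t\pm i0)$ uniformly on compact subsets of $I$ by Proposition \ref{Hyperb}, and $\mathcal F_h^{-1}\theta_{\tilde\varepsilon}(\tau-t\mp ih^\delta)\to \mathcal F_h^{-1}\theta_{\tilde\varepsilon}(\tau-t)$; the errors incurred are $\mathcal O(h^\infty)$ because on the relevant $z$-strip the factor $\mathcal F_h^{-1}\theta_{\tilde\varepsilon}$ together with $\bar\partial\widetilde f$ again forces rapid decay, and outside the support of $\widetilde f$ there is nothing. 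This reduces $a_j$, modulo $\mathcal O(h^\infty)$, to
\begin{equation*}
a_j(\tau;h) \equiv \int_{\mathbb R} f(t)\,\mathcal F_h^{-1}\theta_{\tilde\varepsilon}(\tau-t)\,\Big(\tfrac{1}{2\pi i}\big[\widehat e_j(t+i0)-\widehat e_j(t-i0)\big]\Big)\,dt = -\int_{\mathbb R} f(t)\,\mathcal F_h^{-1}\theta_{\tilde\varepsilon}(\tau-t)\,\gamma_j(t)\,dt,
\end{equation*}
using the definition \eqref{D7} of $\gamma_j$.

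Finally I would recognize the last integral as $\big(\mathcal F_h^{-1}\theta_{\tilde\varepsilon} * (f\gamma_j)\big)(\tau)$ up to sign, but more directly: since $\theta$ equals $1$ near $0$, $\mathcal F_h^{-1}\theta_{\tilde\varepsilon}$ is an approximate identity as $h\to 0$ — precisely, $\int \mathcal F_h^{-1}\theta_{\tilde\varepsilon}(s)\,ds = \theta_{\tilde\varepsilon}(0) = 1$, and for any Schwartz (here $C_0^\infty$) function $g$ one has $\int \mathcal F_h^{-1}\theta_{\tilde\varepsilon}(\tau-t)g(t)\,dt = g(\tau) + \mathcal O(h^\infty)$ by expanding $g$ in Taylor series around $\tau$ and noting that all moments $\int s^k\mathcal F_h^{-1}\theta_{\tilde\varepsilon}(s)\,ds$ are $\mathcal O(h^\infty)$ for $k\ge 1$ because they equal $(h/i)^k\theta^{(k)}_{\tilde\varepsilon}(0) = 0$ (as $\theta\equiv 1$ near $0$ forces all derivatives to vanish there). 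Applying this with $g = f\gamma_j$ — which is legitimate on a neighborhood of $\tau_0$ after shrinking $I$ so that $\gamma_j\in C^\infty(I)$ and ${\rm supp}\,f\subset I$ — gives $a_j(\tau;h) = f(\tau)\gamma_j(\tau) + \mathcal O(h^\infty)$, uniformly for $\tau$ in a slightly smaller interval, and in fact for all $\tau\in\mathbb R$ once one checks that for $\tau$ outside a neighborhood of ${\rm supp}\,f$ the whole expression is $\mathcal O(h^\infty)$ by the exponential localization of $\mathcal F_h^{-1}\theta_{\tilde\varepsilon}$.

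The main obstacle is the contour-deformation/limit step: one must carry the estimates $\widehat e_j = \mathcal O(|\Im z|^{-(2j+1)})$ and the exponential bound on $\mathcal F_h^{-1}\theta_{\tilde\varepsilon}$ carefully through the passage $h^\delta\searrow 0$, making sure every discarded term is genuinely $\mathcal O(h^\infty)$ and not merely $\mathcal O(h^N)$ for some finite $N$; this is where the choice $\tilde\varepsilon = 2^{-D}\varepsilon \sim h^{1-\delta}\varepsilon$ and the smallness of $\delta$ enter, since they control the competition between the $|\Im z|$-negative powers and the $e^{\tilde\varepsilon|\Im z|/h}$ growth. The approximate-identity argument at the end is routine once the smoothness of $\gamma_j$ near $\tau_0$ (hence of $f\gamma_j$) has been secured by Proposition \ref{Hyperb}.
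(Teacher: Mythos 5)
Your proposal follows essentially the same two-step route as the paper: first use the Cauchy--Green (Stokes) formula and the holomorphy of $\mathcal F_h^{-1}\theta_{\tilde\varepsilon}(\tau-\cdot)\,\widehat e_j$ in each half-plane to convert the area integral defining $a_j$ into a boundary integral over $\mathbb R$ involving the jump $\widehat e_j(\lambda+i0)-\widehat e_j(\lambda-i0)$, then recognize the resulting $\int\mathcal F_h^{-1}\theta_{\tilde\varepsilon}(\tau-\lambda)\,f(\lambda)\gamma_j(\lambda)\,d\lambda$ as a convolution with an approximate identity whose moments of order $k\ge1$ all vanish (since $\theta\equiv 1$ near $0$), whence Taylor expansion gives $f(\tau)\gamma_j(\tau)+\mathcal O(h^\infty)$; this is exactly the paper's argument. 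Two small remarks: (i) you have a sign slip in the Stokes step --- the boundary prefactor should be $-\tfrac{1}{2\pi i}$, not $+\tfrac{1}{2\pi i}$, so that the jump combines with the definition \eqref{D7} to give $+\int f\,\mathcal F_h^{-1}\theta_{\tilde\varepsilon}(\tau-\cdot)\,\gamma_j$, consistent with your final claim; (ii) the intermediate contour at $\Im z=\pm h^\delta$ is an unnecessary detour, and your description of the $\mathcal O(h^\infty)$ residual as supported in the annulus $h^\delta\le|\Im z|\le 2h^\delta$ is a bit off --- the paper passes directly to the real boundary (the convergence near $\Im z=0$ being secured by $\bar\partial\tilde f=\mathcal O(|\Im z|^\infty)$ against the finite negative power of $|\Im z|$ in $\widehat e_j$), with no annulus error to estimate.
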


\begin{proof}
Since $z\mapsto 
\mathcal{F}_{h}^{-1}\theta_{\tilde{\varepsilon}}(\tau-z) \, \widehat e_j(z)$ is  holomorphic in the complex domain $\{z\in \mathbb C; \pm \Im z>0\}$, it follows from the Green formula
that
$$a_j(\tau;h)=- \frac{1}{\pi}\lim_{s \searrow 0}  \int _{\{\Im z>s\}}\bar{\partial}  \left(    \tilde f\psi_{h^\delta} \right)(z) 
\mathcal{F}_{h}^{-1}\theta_{\tilde \varepsilon}(\tau-z) \, \widehat e_j(z)\,  L(dz)$$
$$- \frac{1}{\pi}\lim_{s \searrow 0}  \int _{\{\Im z< - s\}}\bar{\partial}  \left(    \tilde f       \psi_{h^\delta} \right)(z) 
\mathcal{F}_{h}^{-1}\theta_{\tilde\varepsilon}(\tau-z) \, \widehat e_j(z)\,  L(dz)$$
$$=-\frac{1}{2\pi i}\int_{\mathbb R} f(\lambda)\mathcal{F}_{h}^{-1}\theta_{\tilde \varepsilon}(\tau-\lambda)
\Big[\widehat{e}_j(\tau+i0)- \widehat{e}_j(\tau-i0) \Big] d\lambda$$
$$=\int_{\mathbb R} \mathcal{F}_{h}^{-1}\theta_{\tilde \varepsilon}(\tau-\lambda) f(\lambda)
\gamma_j(\lambda)d\lambda=\int_{\mathbb R} \mathcal{F}_1^{-1}\theta(\lambda) f(\tau - h^\delta \lambda) \gamma_j(\tau - h^\delta \lambda)d\lambda.$$
The last equality is obtained by a change of variable. Applying Taylor's formula to the function $\lambda\mapsto f(\tau - h^\delta \lambda)\gamma_j(\tau - h^\delta \lambda)$ at $\lambda=0$ and using the fact that
$\int  \mathcal{F}_1^{-1}\theta(\lambda) (-i\lambda)^kd\lambda=\theta^{(k)}(0)=0$ we obtain the lemma. We recall that $\theta=1$ near zero. Here ${\mathcal F}_1^{-1}$ is ${\mathcal F_h}^{-1}$ with $h=1$
\end{proof}

\section{Proofs of the results on the SSF}\label{Proofs of the results SSF}
{This section is devoted to the proofs of the results of subsection \ref{SSF}.

We follow the notations used in section 3. From the assumption \eqref{DA1}, the operator  
$$\big[(P_{j}(h)-z_0)^{-q}(z-P_{j}(h))^{-1} \big]_0^1$$
is of trace class for $q> \frac{n}{2}$ and $z_0\not\in \sigma (P_1(h))\cup\sigma (P_0(h))$ which are fixed in what follows. 

We set
\begin{equation}\label{For2}
K(z;h):=(z-z_0)^q {\rm tr}\bigg( \big[(P_{j}(h)-z_0)^{-q}(z-P_{j}(h))^{-1} \big]_0^1 \bigg), \quad \Im z\not =0.
\end{equation}

As in the proof of   \eqref{HS},  formulas      \eqref{LK}  and \eqref{HeSj} yield, for all $f,\theta\in C_0^{\infty}(\mathbb R;\mathbb R)$,
\begin{equation}\label{W1}
\langle s_{h}'(\cdot),f(\cdot) \rangle =
  \frac{1}{\pi} \int_{\mathbb C} \bar{\partial} \tilde{f}(z) K(z;h) L(dz),
\end{equation}
\begin{equation} \label{weak dim}
\langle s_{h}'(\cdot), \mathcal{F}_{h}^{-1}\theta(\tau- \cdot)f(\cdot)\rangle =
  \frac{1}{\pi} \int_{\mathbb C} \bar{\partial} \tilde{f}(z) \mathcal{F}_{h}^{-1}\theta(\tau-z) K(z;h) L(dz).
\end{equation}

\subsection{Proof of Theorem \ref{WA1}} 
This is a classical result and follows from the functional calculus of $h$-pseudodifferential operators.

Let $\delta\in ]0, \frac{1}{2}[$. 
The contribution from the domain $\vert {\rm
Im\,}z\vert\le h^\delta$ of the integral in the right
hand side of \eqref{W1} is ${\mathcal O}(h^\infty)$.

 Next, in the domain
$\vert {\rm Im\,}z\vert\ge h^\delta$,} we use the fact that $(z-P_k(h))^{-1}$ are $h$-pseudodifferential operators, {$k=0,1$} 
(see \eqref{Reso} and \eqref{D4}). This formally yields \eqref{Oubl} (with $q=0$) with
$$c_j(f)=\iint_{{\mathbb R}^{2n}} \, \frac{1}{\pi} \widehat{ {\rm tr}}  \bigg( \int_{\mathbb C} \bar{\partial} \tilde{f}(z)   \left({\mathcal G}_{j,1}(x,\xi,z)-{\mathcal G}_{j,0}(x,\xi,z)\right) L(dz) \bigg)dxd\xi.$$
In particular
$$c_0(f)= \iint_{\mathbb R^{2n}} \widehat{{\rm tr}}\big(f({p_{0}(x,\xi)})-f({p_{1}(x,\xi)})\big) dxd\xi,$$
and \eqref{Oubl2}  trivially follows from this
formula.  
To see that $c_{2j+1}(f)=0$, it suffices to notice that 
$h\mapsto \vert 2\pi h\vert^n{\rm tr}(f(P_1(h))-f(P_0(h))$ is an even function. More rigorously for $q\neq 0$, one may write $K(z;h)$ as
\begin{align}
K(z;h) & = (z-z_0)^q {\rm tr} \: \left[ \big( (P_1(h)-z_0)^{-q} - (P_0(h)-z_0)^{-q} \big) (z-P_1(h))^{-1} \right] \nonumber \\
&+ (z-z_0)^q  {\rm tr} \: \left[ (P_0(h)-z_0)^{-q} \big( (z-P_1(h))^{-1} - (z-P_0(h))^{-1} \big) \right] \label{nouvelle forme},
\end{align}
and use the fact that $(P_k(h)-z_0)^{-q}$ are $h$-pseudodifferential operators, $k=0,1$. This ends the proof of  \eqref{Oubl}.

 \subsection{Proof of Theorem \ref{ASW2}} 
The proof of Theorem \ref{ASW2}  uses \eqref{weak dim} and 
is quite similar to that of Theorem \ref{WASG}, and we omit the details. The main difference is that  $\Sigma_{\tau_0}=\{(x,\xi)\in {\mathbb R}^{2n}; \, {\rm det} (p_1(x,\xi)-\tau_0)=0\}$ is not a compact set in ${\mathbb R}^{2n}$. In this case we have to justify that we can cover $\Sigma_{\tau_0}$ by finite
open sets $O_1, O_2,\cdots,O_\ell$  in which we can construct $\tilde p_{1,k}(x,\xi)$, and $T_k\in {\mathbb R}^{2n}$, $k=1,\cdots, \ell$, 
such that $\tilde p_{1,k}(x,\xi)-\tau_0$
 is uniformly microhyperbolic in the  direction $T_k$ and $\tilde p_{1,k}(x,\xi)=p_1(x,\xi)$ for all $(x,\xi)\in O_k$. To see this, we first notice that
 $\Sigma_{\tau_0}=\Sigma_{\tau_0}\cap\{\vert \xi\vert \leq R_0\}$ ($R_0$ being large enough), since $\lim_{\vert \xi \vert \rightarrow \infty} {\rm det}(p_1(x,\xi)-\tau_0)=\infty$. Next, fix $R_1$ large such that  ${\rm inf}_{\vert x\vert>R_1}\vert {\rm det} (V(x)-\tau_0)\vert>0$.
This is possible since {$\lim_{|x|\rightarrow \infty} V(x) = V_{\infty}$ and} $\tau_0\not\in\sigma(V_\infty)$ by assumption. On the compact set $\Sigma_{\tau_0}\cap \{\vert x\vert \leq R_1\}$  we can apply Theorem \ref{WASG} without any modification. On the other hand,  we see from the choice  of $R_1$  that$\nabla(\vert \xi\vert^2)\not =0$ for all $(x,\xi)\in \Sigma_{\tau_0}\cap \{\vert x\vert >R_1\} := \Sigma_{\tau_0,R_1}$.
Thus,    we can find finite open covers $o_1,o_2,\cdots o_\ell$ in ${\mathbb R}^n$, $\tilde T_1,\tilde T_2,\cdots, \tilde T_\ell\in {\mathbb R}^n$ and $c_1,\cdots c_\ell>0$ 
such that $ \{\vert \xi\vert \leq R_0\}\subset \bigcup_{j=1}^\ell o_j$ and 
  for each $j=1, \cdots \ell$, 
  $\langle \tilde T_j,\nabla(\vert \xi\vert^2)\rangle \geq c_j$,  uniformly on $\xi\in o_j\cap \pi_\xi \Sigma_{\tau_0,R_1}$.
 Now using Theorem \ref{main result appendix},  we construct $\tilde p_{1,j}(x,\xi)$, 
such that $\tilde p_{1,j}(x,\xi)-\tau_0$
 is uniformly microhyperbolic   in the  direction ${T_j=(0,\tilde T_j)}$ and $\tilde p_{1,j}(x,\xi)=p_1(x,\xi)$ for all $(x,\xi)\in \{\vert x\vert >R_1\}\times o_j$.
 We can now proceed analogously to the proof  of Theorem \ref{WASG}.

\subsection{Proof of Theorem \ref{WRA}} 
For the proof of Theorem \ref{WRA}, {assume that $\tau\mapsto s(\tau;h)$ is monotonic (i.e., $s'(\cdot;h)$ is positive or negative in the  sense of  distributions).}  In this case Theorem \ref{WRA} is a simple consequence of Theorem \ref{ASW2} by standard Tauberian arguments (see  \cite{dimassi}, 
\cite{iv}, \cite{Rob1}).
 
 For the general case, we use a trick due to Robert \cite{Robert0}, which consists in writing $s(\tau;h)=s_1(\tau;h)-s_2(\tau;h)$
where $\tau\mapsto s_i(\tau;h)$, $i=1,2$ are monotonic. Now,  it suffices to apply   the above argument to each $s_i(\tau;h)$.

Notice that, Robert's trick applies to Schr\"odinger operators with matrix-valued potential under the assumption \eqref{DA1}  with scalar matrix $V_\infty$.

\subsection{Proof of Theorem \ref{main semiclassical}} 

The proof of the following lemma is the same as  that of  Lemma 2.2   in  \cite{Dimassi2}.
\begin{lemma}\label{RepL}
Under  the assumption \eqref{DA1}, we have 
$$s_h'(\tau)= \frac{1}{\pi} \Im
K(\tau+i0;h) \text { in } {\mathcal D}'({\mathbb R}),
$$
i.e.  we have, for all $f\in { C}^\infty_0(\mathbb R)$, 
$$
\displaystyle \langle s_h'(\cdot),f\rangle =\lim_{\kappa
\searrow 0}\frac{1}{\pi}\int_{\mathbb R} f(\tau)\Im
K(\tau+i\kappa;h)\,d\tau.
$$ 
\end{lemma}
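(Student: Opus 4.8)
The plan is to recall, from the references on the theory of the spectral shift function (see \cite{Dimassi2}, and also \cite{Birman1}, \cite{Yafaev}), the standard representation of the derivative of the SSF as the boundary value on the real axis of the trace of an operator built from the resolvents, and then to adapt it to the present matrix-valued setting, where the only structural input needed is the trace-class property guaranteed by assumption \eqref{DA1}. Concretely, I would start from the Lifshits-Krein formula \eqref{LK} and the Helffer-Sj\"ostrand representation \eqref{W1}: for $f\in C_0^{\infty}(\mathbb R;\mathbb R)$,
\begin{equation*}
\langle s_h'(\cdot),f\rangle = \frac{1}{\pi}\int_{\mathbb C}\bar\partial\tilde f(z)\,K(z;h)\,L(dz),
\end{equation*}
with $K(z;h)$ defined in \eqref{For2}. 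The goal is to push the $\bar\partial$ off $\tilde f$ and onto the singularity of $K$ on the real axis.

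The key steps, in order: (i) split the integral over $\{\pm\Im z>\kappa\}$ for $\kappa>0$ small; (ii) on each half-plane $K(z;h)$ is holomorphic, so integrate by parts (Green's/Stokes' formula), the contribution from $|\Im z|\le\kappa$ being controlled because $\bar\partial\tilde f$ vanishes to infinite order on $\mathbb R$ by \eqref{prop2} and $K(z;h)=\mathcal O(h^{-n}|\Im z|^{-2})$ by the analogue of \eqref{estimate G} (which here holds with the exponent $q$ adjusted so that the relevant operator is trace class, using \eqref{DA1} and $q>n/2$); (iii) let $\kappa\searrow 0$ and collect the two boundary integrals along $\mathbb R$, which combine into
\begin{equation*}
\langle s_h'(\cdot),f\rangle = -\frac{1}{2\pi i}\int_{\mathbb R} f(\tau)\bigl(K(\tau+i0;h)-K(\tau-i0;h)\bigr)\,d\tau;
\end{equation*}
(iv) observe that, since $P_0(h)$ and $P_1(h)$ are self-adjoint, $K(\bar z;h)=\overline{K(z;h)}$, hence $K(\tau-i0;h)=\overline{K(\tau+i0;h)}$ and $K(\tau+i0;h)-K(\tau-i0;h)=2i\,\Im K(\tau+i0;h)$, which yields the claimed formula $s_h'(\tau)=\tfrac{1}{\pi}\Im K(\tau+i0;h)$ in $\mathcal D'(\mathbb R)$, together with the displayed limit form.

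The main obstacle is step (ii)/(iii): one must justify that the boundary values $K(\tau\pm i0;h)$ actually exist (as distributions in $\tau$) and that the limit $\kappa\searrow 0$ can be taken, i.e. that the weighted resolvent difference has a well-defined trace up to the real axis. This is exactly where the decay hypothesis \eqref{DA1} with $q>n/2$ enters, via the limiting absorption principle for $P_0(h)$ and $P_1(h)$ together with the trace-class bound on $\big[(P_j(h)-z_0)^{-q}(z-P_j(h))^{-1}\big]_0^1$; since all of this is carried out in \cite{Dimassi2} for the scalar non-semibounded case and the argument there uses only the abstract trace-class structure (not scalarity), it transfers verbatim to the matrix case. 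For this reason I would simply invoke \cite[Lemma 2.2]{Dimassi2}, indicating that the proof is identical once \eqref{DA1} is assumed, rather than reproduce the limiting-absorption estimates in detail.
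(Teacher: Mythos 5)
Your approach is essentially the same as the paper's: the paper's entire proof is the one-line remark that the argument is identical to that of Lemma~2.2 in \cite{Dimassi2}, which is precisely the chain of steps you sketch (Helffer--Sj\"ostrand representation \eqref{W1}, Green's formula on the half-planes $\{\pm\Im z>\kappa\}$ using the bound $K(z;h)=\mathcal O(h^{-n}|\Im z|^{-2})$ together with the almost-analyticity of $\tilde f$, existence of boundary values via limiting absorption under \eqref{DA1}, and Schwarz reflection $K(\bar z;h)=\overline{K(z;h)}$, which holds because $z_0$ is taken real below the spectra). One small slip worth fixing: the intermediate identity should carry a plus sign,
$$
\langle s_h'(\cdot),f\rangle = +\frac{1}{2\pi i}\int_{\mathbb R} f(\tau)\bigl(K(\tau+i0;h)-K(\tau-i0;h)\bigr)\,d\tau,
$$
which combined with $K(\tau+i0;h)-K(\tau-i0;h)=2i\,\Im K(\tau+i0;h)$ yields $s_h'(\tau)=\tfrac{1}{\pi}\Im K(\tau+i0;h)$ with the correct sign (as written, the minus sign would produce $-\tfrac{1}{\pi}\Im K$, contradicting the conclusion you state).
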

 
Let $I\in O_{\tau_0}$ such that \eqref{deff} holds on $\Sigma_{\tau}$ for all $\tau\in I$. For $M\geq 0$, we introduce the following $h$-dependent set
\begin{equation}\label{ensembles}
\Gamma_M : = \left\{z\in \mathbb C; \Re z \in I\; \text{and}\; \Im z> - M
\zeta(h)\right\},
\end{equation}
where we recall that $\zeta(h)=h\log(\frac{1}{h})$. 

The idea of the proof of the following lemma is based on the theory of resonance and close to the one of Theorem 1  in \cite{Sj}. 
\begin{lemma}\label{ext an}
In addition to the assumptions \eqref{DA1} and \eqref{deff}, we assume that $V(x)-V_\infty\in C^\infty_0({\mathbb R}^n; {\mathcal H}_N)$. For any $M>0$, the function $z\mapsto K(z;h)$ has an analytic extension from $\Gamma_0$ to $\Gamma_M$. Moreover, we have, uniformly for $z\in \Gamma_M$,
\begin{equation}\label{sk1}
K^{(k)}(z;h) =\mathcal{O} \big(
h^{-n}\zeta(h)^{-k-1}\big), \quad \forall k\in \mathbb N.
\end{equation}
In particular, uniformly for $\tau\in I$,
\begin{equation}\label{equuu}
s_{h}^{(k+1)}(\tau) = \mathcal{O}\big(h^{-n}\zeta(h)^{-k-1}\big), \quad \forall k\in \mathbb N.
\end{equation}
\end{lemma}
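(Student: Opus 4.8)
The plan is to continue $K(z;h)$ analytically across the real axis by conjugating the resolvents in \eqref{For2} with an exponential weight built from the escape function $G$, following the analytic distortion method (in the spirit of \cite{Sj} and of the proof of Theorem \ref{th1}). Fix $M>0$ and set $t:=M\zeta(h)=Mh\log(1/h)$. After replacing $G$ by a function bounded together with all its derivatives which coincides with $G$ on a neighborhood of $\Sigma_{\tau_0}$ and of ${\rm supp}(V-V_\infty)$ (harmless, since on the complement $z-p_j$ is elliptic for $\Re z$ close to $\tau_0\notin\sigma(V_\infty)$), introduce the $h$-pseudodifferential operator $U_t:=e^{tG^w(x,hD_x)/h}$. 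Because $t/h=M\log(1/h)$, its Weyl symbol $e^{(t/h)G}$ lies in $S^{l}_{\delta}(\mathbb R^{2n})$ for every $\delta>0$, with $l=\mathcal O(M)$, and $U_t$ is invertible with inverse $e^{-tG^w/h}$ in the same class. Then $P_{j,t}:=U_t^{-1}P_j(h)U_t$ is an $h$-pseudodifferential operator whose symbol is, modulo terms carrying logarithmic losses,
$$p_j(x,\xi)-it\,\{p_j,G\}(x,\xi)+\mathcal O(t^2),$$
the Poisson bracket understood in the hermitian-matrix sense. By cyclicity of the trace, for $\Im z>0$,
$$K(z;h)=(z-z_0)^q\,{\rm tr}\Big(\big[(P_{j,t}-z_0)^{-q}(z-P_{j,t})^{-1}\big]_0^1\Big),$$
and it suffices to extend the right-hand side.

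The key point is the invertibility of $z-P_{j,t}$ in the strip. Near $\Sigma_{\tau_0}$, hypothesis \eqref{deff} gives $\{p_1,G\}\geq C>0$, and since $\tau_0>e_{N,\infty}$ and $V-V_\infty$ has compact support one can arrange that $\{p_0,G\}$ is likewise positive on the relevant part of the $p_0$-energy surface (alternatively, decouple the $P_0$-contribution via \eqref{nouvelle forme} and distort it with the dilation generator $x\cdot\xi$, or invoke the explicit meromorphic continuation of the localized resolvent of $P_0=-h^2\Delta\otimes I_N+V_\infty$). Consequently, for $z\in\Gamma_{M'}$ with $0<M'<M$ and $h$ small, the symbol of $z-P_{j,t}$ has imaginary part bounded below by $c(\,\Im z+t\,)I_N\gtrsim c(M-M')\zeta(h)I_N$ on a neighborhood of $\Sigma_{\tau_0}$, and is elliptic elsewhere. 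The semiclassical sharp G\aa{}rding inequality — exactly as in the proof of Theorem \ref{th1} — then shows that $(z-P_{j,t})^{-1}$ exists and
$$\|(z-P_{j,t})^{-1}\|=\mathcal O\big(\zeta(h)^{-1}\big),\qquad z\in\Gamma_{M'}.$$
This gives the holomorphic extension of $(z-P_{j,t})^{-1}$, hence of $K(z;h)$, from $\Gamma_0$ to $\Gamma_{M'}$; running the argument with $M$ replaced by $2M$ yields the extension to $\Gamma_M$.

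For \eqref{sk1} with $k=0$, I would insert the above resolvent estimate into the trace-class representation \eqref{nouvelle forme}: each summand contains one factor of the compactly supported $V-V_\infty$, the fixed powers $(P_{j,t}-z_0)^{-q}$ (uniformly bounded, $z_0$ away from the spectra), and factors $(z-P_{j,t})^{-1}$; organizing the trace so that the uncancelled $z$-resolvents are counted correctly, together with the trace-norm bound $\mathcal O(h^{-n})$ for the localizing factors, yields $K(z;h)=\mathcal O\big(h^{-n}\zeta(h)^{-1}\big)$ uniformly on $\Gamma_M$. The bound on $K^{(k)}$ then follows from Cauchy's inequality on discs of radius $\sim(M-M')\zeta(h)$ inside $\Gamma_{M'}$, each differentiation costing a factor $\zeta(h)^{-1}$. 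Finally \eqref{equuu} is immediate from Lemma \ref{RepL}: since $K$ extends holomorphically across $I$, the distribution $s_h'=\tfrac1\pi\Im K(\cdot+i0;h)$ is a genuine smooth function on $I$, so $s_h^{(k+1)}(\tau)=\tfrac1\pi\Im K^{(k)}(\tau;h)=\mathcal O\big(h^{-n}\zeta(h)^{-k-1}\big)$.

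The main obstacle is the symbolic bookkeeping for the distorted operator: because the distortion parameter $t=M\zeta(h)$ tends to $0$ only logarithmically relative to $h$, the weight $U_t$ belongs to the exotic class $S^l_\delta$ rather than $S^0$, and one must check that every error produced along the way — the remainder $\mathcal O(t^2)$ in the symbol of $P_{j,t}$, the $\mathcal O(h)$ error in the sharp G\aa{}rding inequality, and the losses in each composition — is absorbed by the gain $\zeta(h)\gg h$ in the lower bound for $\Im(z-P_{j,t})$. This is exactly what forces the choice $\zeta(h)=h\log(1/h)$ rather than a power of $h$, and why the hypothesis $V-V_\infty\in C_0^\infty$ is imposed, so that the non-analyticity of $V$ never enters (the conjugation is the identity outside a compact set, making the localization of the $P_0$-contribution legitimate). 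A secondary difficulty is to patch the distortions of $P_1$ and $P_0$ consistently over the non-compact energy surface.
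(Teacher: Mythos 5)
Your overall strategy is the right one (conjugation by an exponential weight built from the escape function, the sharp G\aa{}rding inequality to get the resolvent estimate, Cauchy inequalities on discs of radius $\sim\zeta(h)$ for the derivatives, and Lemma~\ref{RepL} to descend to the SSF), and you correctly identify that the error budget is a logarithmic one ($t/h=M\log(1/h)$). However, there is a genuine gap in the first step, where you propose to ``replace $G$ by a function bounded together with all its derivatives which coincides with $G$ on a neighborhood of $\Sigma_{\tau_0}$ and of ${\rm supp}(V-V_\infty)$.'' For $\tau_0>e_{N,\infty}$, $\Sigma_{\tau_0}$ is \emph{not} compact: it is bounded in $\xi$ but unbounded in $x$, and there $G$ must behave like $x\cdot\xi$, which is unbounded. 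No bounded function can coincide with $G$ on a full neighborhood of $\Sigma_{\tau_0}$, so the truncated weight would lose the escape property \eqref{deff} exactly on the non-compact part of the energy surface, and the imaginary part of the distorted symbol would no longer be bounded below there. You flag ``patching the distortions over the non-compact energy surface'' as a secondary difficulty, but it is in fact the main technical obstacle, and your fallback (distorting the $P_0$ piece of \eqref{nouvelle forme} separately with $x\cdot\xi$) does not cleanly resolve it, since the $P_1$-resolvent — with $V$ nontrivial at infinity — still requires an unbounded escape function at large $|x|$.

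The paper handles this by \emph{splitting} the escape function rather than truncating it: one may assume $G(x,\xi)=x\cdot\xi$ for $|(x,\xi)|$ large (as you note), and then $\tilde G:=G-F(x)\cdot\xi$ has \emph{compact support} when $F$ is a vector field equal to $x$ for $|x|$ large and vanishing near ${\rm supp}(V-V_\infty)$. The unbounded part $F(x)\cdot\xi$ is implemented not as an exponential weight but as the generator of the analytic dilation $U_\omega$ (Eq.~\eqref{unitary distortion}); $U_\omega$ is unitary for real $\omega$, so cyclicity of the trace gives $K_\omega=K$, and analytic perturbation in $\omega$ lets one take $\omega_1=iM\zeta(h)$. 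Only the compactly supported remainder $\tilde G$ is quantized and exponentiated, so $e^{\pm M\zeta(h)\tilde G^w/h}$ is a legitimate bounded $h$-pseudodifferential operator in $S^l_\delta$. The two conjugations together reproduce, at the symbol level, the full Poisson bracket $\{p_j,G\}=\{p_j,\tilde G+F\cdot\xi\}$ (Eq.~\eqref{Im}), and \eqref{deff} then gives the lower bound on $-\Im\tilde p_{j,\omega_1}$ on $\Sigma_I$, with a microlocal partition of unity and ellipticity off $\Sigma_I$ supplying the rest of the resolvent estimate (Lemma~\ref{4.3}). Without this decomposition your conjugation is either ill-defined (unbounded weight) or loses the escape property at infinity, so the analytic continuation of $K$ across the real axis is not established.
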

\begin{proof}
The estimate (\ref{equuu}) follows immediately from (\ref{sk1}) and the representation of the SSF given by Lemma \ref{RepL}.
Hence it is enough to prove \eqref{sk1}.

Let $F : \mathbb R^n \rightarrow \mathbb R^n$ be a smooth vector field such that $F = 0$ in a neighbourhood of $\text{supp}(V-V_{\infty})$ and $F(x)=x$ for $|x|$ large enough. For $\omega\in\mathbb R$  small enough, we denote $U_\omega : L^2({\mathbb R}^n; \mathbb C^N)\rightarrow L^2({\mathbb R}^n, \mathbb C^N)$ the unitary operator defined by 
\begin{equation}\label{unitary distortion}
U_{\omega} \phi(x):= | \text{det}(1+\omega \nabla F(x))   |^{\frac{1}{2}} \phi(x+\omega F(x)),
\end{equation}
and set
$$
P_{j,\omega}(h):=U_{\omega}P_j(h) (U_{\omega})^{-1},\quad j=0,1.
$$
They are differential operators with analytic coefficients with respect to $\omega$, and can
 be analytically continued to small enough complex values of $\omega$.
 It follows from the analytic perturbation theory (see \cite{Kato})  that  for $\omega_0$ small enough,
  $\omega \in ]-\omega_0,\omega_0[\:\mapsto P_{j,\omega}(h)$, $j=0,1$,  extends to an analytic  type
  ${\mathcal A}$-family  of operators on %
  $D(\omega_0):=\{\omega\in \mathbb C; \vert \omega\vert <\omega_0\}$ with domain $H^2({\mathbb R}^n;\mathbb C^N)$.

We set, first for real $\omega$ and   $\Im z>0$,
\begin{equation}\label{anal}
K_{\omega}(z;h):=(z-z_0)^q {\rm tr }\bigg( \big[(P_{j,\omega}(h)-z_0)^{-q}(z-P_{j,\omega}(h))^{-1} \big]_0^1 \bigg).
\end{equation}
Since $U_\omega$ is unitary for real  $\omega$, 
it follows from the cyclicity of the trace that
\begin{equation}\label{egalite pour teta reelle}
K(z;h)=K_{\omega}(z;h), \quad \forall \omega \in ]-\omega_0,\omega_0[, \forall\;  \Im z>0.
\end{equation}

On the other hand, for $\Im z>M_0\zeta(h)$ for a given $h$-independent $M_0>0$, the function $\omega  \mapsto K_{\omega}(z)$
 is analytic in $\omega\in D(2c M_0\zeta(h))$ with some $c>0$ independent of $h$ and $M_0$.
 Thus,  by the uniqueness theorem of analytic continuation, the identity \eqref{egalite pour teta reelle} remains true for  $\Im z>M_0\zeta(h)$ and $\omega\in D(2c M_0\zeta(h))$, i.e.,
\begin{equation}\label{l}
K(z;h)=K_{\omega}(z;h), \quad \forall \omega \in D\big(2c M_0\zeta(h)), \forall \; \Im z> M_0\zeta(h).
\end{equation}
From now on we fix $M = c M_0$ and set $\omega_1=iM\zeta(h)$. 

Since $\tau_0  > {e_{N,\infty}}$, $x\cdot \xi$ is an escape function for $p_1(x,\xi)$ for 
$\vert(x,\xi)\vert$ 
large enough. Thus, without any loss of generality, we may assume that $G(x,\xi)= x\cdot \xi$ for $\vert(x,\xi)\vert$ large enough. Then $\tilde G(x,\xi):=G(x,\xi)-F(x)\cdot \xi$
has a compact support, and in particular its quantization $\tilde{G}^w(x,h D_x)$ is $L^2$-bounded by the Calder\'on-Vaillancourt theorem and the operators $e^{\pm\frac{M\zeta(h)}{h} \tilde{G}^w(x,h D_x)}$ are
well-defined.

Let us define
$$
\tilde{P}_{j,\omega_1}(h):=e^{-\frac{M\zeta(h)}{h}\tilde{G}^w(x,h D_x)} P_{j,\omega_1}(h) e^{\frac{M\zeta(h)}{h} \tilde{G}^w(x,h D_x)}, \quad j=0,1,
$$
\begin{equation}\label{steta}
\tilde K_{\omega_1}(z;h):=(z-z_0)^q {\rm tr}\bigg(\bigg[\big(\tilde{P}_{j,\omega_1}(h)-z_0\big)^{-q}\big(z-\tilde{P}_{j,\omega_1}(h)\big)^{-1} \bigg]_0^1\bigg).
\end{equation}
From Lemma \ref{4.3} below, 
$z \mapsto \tilde{K}_{\omega_1}(z;h)$ is analytic in $\Gamma_{c M}$ for some $c>0$. Again by the cyclicity of the trace and the uniqueness of the analytic continuation, 
we conclude 
\begin{equation}\label{var}
\tilde{K}_{\omega_1}(z;h)=K(z;h), \quad \forall z\in \Gamma_{cM}.
\end{equation}
This with the resolvent estimate \eqref{main estimation2} leads to
$$K(z;h)={\mathcal O}(h^{-n} \zeta(h)^{-1}),$$
uniformly for $z\in \Gamma_{c M}$, which yields \eqref{sk1} for $k=0$.
Next, taking the derivative of \eqref{steta} and applying \eqref{main estimation2} we obtain 
\eqref{sk1} for $k \geq 1$ (Recall that the trace of semiclassical quantization of a symbol in a suitable class is of ${\mathcal O}(h^{-n})$, see \cite[Theorem 9.4]{dimassi}).
\end{proof} 

\begin{lemma}
\label{4.3}
There exists $c>0$ such that  for all $M>0$ the operator  $\tilde P_{j,\omega_1}(h)-z$ is invertible for every $z\in \Gamma_{c M}$. Moreover,
one has, uniformly in this domain,
\begin{equation}\label{main estimation2}
\Vert \big(z-\tilde{P}_{j,\omega_1}(h)\big)^{-1}\Vert = \mathcal{O}\big(\zeta(h)^{-1}\big).
\end{equation}
\end{lemma}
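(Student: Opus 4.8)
The plan is to deduce invertibility and the bound \eqref{main estimation2} from the two-sided coercive estimate
$$\Vert (z-\tilde P_{j,\omega_1}(h))u\Vert_{L^2}\ge \tfrac1C\,\zeta(h)\,\Vert u\Vert_{L^2},\qquad \Vert (z-\tilde P_{j,\omega_1}(h))^{*}u\Vert_{L^2}\ge \tfrac1C\,\zeta(h)\,\Vert u\Vert_{L^2},$$
valid uniformly for $z\in\Gamma_{cM}$ once $c>0$ is chosen small (independently of $h$ and $M$), for $u\in H^2(\mathbb R^n;\mathbb C^N)$. Recall that $\tilde P_{j,\omega_1}(h)$ arises from $P_j(h)$ by two commuting conjugations: the analytic distortion $U_{\omega_1}$ with $\omega_1=iM\zeta(h)$, supported in $\{F\neq0\}$, i.e. near spatial infinity; and the conjugation by $e^{\mp\frac{M\zeta(h)}{h}\tilde G^w}$ with $\tilde G=G-F(x)\cdot\xi\in C_0^\infty(\mathbb R^{2n})$, supported in a bounded region of phase space. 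Since $\frac{M\zeta(h)}{h}=M\log\frac1h$, the weight $e^{\mp\frac{M\zeta(h)}{h}\tilde G}$ lies in $S^{\ell}_\delta(\mathbb R^{2n})$ for every $\delta>0$, with $\ell$ proportional to $M\Vert\tilde G\Vert_{L^\infty}$, so this conjugation is licit in the $h$-pseudodifferential calculus exactly as in \eqref{exp}. Both values $j=0,1$ are treated by the same argument (for $j=0$, $x\cdot\xi$ is a global escape function for $p_0$ since $\tau_0>e_{N,\infty}$).

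First I would compute the Weyl symbol of $z-\tilde P_{j,\omega_1}(h)$ modulo negligible terms. The distortion produces, for $|x|$ large (where $\nabla F=I$ and $\tilde G=0$), the leading symbol $\bigl({}^{t}(1+\omega_1\nabla F(x))^{-1}\xi\bigr)^2I_N+V_\infty-z=\xi^2 I_N+V_\infty-z-2iM\zeta(h)|\xi|^2 I_N+\mathcal{O}(\zeta(h)^2)$, whose imaginary part is negative and $\le -cM\zeta(h)|\xi|^2$; where $|\xi|$ is small this is compensated by the ellipticity of $\xi^2I_N+V_\infty-\Re z$, because $\tau_0\notin\sigma(V_\infty)$ and $\tau_0>e_{N,\infty}$. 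The conjugation by $e^{\mp\frac{M\zeta(h)}{h}\tilde G^w}$ adds, to leading order, a term $-iM\zeta(h)\{p_1,\tilde G\}^w+\mathcal{O}(\zeta(h)^2+\zeta(h)h^2)$, which on $\{F=0\}\supset\operatorname{supp}(V-V_\infty)$ equals $-iM\zeta(h)\{p_1,G\}^w$; by the escape condition \eqref{deff} this contributes an imaginary part that is negative definite, $\le -CM\zeta(h)I_N$, on $\Sigma_{\tau_0}$ (and, by continuity, on a neighbourhood of it). Choosing the signs of the two mechanisms so that they reinforce, the full leading symbol $q_{\omega_1}(x,\xi,z)$ of $\tilde P_{j,\omega_1}(h)$ satisfies, uniformly for $z\in\Gamma_{cM}$ with $c$ small, a microhyperbolicity-type matrix inequality of the form
$$\Im\bigl(z-q_{\omega_1}(x,\xi,z)\bigr)+C\zeta(h)\,(z-q_{\omega_1})^{*}(z-q_{\omega_1})\ge c'\,\zeta(h)\,I_N$$
on every compact subset of $\mathbb R^{2n}$, with $C,c'>0$ independent of $h$ and $M$; this is the analogue of \eqref{Garding11}.

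Next I would pass from symbols to operators. Because $\tilde P_{j,\omega_1}(h)$ is unbounded, I would first dispose of the region $\{|\xi|\ge R\}$ together with, near infinity in $x$, the region where $\xi^2 I_N+V_\infty-z$ is already elliptic: there $z-q_{\omega_1}$ is elliptic with inverse of size $\mathcal{O}(1)$, so a standard parametrix construction together with the Calder\'on--Vaillancourt theorem gives the coercive bound modulo an error microlocalized to a fixed compact set. On that remaining compact microlocal region, the matrix inequality above combined with the sharp G\aa rding inequality for systems (as in the derivation of \eqref{GI}) yields
$$\Im\bigl({\rm Op}_h^w(z-q_{\omega_1})u,u\bigr)+C\zeta(h)\Vert{\rm Op}_h^w(z-q_{\omega_1})u\Vert^2\ge \tfrac{c'}{2}\zeta(h)\Vert u\Vert^2-\mathcal{O}(h)\Vert u\Vert^2\ge\tfrac{c'}{4}\zeta(h)\Vert u\Vert^2,$$
since $h=o(\zeta(h))$; absorbing the first term with $ab\le\varepsilon a^2+\varepsilon^{-1}b^2$ exactly as in the proof of Theorem \ref{th1} gives $\Vert{\rm Op}_h^w(z-q_{\omega_1})u\Vert\gtrsim\zeta(h)\Vert u\Vert$, and hence the same for $z-\tilde P_{j,\omega_1}(h)$ after restoring the lower order symbolic corrections, which are $o(\zeta(h))$ in operator norm. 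The same reasoning applied to $(z-\tilde P_{j,\omega_1}(h))^{*}$, whose symbol has the identical structure (with the imaginary parts keeping their sign), gives the adjoint estimate; together the two bounds yield bijectivity of $z-\tilde P_{j,\omega_1}(h)$ on $\Gamma_{cM}$ and the resolvent estimate \eqref{main estimation2}.

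The step I expect to be the main obstacle is the careful bookkeeping of the two regimes and of the symbol classes: one must verify that the doubly-conjugated operator has a leading symbol whose imaginary part is uniformly $\preceq -c\zeta(h)$ along the whole of the relevant part of the energy shell — so the distortion direction near infinity and the escape direction in the bounded region must be matched so as to reinforce rather than cancel — and that every error term generated by the $h$-pseudodifferential calculus with the $S_\delta$-weight $e^{\mp M\log(1/h)\tilde G}$ is genuinely $o(\zeta(h))$ uniformly in $M$. Securing the global-in-$\xi$ ellipticity that reduces the sharp G\aa rding step to an $L^2$-bounded operator on a fixed compact microlocal region is the other delicate point; both are handled by the same localization philosophy already used in the proofs of Theorems \ref{th1} and \ref{th2}.
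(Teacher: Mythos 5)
Your proposal follows the same overall strategy as the paper's proof: compute the Weyl symbol of the doubly-conjugated operator, use the escape condition to make $\Im\tilde p_{j,\omega_1}\leq -CM\zeta(h)$ near the energy shell $\Sigma_I$, use ellipticity of the real part away from it, pass to operators via sharp G\aa{}rding and a parametrix, and deduce the two-sided coercive estimate. The one genuine structural difference is in how you glue the two regimes. You propose to encode both mechanisms in a single microhyperbolicity-type quadratic form inequality
$\Im(z-q_{\omega_1}) + C\zeta(h)(z-q_{\omega_1})^{*}(z-q_{\omega_1}) \ge c'\zeta(h) I_N$
borrowed from the proof of Theorem \ref{th1}, and then localize so that sharp G\aa{}rding applies on a compact microlocal region. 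The paper instead works at the operator level with an explicit quadratic partition of unity $\Psi_{j,1}^2+\Psi_{j,2}^2={\rm Id}+\mathcal O(h^\infty)$ (from \cite{Sj}, Lemma 3.2), decomposes $\tilde P_{j,\omega_1}-z=(A-\Re z)+i(B-\Im z)$, applies G\aa{}rding directly to $-B$ on $\operatorname{supp}\psi_{j,1}$ and a parametrix to $A-\Re z$ on $\operatorname{supp}\psi_{j,2}$, and glues using commutator estimates like \eqref{oubli6}. Your symbol-level route is plausible and arguably closer to the logic of Theorem \ref{th1}, but it leaves the microlocalization step --- restricting the G\aa{}rding argument to a compact microlocal region while controlling the commutators generated by the cutoff --- only sketched; the paper's $L^2$-partition handles exactly this bookkeeping explicitly, which is why it is the cleaner choice here, where the escape condition gives strict negativity of $\Im\tilde p_{j,\omega_1}$ on $\Sigma_I$ and so the quadratic $(z-q)^*(z-q)$ correction is not actually needed near the shell.
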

\begin{proof}
We have
 \begin{equation}\label{Oubl4}
\tilde{P}_{j,\omega_1}(h)=e^{{-}\frac{M\zeta(h)}{h}{\rm ad}_{\tilde G^w}} {P_{j,\omega_1}(h)} \sim\sum_{k=0}^\infty\frac{ {(-M\zeta(h))^k}}{k!} \bigg(\frac{1}{h}{\rm ad}_{\tilde G^w}\bigg)^kP_{j,\omega_1}(h),
\end{equation}
where ${\rm ad}_{\tilde G^w} P_{j,\omega_1}(h) =[\tilde G^w, P_{j, \omega_1}(h)]={\mathcal O}(h)$ \footnote{We have used the fact  $G$ is scalar valued only to prove that 
$[\tilde G^w, P_{j, \omega_1}(h)]={\mathcal O}(h)$.}.
By definition, $\zeta(h)$ tends to $0$ as $h\searrow 0$. Combining this with the boundedness of
$h^{-1}{\rm ad}_{\tilde G^w}$ we find 
 that the asymptotic expansion \eqref{Oubl4} makes sense. In particular,
\begin{equation*}\label{developpement}
\tilde{P}_{j,\omega_1}(h) = P_{j,\omega_1}(h) -\frac{M\zeta(h)}{h} \big[ \tilde{G}^w(x,h D_x),P_{j,\omega_1}(h) \big]+\mathcal{O}(M^2\zeta(h)^2).
\end{equation*}  
Let $p_{j,\omega_1}$, $\tilde p_{j,\omega_1}$ be the Weyl symbols of $P_{j,\omega_1}$, $\tilde P_{j,\omega_1}$ respectively. 
We obtain  from the $h$-pseudodifferential calculus (\cite[Ch. 7]{dimassi}),
\begin{equation}\label{operateur conjugue}
\tilde{p}_{j,\omega_1} = p_{j,\omega_1} - i M\zeta(h)\{p_{j,\omega_1}, \tilde{G}\}+\mathcal{O}
(M^2\zeta(h)^2),
\end{equation}
and in particular, using the Taylor expansion of $p_{j,\omega_1}$ with respect to $\omega_1$;
$$
p_{j,\omega_1}=p_j-iM\zeta(h)\{p_j,F(x)\cdot \xi\}+{\mathcal O}(M^2\zeta(h)^2),
$$
we obtain
\begin{equation}\label{Im}
\Im (\tilde{p}_{j,\omega_1})= -M\zeta(h)\{{p_j}, \tilde{G}+F(x)\cdot\xi\}+\mathcal{O}(M^2\zeta(h)^2).
\end{equation}
\begin{equation}\label{Rel}
{\Re (\tilde{p}_{j,\omega_1})=p_j + {\mathcal O}(M\zeta(h)).}
\end{equation}
Since  $G(x,\xi)=\tilde G(x,\xi)+ {F(x)}\cdot \xi$ satisfies the assumption \eqref{deff},  it follows from \eqref{Im} and \eqref{Rel} that there exist $C>0$ and $I\in O_{\tau_0}$ such that
\begin{equation}\label{main est}
-\Im (\tilde{p}_{1,\omega_1})(x,\xi) \geq C M\zeta(h), \quad \forall (x,\xi)\in \Sigma_{I}:=\bigcup_{\tau\in I}\Sigma_{\tau},
\end{equation}
Of course, the same estimate holds also for $\Im (\tilde{p}_{0,\omega_1})(x,\xi)$, since \eqref{deff} always holds for $p_0$ with $G=x\cdot \xi$ for any $\tau_0>e_{N,\infty}$.

We write $\tilde{P}_{j,\omega_1}(h)-z=A_{j,\omega_1}(h)-\Re z+i (B_{j,\omega_1}(h)-\Im z)$ with
$$
A_{j,\omega_1}(h) = \frac12\bigg( \tilde{P}_{j,\omega_1}(h) + \big( \tilde{P}_{j,\omega_1}(h)\big)^*\bigg), 
\quad 
B_{j,\omega_1}(h)=\frac1{2i} \bigg(\tilde{P}_{j,\omega_1}(h)-\big(\tilde{P}_{j,\omega_1}(h)\big)^*\bigg).
$$

Let $\psi_{j,1},\psi_{j,2} \in C^{\infty}(\mathbb R^{2n};\mathbb R)$ be such that, for $I'\Subset I$,
$$
\psi_{j,1}^2+\psi_{j,2}^2=1, \quad {\psi_{j,1}}= 1 \; \text{on}\; \Sigma_{I'}^{j}, \quad  \text{supp}(\psi_{j,1}) \subset  \Sigma_{I}^{j}.
$$
According to Lemma 3.2 in \cite{Sj}, there exist two self-adjoint operators $\Psi_{j,1}$ and $\Psi_{j,2}$ with principal symbols respectively $\psi_{j,1}$ and $\psi_{j,2}$ such that
\begin{equation}\label{Ident}
(\Psi_{j,1})^2 + (\Psi_{j,2})^2={\rm  Id} + \mathcal{O}(h^{\infty}) \quad \text{in}\;\; \mathcal{L}(L^2(\mathbb R^n)).
\end{equation}
We denote by the same letters the operators $\Psi_{j,i}:=\Psi_{j,i} I_N$, $i=1,2$. On the support of $\psi_{j,1}$,  we see from (\ref{main est}) that the principal symbol of $-B_{j,\omega_1}(h)$ is estimated from below by $ C M \zeta (h)$.
Then by the G\aa rding's inequality,
we obtain, uniformly for $\Im z>-\frac{C}{3}M\zeta(h)$,
\begin{equation}\label{Gard}
\Vert (\tilde{P}_{j,\omega_1}(h)-z) \Psi_{j,1} u \Vert \cdot \Vert \Psi_{j,1} u\Vert \geq |\langle (\tilde{P}_{j,\omega_1}(h) -z)\Psi_{j,1} u, \Psi_{j,1} u   \rangle | 
\end{equation}
$$
\geq  | \langle (\Im \tilde{P}_{j,\omega_1}(h)-\Im z) \Psi_{j,1} u, \Psi_{j,1} u\rangle| 
= \langle (\Im z-B_{j,\omega_1}(h)   ) \Psi_{j,1} u, \Psi_{j,1} u \rangle  $$
$$
\geq ( {\Im z+CM\zeta(h)-\mathcal O(h))\Vert \Psi_{j,1} u \Vert^2\geq  \frac{C}{3} M\zeta(h)\Vert \Psi_{j,1} u \Vert^2}.
$$

On the other hand,  since $A_{j,\omega_1}(h)-\Re z$ is uniformly elliptic on the support of $\psi_{j,2}$
and $\Re z\in I$, the  symbolic  calculus permits us to construct   a  parametrix  $R\in S^0(\langle \xi\rangle^{-2}) $ of  $A_{j,\omega_1}(h)-\Re z$ such that, in the symbol sense,
$$R\# (A_{j,\omega_1}(h)-\Re z)\psi_{j,2}=\psi_{j,2}+{\mathcal O}(h^\infty),$$
where  $\#$ stands for the Weyl composition of symbols.  As a consequence, we obtain
\begin{eqnarray}\label{ell}
\Vert( \tilde{P}_{j,\omega_1}(h)-z) \Psi_{j,2} u \Vert \geq \frac{1}{C'} \Vert \Psi_{j,2} u \Vert-{\mathcal O}(h^\infty)\Vert u\Vert^2.
\end{eqnarray}
Furthermore,  by means of standard  elliptic arguments, one can easily prove  the following semiclassical  inequality 
\begin{equation}\label{oubli6}
\Vert [\tilde{P}_{j,\omega_1}(h), \Psi_{j,i}] u \Vert \leq C_2h (\Vert \tilde{P}_{j,\omega_1}(h) u \Vert+\Vert u\Vert), \quad \forall u\in H^2({\mathbb R}^n; \mathbb C^N).
\end{equation}
Combining \eqref{Ident},  \eqref{Gard},  (\ref{ell}), and \eqref{oubli6}   with the estimate
\begin{equation}\label{Oubli1234}
\Vert (\tilde{P}_{j,\omega_1}(h)-z)u \Vert^2 {=} \sum_{i=1}^2\Vert \Psi_{j,i}(\tilde{P}_{j,\omega_1}(h)-z) u \Vert^2    -  {\mathcal   O}( h^{\infty})      \Vert (\tilde{P}_{j,\omega_1}(h)-z) u \Vert^2 
\end{equation}
$$
\geq \frac{1}{2}\sum_{i=1}^2 \Vert (\tilde{P}_{j,\omega_1}(h)- z)\Psi_{j,i} u \Vert^2-\sum_{i=1}^2\Vert [\tilde{P}_{j,\omega_1}(h), \Psi_{j,i}] u \Vert^2 -{\mathcal  O}( h^{\infty})   \Vert (\tilde{P}_{j,\omega_1}(h)-z) u \Vert^2 ,
$$
 we deduce, for  $z\in \Gamma_{{ c M}}$ (with $c>0$ independent of $M$ and $h$) and sufficiently small $h$,
\begin{equation}\label{Est}
\Vert (\tilde{P}_{j,\omega_1}(h)-z)  u\Vert \geq \frac{\zeta(h)}{C} \Vert u\Vert.
\end{equation}
 By the same arguments, we prove an estimate similar to \eqref{Est}  for the adjoint operator
$\tilde{P}_{j,\omega_1}(h)^*-\overline{z}$ and 
we conclude that $\tilde{P}_{j,\omega_1}(h)-z$ is invertible for every $z\in \Gamma_{cM}.$  Moreover \eqref{Est} yields
the resolvent estimate \eqref{main estimation2}.
\end{proof}

\textbf{End of the proof of Theorem \ref{main semiclassical}.} 

Using Lemma \ref{ext an}, and applying  Theorem \ref{th1} and Remark 3.1  to the right hand  side of  \eqref{weak dim} we obtain  the following lemma.
\begin{lemma}\label{Est23}
Assume that $\varphi \in C_0^{\infty}({]-1,1[};\mathbb R)$ is 0 in a neighborhood of 0.
Let $\kappa$ be  a positive constant independent of $h$ and $\nu\in \mathbb N$. {Under the assumptions of Lemma \ref{RepL}}, there exists $I\in O_{\tau_0}$ such that  for 
$f\in C_0^{\infty}(I;\mathbb R)$, we have 
\begin{equation}\label{main esttt}
\langle  s_{h}'(\cdot), \mathcal{F}_{h}^{-1}\varphi_{\varepsilon} (\tau - \cdot) f(\cdot)\rangle = \mathcal{O}(h^{\infty}),
\end{equation}
uniformly for $\tau\in {\mathbb R}$ and $\varepsilon\in ]\kappa,h^{-\nu}[$.
\end{lemma}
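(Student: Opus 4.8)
\textbf{Proof of Lemma \ref{Est23} (plan).} By the Helffer--Sj\"ostrand representation \eqref{weak dim} applied with $\theta=\varphi_\varepsilon$, the left-hand side of \eqref{main esttt} equals
$$
{\mathcal I}(\tau,\varepsilon;h)=\frac{1}{\pi}\int_{\mathbb C}\bar{\partial}\tilde f(z)\,{\mathcal F}_h^{-1}\varphi_\varepsilon(\tau-z)\,K(z;h)\,L(dz),
$$
where $K$ is the function \eqref{For2}, holomorphic in $\mathbb C\setminus\mathbb R$ and satisfying the a priori bound \eqref{estimate G}. As at the beginning of the proof of Theorem \ref{th1} I would write $\varphi=\varphi_1+\varphi_2$ with ${\rm supp}\,\varphi_1\subset\,]0,+\infty[$ and ${\rm supp}\,\varphi_2\subset\,]-\infty,0[$; both pieces are handled the same way, but for $\varphi_2$ the roles of the upper and lower half-planes are interchanged and the analytic continuation of $K$ that one then needs in the lower half-plane follows from Lemma \ref{ext an} via the conjugation symmetry $K(\bar z;h)=\overline{K(z;h)}$. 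It therefore suffices to treat $\varphi_1$, and (after the usual partition-and-rescaling in $t$, which leaves $\varepsilon$ in a range of the same type) we may assume ${\rm supp}\,\varphi_1\subset\,]\frac{1}{2},1[$. Write $\varphi$ for $\varphi_1$.

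Insert the cut-off $\psi_L$ of \eqref{L}, \eqref{zeta} and split ${\mathcal I}={\mathcal I}_++{\mathcal I}_-$ over $\pm\Im z>0$. For ${\mathcal I}_-$, by \eqref{PW} the factor ${\mathcal F}_h^{-1}\varphi_\varepsilon(\tau-z)$ decays exponentially in $\{\Im z<0\}$, so ${\mathcal I}_-={\mathcal O}(h^\infty)$ uniformly by the estimate \eqref{D1113}; this holds over the whole range $\varepsilon\in\,]\kappa,h^{-\nu}[$ because $M$ is arbitrary while $\varepsilon\le h^{-\nu}$ is of polynomial order. For ${\mathcal I}_+$ the plan is to move the contour $\{\Im z>0\}$ down into $\{\Im z<0\}$ by Cauchy's theorem and then estimate as for ${\mathcal I}_-$. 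This is exactly the mechanism of Remark 3.1, whose hypotheses are verified here: Lemma \ref{ext an} (valid in the present setting under \eqref{deff}, which is in force in Theorem \ref{main semiclassical}, and --- if one only assumes \eqref{DA1} --- after the standard approximation reducing to $V-V_\infty\in C_0^\infty$) asserts that $K(z;h)$ continues holomorphically from $\Gamma_0$ to $\Gamma_M$ for every $M>0$, with $K(z;h)={\mathcal O}(h^{-n}\zeta(h)^{-1})$ there; and, because $\varepsilon\ge\kappa$ with $\kappa$ fixed, ${\rm supp}\,\psi_L\subset\{|\Im z|\le 2M\zeta(h)/\kappa\}$ shrinks to the real axis at the rate $\zeta(h)$, so for $M$ chosen large enough ${\rm supp}\,\psi_L\subset\Gamma_M$ and the deformation is legitimate. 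Carrying out the contour shift (with $O$ a suitable bounded complex neighbourhood of ${\rm supp}\,\tilde f$ whose real part lies in $I$, and $d(n)=n+1$ in the notation of Remark 3.1) and estimating the deformed integral over $\{\Im z<0\}$ as ${\mathcal I}_-$ --- now with the polynomial bound ${\mathcal O}(h^{-n-1})$ for $K$ in place of \eqref{estimate G} --- gives ${\mathcal I}_+={\mathcal O}(h^\infty)$, uniformly for $\tau\in\mathbb R$ and $\varepsilon\in\,]\kappa,h^{-\nu}[$. Together with the bound on ${\mathcal I}_-$ this proves \eqref{main esttt}.

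The whole difficulty sits in the input Lemma \ref{ext an} --- the holomorphic continuation of $K$ across the real axis down to depth $M\zeta(h)$ together with the polynomial resolvent bound \eqref{main estimation2} --- which is exactly where the escape-function hypothesis \eqref{deff} and the analytic-distortion and conjugation constructions of Lemmas \ref{ext an} and \ref{4.3} are used. Granting it, everything above is the routine contour deformation and exponential book-keeping already performed in the proof of Theorem \ref{th1} and in Remark 3.1; the only genuinely new feature is the uniformity over the enlarged range $\varepsilon\in\,]\kappa,h^{-\nu}[$, which is precisely what the rate-$\zeta(h)$ shrinking of ${\rm supp}\,\psi_L$ (a consequence of $\varepsilon\ge\kappa$) provides.
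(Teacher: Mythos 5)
Your proposal is correct and follows exactly the route the paper indicates: the paper's proof is a one-line citation of Lemma \ref{ext an}, Theorem \ref{th1} and Remark 3.1 applied to the right-hand side of \eqref{weak dim}, and you have correctly unpacked that citation, including the reduction to ${\rm supp}\,\varphi\subset\,]\frac12,1[$, the use of the conjugation symmetry $K(\bar z;h)=\overline{K(z;h)}$ for the piece supported on the negative axis, the observation that $\varepsilon\ge\kappa$ keeps ${\rm supp}\,\psi_L$ inside $\Gamma_M$ so that Lemma \ref{ext an} licenses the contour shift, and the appeal to Remark 4.1 to remove the compact-support hypothesis on $V-V_\infty$.
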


Now let $\theta\in C_0^{\infty}(]-1, 1[;\mathbb R)$  be equal to one near $0$ and let  $f\in C_0^{\infty}(I;\mathbb R)$ be as in the above lemma.  
Suppose  $\varepsilon>0$ is  a small enough constant independent of $h$ and $\tilde\varepsilon:= h^{-\nu}$ with $\nu \in \mathbb N$ arbitrary large.

{Repeating the same construction as in the proof of Theorem \ref{WASG}, we represent the difference $\theta_{\tilde\varepsilon}-\theta_{\varepsilon}$ as a finite sum $\sum_{0\leq j\leq N(h)} \varphi_{\varepsilon_j}$ with $\varphi_{\varepsilon_j}$ as in Lemma \ref{Est23} and
$N(h)= \mathcal{O}(h^{-\nu})$.   Applying Lemma \ref{Est23}   to each term,  we get 
\begin{equation}\label{ecr}
\langle s_{h}'(\cdot), \mathcal{F}_{h}^{-1}\theta_\varepsilon(\tau-\cdot)f(\cdot)\rangle = \langle s_{h}'(\cdot), \mathcal{F}_{h}^{-1}\theta_{\tilde\varepsilon}(\tau-\cdot)f(\cdot)\rangle + \mathcal{O}(h^{\infty}),
\end{equation}
uniformly with respect to $\tau\in \mathbb R$.

Next, by a change of variable we have
$$\langle s'_{h}(\cdot), \mathcal{F}_{h}^{-1}\theta_{\tilde\varepsilon}(\tau-\cdot)f(\cdot)\rangle =\int_{\mathbb R} {\mathcal F}_1^{-1}\theta(\lambda) (fs'_h)(\tau-h^{1+\nu} \lambda)d\lambda.$$
Applying Taylor's formula to the function $\lambda \mapsto (fs'_h)(\tau-h^{1+\nu} \lambda)$ at $\lambda=0$ and using  \eqref{equuu} with {$k=1$},
we get
\begin{equation}\label{fin}
\langle s'_{h}(\cdot), \mathcal{F}_{h}^{-1}\theta_{\tilde\varepsilon}(\tau-\cdot)f(\cdot)\rangle = s_{h}'(\tau)f(\tau)+ \mathcal{O}(h^{\nu+1-n}\zeta(h)^{-2}),
\end{equation}
uniformly for $\tau\in \mathbb R$ since $ \int_{\mathbb R}   \mathcal{F}_1^{-1}\theta(\lambda)d\lambda=\theta(0)=1$. 

From (\ref{ecr}) and (\ref{fin})   we deduce 
\begin{equation}\label{end}
s_{h}'(\tau)f(\tau) = \langle s_{h}'(\cdot), \mathcal{F}_{h}^{-1}\theta_\epsilon(\tau-\cdot)f(\cdot)\rangle + \mathcal{O}(h^{\nu+1-n}\zeta(h)^{-2}).
\end{equation}
By Theorem \ref{ASW2},   the first term of  the right hand side of the above equality has an asymptotic expansion in powers of $h$. Now, 
since $\nu$ is arbitrary, {the asymptotic expansion \eqref{main semiclassical asymptotics}} follows from   \eqref{end} {by choosing $f$ equal to $1$ near $\tau_0$}. This ends the proof  of Theorem \ref{main semiclassical}  under the assumption $V-V_\infty\in C^\infty_0({\mathbb R}^n; \mathcal{H}_N)$.

\vspace*{-0.7cm}

\begin{flushright}
$\square$
\end{flushright}

{\bf Remark 4.1.}
Notice that, except for Lemma  \ref{ext an}, all the steps of the proof of Theorem \ref{main semiclassical}  remain valid under the assumptions \eqref{DA1}  and  \eqref{deff} with $\mu>n$.
We will now show how to dispense with the assumption on the support of $V$ in Lemma \ref{ext an}. According to Proposition 4.2 in  \cite{MRS}, if $V$ satisfies {\eqref{DA1}}, then  for any $\varkappa>0$ and  $ \tilde{\mu}\in ]0, \mu[$, we can construct $V_\varkappa$
such that $V_\varkappa$ can be extended into a holomorphic function of $r=\vert x\vert$ in the sector  $\Sigma(2\varkappa)=\{\Re r\geq 1;  \vert \Im r\vert< 2\varkappa  \Re r\}$, and, for any 
multi-index  $\alpha$, it  satisfies 
\begin{equation}
\label{eq:2.3710}
\Vert \partial_x^\alpha(V_\varkappa(x)-V(x))\Vert_{N\times N}={\mathcal O} (\langle x\rangle^{-\tilde{\mu}-\vert \alpha\vert} \varkappa^\infty).
\end{equation}

As in \cite{MRS}, we fix $\varkappa=h^{s}$ with $s\in]0,1[$. We denote by $K_{\varkappa}(z;h)$ the right hand side of \eqref{For2}
when we replace $V$ by $V_\varkappa$ in $P_1(h)$.  The operator
$P_1(h)=-h^2\Delta+V_\varkappa$ can  be distorded analytically into $\tilde P_1(h)=U_\nu P_1(h)(U_\nu)^{-1}$  (see \cite{MRS}).   Now  the proof of Lemma \ref{ext an}  shows
that  \eqref{sk1} and  \eqref{equuu}    hold  for $K_{\varkappa}(z;h)$. On the other hand,  using the resolvent identity and   we   show that
$$K_{\varkappa}(z;h)-K(z;h)={\mathcal  O}(\varkappa^\infty)={\mathcal  O}(h^\infty), \;\;\; \text{ uniformly on }  \Gamma_0.$$
Consequently, Lemma \ref{ext an}  remains true under the assumptions \eqref{DA1} and \eqref{deff}.



\appendix

\section{Microhyperbolic functions}\label{fonctions microhyperboliques}

{In this section, we prove some technical lemmas on the notion of microhyperbolicity used in our proofs.}


\begin{lemma}\label{A1}
Let $H\in C^{\infty}(\mathbb R^{2n};\mathcal{H}_N)$. The following statements are equivalents 
\begin{itemize}
\item[(1)] $H$ is microhyperbolic at $\rho_0\in \mathbb R^{2n}$ in the direction $T\in \mathbb R^{2n}$.
\item[(2)] $\langle T, \nabla_{\rho} H(\rho_0)\rangle_{|\ker H(\rho_0)}$ is strictly positive in the sense of hermitian matrices, i.e. there exists $C>0$ such that
\begin{equation}\label{eto}
\big( \langle T, \nabla_{\rho} H(\rho_0) \rangle w, w \big) \geq C |w|^2, \quad \forall w\in \ker H(\rho_0).
\end{equation}
\end{itemize}
\end{lemma}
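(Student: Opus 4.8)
The plan is to establish the two implications separately. The implication $(1)\Rightarrow(2)$ is immediate: one evaluates the microhyperbolicity inequality \eqref{MHC} at the point $\rho=\rho_0$ itself and restricts to $w\in\ker H(\rho_0)$, so that the term $C_1|H(\rho_0)w|^2$ vanishes and \eqref{eto} follows with $C=C_0$. For $(2)\Rightarrow(1)$ I would proceed in two steps, first proving the desired estimate at the single point $\rho_0$ and then propagating it to a neighborhood.

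For the one-point estimate, write $A:=\langle T,\nabla_\rho H(\rho_0)\rangle$ and $B:=H(\rho_0)$, both hermitian $N\times N$ matrices. Since $B$ is hermitian one has the orthogonal decomposition $\mathbb C^N=\ker B\oplus\mathrm{Ran}\,B$, and $B$ is invertible on $\mathrm{Ran}\,B$, so there is $c_0>0$ with $|Bv|\ge c_0|v|$ for all $v\in\mathrm{Ran}\,B$ ($c_0$ being the smallest nonzero singular value of $B$). Given an arbitrary $w=w'+w''$ with $w'\in\ker B$ and $w''\in\mathrm{Ran}\,B$, I would expand $(Aw,w)=(Aw',w')+2\,\Re(Aw',w'')+(Aw'',w'')$, bound $(Aw',w')\ge C|w'|^2$ by \eqref{eto}, control the cross term by Young's inequality in the form $2|\Re(Aw',w'')|\le\frac{C}{2}|w'|^2+\frac{2}{C}\|A\|_{N\times N}^2|w''|^2$, and then substitute $|w'|^2=|w|^2-|w''|^2$ and $|w''|\le c_0^{-1}|Bw|$. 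This yields $(Aw,w)\ge C_0|w|^2-C_1|Bw|^2$ on all of $\mathbb C^N$ for suitable $C_0,C_1>0$; note that the size of $C_1$ is irrelevant, since Definition \ref{de1} only asks for the existence of \emph{some} positive constant.

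For the propagation, I would use a routine perturbation argument. Since $\rho\mapsto\langle T,\nabla_\rho H(\rho)\rangle$ and $\rho\mapsto H(\rho)$ are continuous, on a sufficiently small ball $|\rho-\rho_0|\le 1/C_2$ the norms $\|\langle T,\nabla_\rho H(\rho)\rangle-A\|_{N\times N}$ and $\|H(\rho)-B\|_{N\times N}$ are as small as desired. Writing $(\langle T,\nabla_\rho H(\rho)\rangle w,w)=(Aw,w)+\mathcal{O}\big(\|\langle T,\nabla_\rho H(\rho)\rangle-A\|_{N\times N}\big)|w|^2$ and $|Bw|^2\le 2|H(\rho)w|^2+2\|H(\rho)-B\|_{N\times N}^2|w|^2$, and absorbing these small errors into the leading term from the previous step (at the cost of halving $C_0$ and doubling $C_1$), I would obtain \eqref{MHC} for every $\rho$ with $|\rho-\rho_0|\le 1/C_2$ and large enough $C_2$, which is exactly statement (1).

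The main obstacle, such as it is, lies entirely in the one-point estimate: one must correctly extract the spectral lower bound of $B$ on its range and absorb the cross term $2\,\Re(Aw',w'')$. The propagation step and the implication $(1)\Rightarrow(2)$ are then soft.
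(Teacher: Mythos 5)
Your proof is correct and follows essentially the same route as the paper: the trivial implication $(1)\Rightarrow(2)$, then the orthogonal splitting $\mathbb C^N=\ker H(\rho_0)\oplus\ker H(\rho_0)^\perp$, Young's inequality to absorb the cross term, and the lower bound for $H(\rho_0)$ on the orthogonal complement. The one point where you are slightly more careful than the paper is in spelling out the final continuity argument that extends the one-point estimate to a full ball $|\rho-\rho_0|\le 1/C_2$, a step the paper leaves implicit.
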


\begin{proof}
Obviously (1) implies (2). 

Assume that (2) is satisfied and let us prove (1). Let $w=w_1+w_2\in \mathbb C^N $, with $w_1\in \ker H(\rho_0) $ and $w_2\in \ker H(\rho_0)^{\bot}$. We have :
\begin{eqnarray*}
\big( \langle T, \nabla_{\rho} H(\rho_0) \rangle w, w \big) &=& \big( \langle T, \nabla_{\rho} H(\rho_0) \rangle w_1, w_1\big) + \big( \langle T, \nabla_{\rho} H(\rho_0) \rangle w_2, w_2 \big) \\ 
&+& \sum_{i\neq j=1}^2 \big( \langle T, \nabla_{\rho} H(\rho_0) \rangle w_i, w_j \big) =: I_1 + I_2 +I_3.
\end{eqnarray*}
By hypothesis, $I_1$ satisfies  
\begin{equation}\label{er1}
|I_1| \geq C |w_1|^2.
\end{equation}
On the other hand, we have 
\begin{equation}\label{er2}
|I_2| \leq C' |w_2|^2 \quad \text{and} \quad |I_3| \leq C'' |w_1| |w_2| < \varepsilon C'' |w_1|^2 + \frac{C''}{\varepsilon} |w_2|^2,
\end{equation}
for $\varepsilon>0$ small enough and $C',C''>0$. Putting together (\ref{er1}), (\ref{er2}), we obtain 
\begin{equation}\label{unq}
(\langle T, \nabla_{\rho} H(\rho_0)  \rangle w,w) \geq \frac{C}{2} |w|^2 - \mathcal{O}\left(\frac{1}{\varepsilon}\right) |w_2|^2.
\end{equation}
Now, the fact that $H(\rho_0) : \ker H(\rho_0)^{\bot} \rightarrow \ker H(\rho_0)^{\bot}$ is bijective, implies
$$
|H(\rho_0)w_2| \geq \tilde{C} |w_2|, \quad \forall w_2 \in \ker H(\rho_0)^{\bot}.
$$
Combining this with (\ref{unq}), we get 
$$
( \langle  T, \nabla_{\rho} H(\rho_0)  \rangle w, w  ) \geq \frac{C}{2}|w|^2 - \mathcal{O}\left(\frac{1}{\varepsilon}\right) |H(\rho_0)w_2|^2,
$$
which together with the fact that $H(\rho_0)w_2=H(\rho_0)w$ implies (\ref{eto}).
\end{proof}

\begin{lemma}\label{A2}
Let $F \in C^{\infty}(\mathbb R^{2n}; \mathcal{H}_{N-r})$ and $m(0)\in \mathcal{H}_r$ invertible, $r\geq 1$. Assume that for $\rho_0 \in \mathbb R^{2n}$, there exists $T\in \mathbb R^{2n}$ and $C_0>0$ such that 
\begin{equation}\label{rt}
\big( \langle T, \nabla_{\rho} F(\rho_0) \rangle w ,w  \big) \geq C_0 |w|^2, \quad \forall w \in \mathbb C^{N-r}.
\end{equation}
Then
\begin{itemize}
\item[(1)] $H(\rho)=\mleft(
\begin{array}{cc}
  F(\rho) & 0 \\
  0 & m(0)
\end{array}
\mright)$ is microhyperbolic at $\rho_0$ in the direction $T$. 
\item[(2)] If (\ref{rt}) holds at $\rho_0=0$ and $M\in C^{\infty}(\mathbb R^{2n},\mathcal{H}_N)$ with $M(\rho)= \begin{pmatrix}
\mathcal{O}(|\rho|^2) & \mathcal{O}(|\rho|) \\
\mathcal{O}(|\rho|) & \mathcal{O}(|\rho|)
\end{pmatrix}$, then $H+M$ is microhyperbolic near $\rho_0=0$ in the direction $T$.
\end{itemize}
\end{lemma}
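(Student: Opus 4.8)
The plan is to deduce both statements from the pointwise criterion in Lemma \ref{A1}: it is enough to check that $\langle T,\nabla_\rho H(\rho_0)\rangle$ (resp.\ $\langle T,\nabla_\rho(H+M)(0)\rangle$) is strictly positive \emph{on the kernel} of $H(\rho_0)$ (resp.\ of $(H+M)(0)$). The structural input is that the invertible block forces this kernel to sit entirely in the first $N-r$ coordinates, so the test vectors only ``see'' the $F$-part of the gradient.

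For part (1), I would first note that since $m(0)$ is a constant invertible matrix one has $\ker H(\rho_0)=\ker F(\rho_0)\oplus\{0\}$ and $\nabla_\rho H(\rho_0)=\mathrm{diag}\,(\nabla_\rho F(\rho_0),0)$. Hence for $w=(w_1,0)\in\ker H(\rho_0)$,
\[\big(\langle T,\nabla_\rho H(\rho_0)\rangle w,w\big)=\big(\langle T,\nabla_\rho F(\rho_0)\rangle w_1,w_1\big)\ge C_0|w_1|^2=C_0|w|^2\]
by \eqref{rt}, and Lemma \ref{A1} gives microhyperbolicity of $H$ at $\rho_0$ in the direction $T$.

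For part (2), observe first that $M(0)=0$, since every block of $M(\rho)$ vanishes at $\rho=0$; therefore $(H+M)(0)=\mathrm{diag}\,(F(0),m(0))$ and $\ker(H+M)(0)=\ker F(0)\oplus\{0\}$ exactly as before. Writing $\langle T,\nabla_\rho(H+M)(0)\rangle$ in $(N-r,r)$ block form, its $(1,1)$-block equals $\langle T,\nabla_\rho F(0)\rangle+\langle T,\nabla_\rho M_{11}(0)\rangle$, and $\nabla_\rho M_{11}(0)=0$ because $M_{11}(\rho)=\mathcal O(|\rho|^2)$. Then for a test vector $w=(w_1,0)$ with $w_1\in\ker F(0)$ only this $(1,1)$-block contributes, so $\big(\langle T,\nabla_\rho(H+M)(0)\rangle w,w\big)=\big(\langle T,\nabla_\rho F(0)\rangle w_1,w_1\big)\ge C_0|w|^2$. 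Lemma \ref{A1} then yields microhyperbolicity of $H+M$ at $\rho_0=0$ in the direction $T$; and since microhyperbolicity at a point already comes with a full neighborhood in the sense of Definition \ref{de1}, this is exactly the claimed statement.

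There is no serious obstacle here; the only point needing care is that the off-diagonal and $(2,2)$-blocks of $\nabla_\rho M(0)$ — which may be nonzero — do not interfere, because the admissible test vectors lie in $\ker F(0)\oplus\{0\}$ and are killed by those blocks upon pairing against themselves. The hypothesis $M_{11}=\mathcal O(|\rho|^2)$ (rather than merely $\mathcal O(|\rho|)$) is precisely what is used to annihilate the one term that would otherwise survive, so I would flag this as the place where the stated decay assumption is essential.
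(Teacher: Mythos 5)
Your proof is correct and follows the same route as the paper: both parts reduce to the kernel criterion of Lemma \ref{A1}, using the block structure to see that $\ker(H(\rho_0))$ (resp.\ $\ker(H+M)(0)$) sits in $\mathbb C^{N-r}\times\{0\}$, and using the $\mathcal O(|\rho|^2)$ decay of $M_{11}$ to kill $\langle T,\nabla_\rho M_{11}(0)\rangle$. You are a bit more explicit than the paper in pointing out that this is precisely where the $\mathcal O(|\rho|^2)$ hypothesis enters, and in noting that microhyperbolicity at a point already gives it on a neighborhood, but the argument is the same.
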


\begin{proof}
Since $\langle T, \nabla_{\rho} H(\rho_0) \rangle = \begin{pmatrix}
\langle T, \nabla_{\rho} F(\rho_0) \rangle & 0 \\
0 & 0  
\end{pmatrix}$ and $\ker H(\rho_0) \subset \mathbb C^{N-r} \times {\{0_{r}\}}$, (1) follows immediately from Lemma \ref{A1}.

We have 
$$
\langle T, \nabla_{\rho} H(0)  \rangle + \langle T, \nabla_{\rho} M(0)  \rangle = \begin{pmatrix}
\langle T, \nabla_{\rho} F(0) \rangle & \mathcal{O}(1) \\
\mathcal{O}(1) & \mathcal{O}(1) 
\end{pmatrix}.
$$ 
Therefore
$$
\left(\langle  T, \nabla_{\rho} H(0) \rangle w ,w \right) + \left( \langle T, \nabla_{\rho} M(0) \rangle w ,w \right) = (\langle T, \nabla_{\rho} F(0)  \rangle w ,w) \geq C_0 |w|^2, \quad \forall w \in \mathbb C^{N-r}.
$$
Since $\ker \left( H(0) + M(0) \right) = \ker \left( H(0) \right)\subset \mathbb C^{N-r}$, it follows from lemma \ref{A1} that $H+ M$ is microhyperbolic at $\rho_0=0$ in the direction $T$. Then, $H+ M$ is microhyperbolic near $0$ in the direction $T$.
\end{proof}

The main result of this appendix is the following.
\begin{theorem}\label{main result appendix}
Let $H\in C^{\infty}(\mathbb R^{2n};\mathcal{H}_N)$. 
Assume that $H$ is microhyperbolic near $\rho_0\in \mathbb R^{2n}$ in the direction $T$.  
 There exists $\tilde{H}\in C^{\infty}(\mathbb R^{2n};\mathcal{H}_N)$ such that $\tilde{H}=H$ near $\rho_0$ and $\tilde{H}$ is uniformly microhyperbolic on $\mathbb R^{2n}$ in the direction $T$. Moreover, we can choose $\tilde{H}$ bounded together with all its derivatives, i.e. $\tilde{H}\in S^0(\mathbb R^{2n};\mathcal{H}_N)$.
\end{theorem}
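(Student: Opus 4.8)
The idea is to reduce $H$ to a normal form near $\rho_0$ and then extend it globally by \emph{precomposition with a diffeomorphism of $\mathbb R^{2n}$ onto a small neighbourhood of $\rho_0$}; boundedness is then automatic, equality with $H$ near $\rho_0$ comes from the diffeomorphism being the identity there, and uniform microhyperbolicity far from $\rho_0$ is salvaged from uniform invertibility of $H$ rather than from a derivative bound.

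First I would make three harmless reductions. A translation puts $\rho_0=0$. A linear change of coordinates on $\mathbb R^{2n}$ sending the first coordinate vector to $T$ (this preserves $S^0$ and turns microhyperbolicity in direction $T$ into microhyperbolicity in direction $e_1$, since $\langle T,\nabla_\rho\rangle$ becomes $\partial_{\rho_1}$; undoing it at the end is harmless) lets us assume $T=e_1$. Conjugating $H$ by a constant unitary matrix (which preserves $\mathcal H_N$-valuedness, smoothness and microhyperbolicity in direction $e_1$), we may assume $H(0)=\operatorname{diag}(0_k,m_0)$ with $k=\dim\ker H(0)$ and $m_0\in\mathcal H_{N-k}$ invertible. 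If $k=0$ then $H$ is uniformly invertible near $0$, and $\tilde H:=H(0)+\chi(H-H(0))$ with $\chi$ supported so close to $0$ that $\|H-H(0)\|<\tfrac12\|H(0)^{-1}\|^{-1}$ there is invertible on all of $\mathbb R^{2n}$, hence uniformly microhyperbolic in any direction (for $\tilde H$ invertible, $(\partial_{\rho_1}\tilde Hw,w)\ge -C|w|^2\ge C_0|w|^2-C_1|\tilde Hw|^2$ with $C_1$ large); so assume $k\ge1$. Write $H=\left(\begin{smallmatrix}a&b\\ b^{*}&c\end{smallmatrix}\right)$ near $0$ with $a$ valued in $\mathcal H_k$, $a(0)=0$, $b(0)=0$, $c(0)=m_0$. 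By Lemma \ref{A1} at $0$, microhyperbolicity in direction $e_1$ means $\partial_{\rho_1}H(0)$ is positive on $\ker H(0)=\mathbb C^k\times\{0\}$, i.e.\ $\partial_{\rho_1}a(0)\ge C_0 I_k$; by continuity $\partial_{\rho_1}a(\rho)\ge\tfrac{C_0}{2}I_k$ on a small box $W_0=\{|\rho_1|<3\delta,\ |\rho'|<3\delta\}$ inside the microhyperbolicity neighbourhood of $H$, where $\rho=(\rho_1,\rho')$.

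Integrating this matrix inequality in $\rho_1$ and using $a(0,\cdot)=\mathcal O(|\rho'|)$ gives $a(\rho_1,\rho')\ge\tfrac{C_0\delta}{2}I_k$ for $\rho_1\in[2\delta,3\delta)$ and $a(\rho_1,\rho')\le -\tfrac{C_0\delta}{2}I_k$ for $\rho_1\in(-3\delta,-2\delta]$, provided $|\rho'|<\delta''$ for a small $\delta''=\delta''(\delta)$. Since $c\approx m_0$ is invertible and $b$ is $\mathcal O(\delta)$ on $W_0$, the Schur complement $a-bc^{-1}b^{*}$ (resp.\ $-a-bc^{-1}b^{*}$) is $\ge\tfrac{C_0\delta}{4}I_k$ once $\delta$ is small, so $H(\rho)$ is uniformly invertible, say $H(\rho)^2\ge\kappa I_N$ with $\kappa=\kappa(\delta)>0$, on $\{2\delta\le|\rho_1|<3\delta,\ |\rho'|<\delta''\}$. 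Now pick a \emph{product} diffeomorphism $\Psi=(\psi,\phi):\mathbb R^{2n}\to(-3\delta,3\delta)\times\{|\rho'|<\delta''\}\subset W_0$ with $\psi:\mathbb R\to(-3\delta,3\delta)$ increasing, $\phi:\mathbb R^{2n-1}\to\{|\rho'|<\delta''\}$, both equal to the identity near $0$ and with all derivatives bounded, and set $R_0:=\psi^{-1}(2\delta)$, $c_\psi:=\min_{[-R_0,R_0]}\psi'>0$. Define $\tilde H:=H\circ\Psi$ on $\mathbb R^{2n}$. Then $\tilde H\in S^0(\mathbb R^{2n};\mathcal H_N)$ (because $H$ is bounded with bounded derivatives on $\overline{W_0}$, which is compact, and $\Psi$ has bounded derivatives), and $\tilde H=H$ near $0$ (where $\Psi=\mathrm{id}$). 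To check uniform microhyperbolicity in direction $e_1$ I would verify the equivalent inequality $\partial_{\rho_1}\tilde H(\rho)+C_1'\tilde H(\rho)^2\ge C_0'I_N$ on $\mathbb R^{2n}$: since $\Psi$ is a product, $\partial_{\rho_1}\tilde H(\rho)=\psi'(\rho_1)(\partial_{\rho_1}H)(\Psi\rho)$ with no term involving $\partial_{\rho_1}\phi$. On $\{|\psi(\rho_1)|\ge2\delta\}$ use $H(\Psi\rho)^2\ge\kappa I_N$ and take $C_1'$ large; on $\{|\psi(\rho_1)|\le2\delta\}=\{|\rho_1|\le R_0\}$ write
$$\psi'(\rho_1)(\partial_{\rho_1}H)(\Psi\rho)+C_1'H(\Psi\rho)^2=\psi'(\rho_1)\big[(\partial_{\rho_1}H+C_1H^2)(\Psi\rho)\big]+\big(C_1'-\psi'(\rho_1)C_1\big)H(\Psi\rho)^2\ge c_\psi C_0\,I_N,$$
using microhyperbolicity of $H$ at $\Psi\rho\in W_0$ and $C_1'$ large enough that $C_1'\ge C_1\max_{[-R_0,R_0]}\psi'$. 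Taking $C_0':=c_\psi C_0$ and undoing the coordinate changes finishes the proof.

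\textbf{Main obstacle.} The real difficulty is the built-in tension: any $\tilde H$ equal to $H$ near $\rho_0$ must satisfy $\langle T,\nabla\tilde H\rangle\ge C_0 I_N$ on $\ker\tilde H$ there, so $\tilde H$ has to be strictly increasing along $T$ near $\rho_0$, yet it must remain bounded, which is impossible for a function whose $T$-derivative is everywhere $\ge C_0 I_N$. The reparametrization trick resolves this because the range of $\tilde H=H\circ\Psi$ lies in the compact set $H(\overline{W_0})$, the strict positivity near $\rho_0$ is inherited through $\Psi=\mathrm{id}$, and far from $\rho_0$ microhyperbolicity is recovered not from a derivative lower bound but from uniform invertibility of $H$ near $\partial W_0$ — and this invertibility is precisely where the block-diagonal normal form, the monotonicity of the kernel block $a$ in the $T$-direction, and the smallness of $\delta$ are used. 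The secondary point that must not be missed is that $\Psi$ be taken as a product $(\psi(\rho_1),\phi(\rho'))$, so that $\partial_{\rho_1}$ never hits $\phi$ and no spurious cross term $(\partial_{\rho_1}\phi)\cdot(\cdots)$ appears in $\partial_{\rho_1}\tilde H$.
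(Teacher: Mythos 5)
Your proof is correct, and it takes a genuinely different route from the paper's. The paper (i) puts $H(0)$ in block-diagonal form, (ii) replaces $H$ far from $\rho_0$ by the "model" $H_0(\rho)=\bigl(\begin{smallmatrix}\nabla m_{11}(0)\cdot\rho&0\\0&m_{22}\end{smallmatrix}\bigr)$ obtained by linearizing the kernel block and freezing the invertible block, (iii) interpolates $H_\delta=\chi_\delta(H-H_0)+H_0$ and invokes Lemma \ref{A2} to get global microhyperbolicity of $H_\delta$, and (iv) finally caps $H_\delta$ by a bounded scalar function $f$ (with $f(t)=t$ near $0$, $f(t)\ne 0$ for $t\ne 0$) to land in $S^0$, which requires the extra observation that $\ker f(A)=\ker A$ and that the $T$-derivative of $f(H_\delta)$ restricted to that kernel agrees with that of $H_\delta$. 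You avoid the linear model and the bounding step entirely: precomposing with a product diffeomorphism $\Psi$ onto a small neighbourhood of $\rho_0$ makes the range of $\tilde H=H\circ\Psi$ live in a compact set, so $S^0$ is automatic, and the microhyperbolicity inequality splits cleanly into the core region (where $\psi'>0$ and $\psi'(\partial_{\rho_1}H+C_1H^2)\ge c_\psi C_0 I_N$) and the outer region (where you replace the derivative bound by uniform invertibility of $H$, extracted via the monotonicity of the kernel block $a$ in $\rho_1$ and a Schur-complement argument). What your approach buys is a single object and a single estimate, no recourse to Lemma \ref{A2}(2) and no verification that functional calculus preserves microhyperbolicity; what the paper's approach buys is modularity (Lemma \ref{A2} is reusable) and an explicit model $H_0$ that is globally microhyperbolic for free. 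Your observation that the whole construction hinges on the relaxation term $-C_1|Hw|^2$ away from $\ker H$, and that a product diffeomorphism is needed so that $\partial_{\rho_1}$ does not produce a cross term with $\phi$, are exactly the right points to highlight; the only small imprecision is that $\{|\psi(\rho_1)|\le2\delta\}$ should be written as the compact interval $[\psi^{-1}(-2\delta),\psi^{-1}(2\delta)]$ (or you should take $\psi$ odd), which does not affect the argument.
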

\begin{proof}
Without any loss of generality, we may assume that $\rho_0=0$. We know that there exists $P$ such that 
\begin{equation}\label{ute}
P H(0) P^{-1} = \begin{pmatrix}
0 & 0 \\
0 & m_{22}
\end{pmatrix},
\end{equation}
where $m_{22}$ is a diagonal and invertible matrix. Replacing $H(\rho)$ by $P H(\rho) P^{-1}$, we may assume that 
$$
H(\rho) = \begin{pmatrix}
m_{11}(\rho) & m_{21}(\rho) \\
m_{12}(\rho) & m_{22}(\rho)
\end{pmatrix}
$$
with $m_{11}(0)=0$, $m_{12}(0)=0$, $m_{21}(0)=0$ and $m_{22}(0)=m_{22}$. Since $H$ is microhyperbolic at $0$ in the direction $T$, it follows from Lemma \ref{A1} that 
\begin{equation}\label{gop}
\big\langle   \begin{pmatrix}
\langle T, \nabla_{\rho} m_{11}(0) \rangle & \langle T, \nabla_{\rho} m_{21}(0) \rangle \\
\langle T, \nabla_{\rho} m_{12}(0) \rangle &  \langle T, \nabla_{\rho} m_{22} \rangle
\end{pmatrix}  \left( \begin{array}{c}
w_1 \\
0 \\
\end{array} \right) , \left( \begin{array}{c}
w_1 \\
0 \\
\end{array} \right) \big\rangle = \big( \langle  T, \nabla_{\rho} m_{11}(0) \rangle w_1, w_1 \big) \geq C |w_1|^2
\end{equation}
We recall that $\ker\left( H(0) \right) \subset \{(w_1,0);\; w_1\in \mathbb C^{N-r}\}$, with $r= \dim  {\rm Im} \left( m_{22}\right)$ (due to (\ref{ute})).

Set
$$
H_0({\rho}) := \begin{pmatrix} {\nabla_{\rho}} m_{11}(0) \rho & 0 \\
0 & m_{22}
\end{pmatrix}, \quad \rho \in \mathbb R^{2n}.
$$
It follows from Lemma \ref{A2} and (\ref{gop}) that $H_0$ is microhyperbolic at every point $\rho \in \mathbb R^{2n}$ in the direction $T$. Let $\chi \in C_0^{\infty}(\mathbb R^{2n};\mathbb R)$ be such that $\chi(\rho)=1$ for $|\rho|\leq 1$ and $\chi(\rho)=0$ for $|\rho|\geq 2$. For $\delta>0$, set $\chi_{\delta}(\rho)= \chi\left( \frac{\rho}{\delta}  \right)$. We define 
$$
H_{\delta}({\rho}) = \chi\big( \frac{\rho}{\delta}  \big) \big( H(\rho) - H_0(\rho) \big) + H_0(\rho).
$$
We claim that for $\delta$ small enough, $H_{\delta}$ is microhyperbolic at every point $\rho \in \mathbb R^{2n}$ in the direction $T$. In fact, for $|\rho|\leq \delta$, $H_{\delta}(\rho) = H(\rho)$ is microhyperbolic at $\rho_0=0$ and then at every $\rho\in \mathbb R^{2n}$ with $|\rho| \leq \delta$. For $|\rho|\geq 2\delta$, $H_{\delta}(\rho) = H_0(\rho)$ which is microhyperbolic at every point $\rho\in \mathbb R^{2n}$ in the direction $T$. For $\delta< |\rho| < 2\delta$, we have 
$$
H_{\delta}(\rho)= H_0(\rho) + \begin{pmatrix}
\mathcal{O}(|\rho|^2) & \mathcal{O}(|\rho|) \\
\mathcal{O}(|\rho|) & \mathcal{O}(|\rho|)
\end{pmatrix}.
$$
Thus, Lemma \ref{A2} implies that $H_{\delta}$ is microhyperbolic in the direction $T$ for $\delta$ small enough. Consequently $H_{\delta}$ is microhyperbolic at every point $\rho \in \mathbb R^{2n}$ in the direction $T$.   To see that we can  choose $\tilde H\in  S^0(\mathbb R^{2n};\mathcal{H}_N)$, let $f\in C^\infty(\mathbb R)$ such that $f(t)=t$ for $\vert t\vert <1$, $\vert f(t)\vert \geq 1$  on $\vert t\vert \geq 1$ and $f(t)$ is constant at $\pm\infty$. Put $\tilde H(x)=f(H_\delta(x))$. By the functional calculus of self-adjoint operator, it is  easy to check that  $\tilde H$ satisfies the desired properties.

\end{proof}

\begin{proposition}\label{Hyperb}
Let $H\in C^{\infty}(\mathbb R^{2n};\mathcal{H}_N)$, $\chi\in C^\infty_0({\mathbb R}^{2n})$ and $\tau_0\in\mathbb R$. Assume that $\tau_0-H(\rho)$ is microhyperbolic 
at  every $\rho\in {\rm supp}\chi$. Let $G(\rho,z)$ be an $N\times N$ {matrix-valued function} (not necessarily Hermitian) {smooth} with respect to $\rho$ and holomorphic with respect to $z$ in a neighborhood of $\tau_0$. Set, for $\pm\Im z>0$ respectively,
$$F_\pm(z)=\int_{{\mathbb R}^{2n}} (z-H(\rho))^{-1} G(\rho,z) (z-H(\rho))^{-1} \chi(\rho) d\rho.$$
Then, for real $\tau$ near $\tau_0$, the limit   $F_\pm(\tau\pm i0):=\lim_{\varepsilon\searrow 0} F_\pm(\tau\pm i\varepsilon)$ exists and $\tau \rightarrow F_\pm(\tau\pm i0)$ 
is smooth near $\tau_0$.
\end{proposition}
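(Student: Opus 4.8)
The plan is to show that $F_\pm$ — holomorphic on $\{\pm\Im z>0\}$ near $\tau_0$, since $\chi$ has compact support and $(z-H(\rho))^{-1},\,G(\rho,z)$ are holomorphic there — extends to a $C^\infty$ function across the real axis near $\tau_0$; its restriction to $\mathbb R$ is then the claimed boundary value $F_\pm(\tau\pm i0)$. By Definition \ref{de1} and compactness of ${\rm supp}\,\chi$, a finite partition of unity reduces us to the case where $\chi$ is supported in a small ball $B$ on which $\tau_0-H$ — hence $z-H$ for $z$ in a small complex neighbourhood $\Omega$ of $\tau_0$ — is uniformly microhyperbolic in a fixed direction $T\in\mathbb R^{2n}$ (at points of $B$ with $\tau_0\notin\sigma(H(\rho))$ this is automatic). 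I would then fix almost analytic extensions $\tilde H$ of $H$ (bounded together with all derivatives, $\bar\partial_\zeta\tilde H=\mathcal O(|\Im\zeta|^\infty)$), $\tilde G(\cdot,z)$ of $G(\cdot,z)$, and $\tilde\chi$ of $\chi$ with ${\rm supp}\,\tilde\chi\subset B$, set $\Phi(\zeta,z):=(z-\tilde H(\zeta))^{-1}\tilde G(\zeta,z)(z-\tilde H(\zeta))^{-1}\tilde\chi(\zeta)$, and — borrowing the device from the proof of Theorem \ref{th1} — deform, for $\Im z>0$, the contour in $\int_{\mathbb R^{2n}}\Phi(\rho,z)\,d\rho$ from $\mathbb R^{2n}$ to $\mathbb R^{2n}+is_0T$ for a fixed small $s_0>0$ (and, for $F_-$, to $\mathbb R^{2n}-is_0T$; that case is the mirror image, so I only describe $F_+$).

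The heart of the matter is the symbol-level counterpart of \eqref{Garding11}: there exist $c,C,s_0>0$ such that for $0<t'\le s_0$, $\rho\in B$ and $z\in\Omega$ with $\Im z\ge-ct'$, the matrix $z-\tilde H(\rho+it'T)$ is invertible with $\|(z-\tilde H(\rho+it'T))^{-1}\|\le C/t'$. Indeed, Taylor expansion in the imaginary direction gives $z-\tilde H(\rho+it'T)=(z-H(\rho))-it'\langle T,\nabla_\rho H(\rho)\rangle+\mathcal O((t')^2)$; setting $M:=z-\tilde H(\rho+it'T)$, using that $\langle T,\nabla_\rho H(\rho)\rangle$ is Hermitian, the microhyperbolicity inequality \eqref{MHC} for $\tau_0-H$, and $(\tau_0-H(\rho))w=(\tau_0-z)w+Mw+\mathcal O(t')w$, one obtains, for $|z-\tau_0|$ and $s_0$ small,
$$(\Im M\,w,w)+C't'\|Mw\|^2\ \ge\ \big(\Im z+\tfrac12 t'C_0\big)|w|^2,\qquad w\in\mathbb C^N,$$
and completing the square exactly as in the passage from \eqref{Garding11} to \eqref{Est} gives $\|Mw\|\ge c\,t'|w|$ whenever $\Im z\ge-ct'$; in finite dimension this yields the invertibility and the bound. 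This is the only place where microhyperbolicity enters, and I expect it to be the principal difficulty; the rest is routine.

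With this bound, $\Phi$ is smooth on the tube $\{\rho+it'T:\rho\in\mathbb R^{2n},\,0\le t'\le s_0\}$ for $\Im z>0$ (using also $\|(z-H(\rho))^{-1}\|\le1/|\Im z|$ at $t'=0$), and Stokes' formula — equivalently, integrating $\partial_{t'}$ and noting that $\partial_{t'}[\Phi(\rho+it'T,z)]$ equals a total $\rho$-derivative plus $-2i\langle T,\bar\partial_\zeta\Phi\rangle$, the former having zero integral over $\mathbb R^{2n}$ by the compact support of $\tilde\chi$ — gives
$$F_+(z)=\int_{\mathbb R^{2n}}\Phi(\rho+is_0T,z)\,d\rho+R_+(z),\qquad \Im z>0,$$
where $R_+(z)$ is an integral over the tube of $\langle T,\bar\partial_\zeta\Phi\rangle$. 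By the key bound with $t'=s_0$, the first term is $C^\infty$ in $z$ on the open set $\{z\in\Omega:\Im z>-cs_0\}$ (differentiation under the integral sign is legitimate: smooth integrand, compact $\rho$-support, bounded derivatives), hence $C^\infty$ across the real axis near $\tau_0$. In $R_+$, each term of $\bar\partial_\zeta\Phi$ carries a factor $\bar\partial_\zeta\tilde H$, $\bar\partial_\zeta\tilde G$ or $\bar\partial_\zeta\tilde\chi$, all $\mathcal O((t')^\infty)$, against at most $\|(z-\tilde H)^{-1}\|^2=\mathcal O((t')^{-2})$, so the tube-integrand is $\mathcal O((t')^N)$ for every $N$, uniformly for $z\in\Omega$ with $\Im z\ge0$; dominated convergence and differentiation under the integral sign then show that $R_+(\tau+i\varepsilon)$ converges as $\varepsilon\searrow0$ to a function of $\tau$ that is $C^\infty$ near $\tau_0$. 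Summing the two pieces, $F_+(\tau+i0)$ exists and is $C^\infty$ in $\tau$ near $\tau_0$; the case of $F_-$ is identical after deforming in the direction $-T$.
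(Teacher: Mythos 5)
Your proposal is correct and follows essentially the same route as the paper: a partition of unity localizes $\chi$; almost analytic extensions of $H$, $G$, $\chi$ are introduced; a Taylor expansion together with microhyperbolicity yields the symbol-level resolvent bound $\|(z-\tilde H(\rho+it'T))^{-1}\|=\mathcal O(1/t')$ for $\Im z\geq -ct'$; Stokes' formula deforms $\mathbb R^{2n}$ to $\mathbb R^{2n}+is_0T$; and the correction term is controlled by the $\mathcal O((t')^\infty)$ decay of the $\bar\partial$'s of the almost analytic extensions. The only cosmetic difference is that the paper applies Theorem \ref{main result appendix} to replace $H$ by a globally uniformly microhyperbolic symbol before deforming, while you keep the local microhyperbolicity on the ball $B\supset{\rm supp}\,\chi$ (which is enough because the integrand is supported there); the substance of the two arguments is the same.
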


\begin{proof}
We consider $F_+$. The proof for $F_-$ is similar.  Decomposing $\chi$ into a finite sum of functions $\chi_i$ with small support, we may assume  using  Theorem \ref {main result appendix} that $\tau -H(\rho)$ is microhyperbolic  in the direction $T$ at every point $\rho\in \mathbb R^{2n}$ and $\tau$ near $\tau_0$. We may also assume that $G, H\in S^0(\mathbb R^{2n};\mathcal{H}_N)$.
Let $\tilde{H},\tilde{G}$ and $\tilde \chi$ be three almost analytic extensions of $H$, $G$  and $\chi$ respectively, which are bounded together with all their derivatives. Put 
$$
\tilde{H}(\rho,t):= \tilde{H}(\rho+it T), \quad \tilde{G}(\rho,t,z):= \tilde{G}(\rho+it T,z), \quad \tilde{\chi}(\rho,t) := \tilde{\chi}(\rho+it T), \quad {t\in \mathbb R}.
$$
We assert that  for  small enough   $\Im z\geq 0, t\geq 0$  with $\Im z+t>0$, there exist $C,c>0$ such that 
\begin{equation}\label{Garding1}
\Im ((z-\tilde H(\rho,t))\omega, \omega )+C t
\vert (z-\tilde H(\rho,t))\omega\vert^2 \geq
c(t +\Im z)\vert \omega\vert^2,  \quad \forall \omega\in \mathbb C^N.
\end{equation}
In fact
$$((z-\tilde H(\rho,t))\omega,\omega)=((z- H(\rho))\omega,\omega)-it(\langle T, 
{\nabla_{\rho}} H(\rho) \rangle\omega,\omega)+{\mathcal O}(t^2) \vert \omega\vert^2$$
and hence  the global microhyperbolic condition (see \eqref{MHC}) yields, for some $c,C_1,C_2>0$,
$$ \Im ((z-\tilde H(\rho,t))\omega,\omega)\geq (\Im z+ct)\vert \omega \vert^2
-{\mathcal O}(t)\vert(\Re z-H(\rho))\omega\vert^2+{\mathcal O}(t^2) \vert \omega\vert^2$$
$$\geq c(\Im z+t-C_1(\Im z)^2-C_2t^2)\vert \omega \vert^2
-{\mathcal O}(t)\vert(z-\tilde H(\rho,t))\omega\vert^2,$$
uniformly on $\{z\in \mathbb C; \Re z\in ]\tau_0-\eta,\tau_0+\eta[,\,\Im z>0\}$ for small enough $\eta$, 
and   \eqref{Garding1}  follows from this inequality.

Applying Cauchy-Schwarz inequality to the first term of \eqref{Garding1}, we easily obtain 
\begin{equation}\label{A120}
\Vert z-\tilde H(\rho,t)\Vert_{N\times N}+Ct\Vert z-\tilde H(\rho,t)\Vert_{N\times N}^2\geq c (\Im z+t)\vert \omega\vert^2, \, \forall \omega\in \mathbb C^N.
\end{equation}
This shows that  $ (z-\tilde H(\rho,t))^{-1}$ exists  and
\begin{equation}\label{A34}
\Vert (z-\tilde H(\rho,t))^{-1}\Vert_{N\times N}={\mathcal O}(\frac{1}{t}).
\end{equation}
for {$t>0, \Im z \geq 0 $}.  

For simplicity, assume $T=(1,0,\cdots ,0)$. Put $\rho=(\rho_1,\rho')$ and  fix $t_0>0$. By the Stokes' formula, we have
$$F_+(z)=\int_{{\mathbb R}^{2n} }(z-\tilde H(\rho_1+it_0,\rho'))^{-1} \tilde G(\rho_1+it_0,\rho',z) (z-\tilde H(\rho_1+it_0,\rho'))^{-1}\tilde\chi(\rho_1+it_0,\rho') d\rho $$
$$-\iint_{{\mathbb R}^{2n}\times [0,t_0]} \frac{1}{2} (\partial_{\rho_1}+i\partial_t)\Big[(z-\tilde H(\rho,t))^{-1} \tilde G(\rho,z,t) (z-\tilde H(\rho,t))^{-1}\tilde\chi(\rho,t) \Big]  dt d\rho.$$
Clearly the first term of the right hand side of the above equality extends to a $C^\infty$ function up to $\Im z\ge 0$.
One sees that the same is true for the second term
by  using  \eqref{A34} and the fact that $(\partial_{\rho_1}+i\partial_t)\tilde H,   (\partial_{\rho_1}+i\partial_t) \tilde G,  (\partial_{\rho_1}+i\partial_t) \tilde \chi$ are all of
${\mathcal  O} (t^\infty)$. This ends the proof.
\end{proof}

\textbf{Acknowledgement.}  
This research was initiated when the first and second authors was visiting the Ritsumeikan University in   May  2016; the financial support and kind hospitality are gratefully acknowledged. The first author acknowledges support from JSPS KAKENHI Grant number JP16H03944. The third author was partially supported by the JSPS KAKENHI Grant number JP15K04971.


\end{document}